\newlength\figureheight 
    \newlength\figurewidth 
\newcommand\smallO{
  \mathchoice
    {{\scriptstyle\mathcal{O}}}
    {{\scriptstyle\mathcal{O}}}
    {{\scriptscriptstyle\mathcal{O}}}
    {\scalebox{.7}{$\scriptscriptstyle\mathcal{O}$}}
  }
\def\?[#1]{\textbf{[#1]}\marginpar{\Large{\textbf{??}}}}
\def\smallsection#1{\smallskip\noindent\textbf{#1}.}
\let\epsilon=\varepsilon 
\newcommand{\RR}{{\mathbb R}}
\newtheorem{theo}{Theorem}
\newtheorem{prop}{Proposition}[section]	
\newtheorem{defi}[prop]{Definition}
\newtheorem{lemm}[prop]{Lemma}
\newtheorem{corr}[prop]{Corollary}
\newtheorem{rem}{Remark}
\newtheorem{ex}{Example}
\numberwithin{equation}{section}
\DeclareMathOperator{\Spec}{Spec}
\DeclareMathOperator{\sgn}{sgn}
\def\indic{\operatorname{1\hskip-2.75pt\relax l}}
\title[Spectral gap in mean-field $\mathcal O(n)$-model]{Spectral gap in mean-field $\mathcal O(n)$-model}
\author{Simon Becker}
\email{simon.becker@damtp.cam.ac.uk}
\address{DAMTP, University of Cambridge, Wilberforce Rd, Cambridge CB3 0WA, UK}
\author{Angeliki Menegaki}
\email{angeliki.menegaki@dpmms.cam.ac.uk}
\address{DPMMS, University of Cambridge, Wilberforce Rd, Cambridge CB3 0WA, UK}
\begin{document}
\maketitle
\begin{abstract}
We study the dependence of the spectral gap for the generator of the Ginzburg-Landau dynamics for all \emph{$\mathcal O(n)$-models} with mean-field interaction and magnetic field, below and at the critical temperature on the number $N$ of particles. For our analysis of the Gibbs measure, we use a one-step renormalization approach and semiclassical methods to study the eigenvalue-spacing of an auxiliary Schr\"odinger operator.
\end{abstract}
\tableofcontents
\section{Introduction and main results}

\subsection{$\mathcal O(n)$-model} The model we are concerned with in this article is the generator of the Ginzburg-Landau dynamics, or Langevin dynamics, of the mean-field $\mathcal O(n)$-model in the critical and supercritical regime $\beta \ge n$, as defined precisely in Section \ref{sec:On}. Our objective is to study the scaling of the spectral gap in terms of the system size $N$, for all the numbers of components $n\ge1$, and including the cases with or without external magnetic field, in the low temperature and critical regime, extending the study of the subcritical regime $\beta < n$ in \cite{BB18}. When $\beta<n$, the spectral gap of the generator remains open uniformly in $N$ and for any number of components $n,$ in the full temperature range. 

The mean-field $\mathcal O(n)$-model is defined by the energy function 
\begin{equation}
\begin{split}
\label{eq:H}
H(\sigma) &= \tfrac{1}{2} \sum_{x \in [N]}\sigma(x)(-\Delta_{\text{MF}}  \sigma)(x) - \frac{1}{\beta} \sum_{x \in [N]} \langle h, \sigma (x) \rangle 
\end{split}
\end{equation}
acting on spin configurations $\sigma: \{1,..,N\} \rightarrow \mathbb S^{n-1}$ where $\Delta_{\text{MF}}$ is the mean-field Laplacian and $h \in \mathbb R^n$ an external magnetic field. For our study of spectral gaps, we consider the Ginzburg-Landau dynamics associated with the Gibbs measure $d\rho \propto e^{-\beta H(\sigma)}$ with Hamilton function \eqref{eq:H}. The inverse temperature parameter $\beta$ is such that lower temperatures (higher $\beta$) favors alignment of spins. The study of mean-field $\mathcal O(n)$-models is motivated by the fact that their behavior approximates that of the full $\mathcal O(n)$-model on high-dimensional tori \cite{ellis1985entropy,Levin_glauberdynamics}.

\subsection{State of the art and motivation} The study of spectral gaps in $\mathcal O(n)$ models is a popular problem that has received a lot of attention over the last decades. The study of logarithmic Sobolev (and other functional) inequalities is a classical and very effective tool to study concentration of measures and to quantify the relaxation rates, \textit{i.e.} the mixing properties, of the dynamics. In particular, the spectral gap (the speed of relaxation) is determined by the constant in the Log-Sobolev inequalities. We define the spectral gap to be the size of the gap between $0$ and the rest of the spectrum of the associated generator $L$, defined in \eqref{eq:Langevin}. The gap then can be also characterized by 
\begin{align} \label{def:spectral_gap}
\lambda_S:= \inf_{f \in L^2(d\rho)\setminus\{0\}} \frac{-\langle Lf,f\rangle_{L^2(d\rho)}}{\operatorname{Var}_\rho(f)}
\end{align}
 where $\operatorname{Var}_\rho$ is the variance relative to the equilibrium measure $\rho$. All these quantities will be specified for our setting in the following section. 
For further background on functional inequalities see \cite{Gross93,BakEm83,Le99,Le01,GZ03,A} and references therein.  

There are only few general approaches for the study of spectral gaps of spin systems, using log-Sobolev inequalities, available and many of them rely on an asymptotic study of log-Sobolev inequalities \cite{LY93,SZ92,StZe92,StrZeg92} or \cite{menz2013} for a more recent result in that direction. In the article \cite{BB18}, a simpler proof for a log-Sobolev inequality was provided for bounded and unbounded spin systems and sufficiently high temperatures. The novelty of the approach in \cite{BB18} is the combination of the study of log-Sobolev inequalities with a simple renormalization group approach to decompose the stationary measure in a way that makes it accessible to simple Bakry-\'Emery techniques. 

Inspired by the method in \cite{BB18}, we invoke the same one-step renormalization group procedure to reduce the high-dimensional problem to the study of a low-dimensional \emph{renormalized measure} and a \emph{fluctuation measure}. In the subcritical regime $\beta < n$, which is the regime analyzed in \cite{BB18},  the renormalization of the equilibrium measure is particularly efficient, since the renormalized potential is strictly convex such that the Bakry-\'Emery criterion can be directly applied to this measure and implies that the spectral gap remains open. This renormalization group method has recently also been successfully applied in the study of the spectral gap for hierarchical spin models \cite{BB18a} and for a lattice discretization of a massive Sine-Gordon model \cite{BB20}.

The low temperature regime, which is the regime we are concerned about within this article, has a non-convex renormalized potential. In this regime, after a single renormalization step, the renormalized potential is not convex. This makes the asymptotic analysis much more difficult and requires new methods:
\medskip

While we analyze the \emph{Ising model}, $n=1$, without magnetic field, directly using explicit criteria for spectral gap and log-Sobolev inequalities \cite{BG,BGL14}, we heavily use the equivalence between the generator of the Ginzburg-Landau dynamics and a Schr\"odinger operator to analyze multi-component,$n \ge 2$, $\mathcal O(n)$-models. This analysis builds heavily upon ideas by Simon \cite{S,CFKS} and Helffer--Sj\"ostrand \cite{Helffer1,Helffer2} who developed effective semiclassical methods to study the low-lying spectrum of Schr\"odinger operators in the semiclassical limit (which in our case corresponds to $N \rightarrow \infty$). These results are discussed thoroughly in the final chapters of \cite{nier2005hypoelliptic}. In this article however, we have to study the spectrum of Schr\"odinger operators beyond the harmonic approximation. In this case, the limiting operator is not explicitly diagonalizable anymore and the spacing between eigenvalues is no longer linear in the semiclassical parameter $N,$ the number of spins. 

The mixing time of the Glauber dynamics of the mean-field Ising model $(\mathcal O(1))$ \emph{without magnetic field} has been carefully analyzed in \cite{DLP,meanfield}. There it is shown-among others- that the mixing time in the subcritical regime $\beta<1$ is $N \log (N)$, the scaling at the critical point $N^{3/2}$ for $\beta=1$ and in the supercritical regime $\beta>1$ it is exponential growing in $N$. This is to be compared to a spectral gap that remains open for $\beta<1$, closes like $N^{-1/2}$ for $\beta=1$ and closes exponentially fast also for $\beta>1$. Thus, the mixing time for the Glauber dynamics are -up to a factor $1/N$- comparable to our findings on the spectral gap, cf. Theorem \ref{theo:theo1}.
\medskip

Our main result on the mean-field \emph{Ising} model in the supercritical regime $\beta>n$ is stated in the following Theorem:
\begin{theo}[Spectral gap--Supercritical Mean-field Ising models, $\beta>1$]
\label{theo:theo1}
Let $N$ be the number of spins and $n$ the number of components. \newline
For the  \emph{supercritical \emph{mean-field Ising model }($n=1, \beta>1$)}, the spectral gap $\lambda_N$ of the generator
\begin{itemize}
\item for the case of small magnetic fields $\vert h \vert < h_{\text{c}}$, closes as $N \rightarrow \infty$ exponentially fast, $\lambda_N =e^{-N\Delta_{\rm{small}}(V)(1+\smallO(1))}$. In particular, for magnetic fields $h \in [0,h_{\text{c}})$ 
\[\Delta_{\rm{small}}(V) = \int_{\gamma_1(\beta)}^{\gamma_2(\beta)} \beta \left(\varphi-\operatorname{tanh}(\beta \varphi+h) \right) d \varphi \] 
where $\gamma_1(\beta) \le  \gamma_2(\beta) \in \RR$ are the two smallest numbers satisfying the condition
\[ \gamma(\beta) = \tanh(\gamma(\beta) \ \beta+h). \] 
\item For critical magnetic fields $\vert h\vert=h_{\text{c}}$, the spectral gap does not close faster than $\Theta(N^{-1/3})$ anymore. 
\item  Finally, for strong magnetic fields $\vert h\vert> h_{\text{c}}$, it is bounded away from zero uniformly in $N$.
\end{itemize} 
where $h_{c}=\sqrt{\beta(\beta-1)}- \operatorname{arccosh}(\sqrt{\beta}).$
\end{theo}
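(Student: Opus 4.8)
The plan is to run the one-step renormalization of \cite{BB18} and then reduce the whole question to a one-dimensional problem analyzed by the explicit criteria of \cite{BG,BGL14}. For $n=1$ the spins lie in $\mathbb S^{0}=\{\pm1\}$ and the energy \eqref{eq:H} depends on $\sigma$ only through the empirical magnetization $m(\sigma)=\tfrac1N\sum_{x}\sigma(x)$, so $d\rho\propto\exp\!\big(\tfrac{\beta N}{2}m(\sigma)^{2}+h\sum_{x}\sigma(x)\big)$. A Hubbard--Stratonovich identity for $e^{\frac{\beta N}{2}m^{2}}$ writes $d\rho=\int_{\R}d\mu_{N}(\varphi)\,d\nu_{\varphi}(\sigma)$, with fluctuation measure $\nu_{\varphi}$ making the $N$ spins i.i.d.\ of mean $\tanh(\varphi+h)$ and renormalized measure $d\mu_{N}(\varphi)\propto e^{-NW(\varphi)}\,d\varphi$, $W(\varphi)=\frac{\varphi^{2}}{2\beta}-\log\cosh(\varphi+h)$. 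The substitution $\varphi=\beta\gamma$ gives $W(\beta\gamma)=\beta V(\gamma)$ with $V(\gamma)=\frac{\gamma^{2}}{2}-\frac1\beta\log\cosh(\beta\gamma+h)$, turns the critical-point equation $W'(\varphi)=\varphi/\beta-\tanh(\varphi+h)=0$ into $\gamma=\tanh(\beta\gamma+h)$, and yields $\beta\int_{\gamma_{1}}^{\gamma_{2}}\!\big(\varphi-\tanh(\beta\varphi+h)\big)\,d\varphi=\beta\big(V(\gamma_{2})-V(\gamma_{1})\big)=W(\beta\gamma_{2})-W(\beta\gamma_{1})$, so that $\Delta_{\mathrm{small}}(V)$ is precisely the barrier of $W$ measured from the shallower (metastable) well. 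Interchangeably, one may work directly with $m$: under the mean-field Glauber dynamics $m(\sigma)$ is an \emph{exact} Markov projection, giving a birth--death chain on $\{-1,-1+\tfrac2N,\dots,1\}$ reversible for $\pi_{N}(m)\propto\binom{N}{N(1+m)/2}e^{N(\frac{\beta}{2}m^{2}+hm)}=e^{-NF(m)+\smallO(N)}$; by Legendre duality between $\log\cosh$ and the binary entropy the critical values of $F$, $W$ and $\beta V$ coincide, and $\Delta_{\mathrm{small}}(V)$ is the corresponding free-energy barrier.

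\emph{Step 1 (reduction to one dimension).} Conditionally on $\varphi$ (equivalently, on $m$) the configuration is a product measure on a bounded set, hence the fluctuation part has a spectral gap bounded below by a constant $c_{0}>0$ uniformly in $\varphi$ and $N$; on the chain this is obtained by a Bernoulli--Laplace comparison realizing each transposition of two spins by two Glauber flips with $O(1)$ congestion. Plugging this into the Markov-chain decomposition along the level sets of $m$ (as in \cite{BB18}) gives $\lambda_{N}\gtrsim\min\{c_{0},\lambda(\pi_{N})\}$, where $\lambda(\pi_{N})$ is the gap of the one-dimensional (renormalized, or birth--death) problem; the reverse bound $\lambda_{N}\le\lambda(\pi_{N})$ is automatic because the spectrum of a Markov projection is contained in that of the full chain. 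Thus $\lambda_{N}=\lambda(\pi_{N})$ up to constants, and up to $e^{\smallO(N)}$, which is all that the statement requires. This is exactly the step at which the argument leaves the framework of \cite{BB18}: there the renormalized potential is convex and Bakry--\'Emery applies directly, whereas here $W$ is non-convex but the only non-convex direction is the renormalized one, which is already being handled by $\pi_{N}$.

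\emph{Step 2 (the one-dimensional gap, three regimes).} Everything is now dictated by the shape of $W$ (equivalently $F$). \textbf{(i)} For $|h|<h_{\mathrm{c}}$, $W$ is a smooth confining double well with exactly three nondegenerate critical points $\varphi_{-}<\varphi_{0}<\varphi_{+}$ (two minima and the barrier $\varphi_{0}$); the explicit two-sided Hardy/Muckenhoupt bounds for the gap of a one-dimensional problem have logarithms agreeing to leading order in $N$ and give $\lambda(\pi_{N})=e^{-N(W(\varphi_{0})-W(\varphi_{-}))(1+\smallO(1))}$, where $\varphi_{-}=\beta\gamma_{1},\varphi_{0}=\beta\gamma_{2}$ and $W(\varphi_{0})-W(\varphi_{-})=\beta(V(\gamma_{2})-V(\gamma_{1}))=\Delta_{\mathrm{small}}(V)$ is the barrier from the shallower well, i.e.\ the well lying between the two smallest roots of $\gamma=\tanh(\beta\gamma+h)$ (for $h=0$ the wells have equal depth and this is the full barrier). \textbf{(ii)} For $|h|=h_{\mathrm{c}}$ the shallow minimum and the barrier have merged at a degenerate critical point $\varphi_{*}$ with $W'(\varphi_{*})=W''(\varphi_{*})=0$, $W'''(\varphi_{*})\ne0$ (the spinodal/fold point); the exponential Arrhenius factor disappears, the relevant scale is the $N^{-1/3}$-wide neighbourhood of $\varphi_{*}$ on which $NW$ varies by $O(1)$, and feeding this into the one-dimensional lower bound gives $\lambda(\pi_{N})\gtrsim N^{-1/3}$. \textbf{(iii)} For $|h|>h_{\mathrm{c}}$, $W$ has a unique, nondegenerate minimum and no other critical point (although $W''$ still changes sign near $\varphi=-h$), so the same criterion yields a Poincar\'e constant bounded below uniformly in $N$. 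Finally, $h_{\mathrm{c}}=\sqrt{\beta(\beta-1)}-\operatorname{arccosh}(\sqrt{\beta})$ is the fold locus: $W''=0$ forces $\tanh^{2}(\varphi+h)=1-1/\beta$, so $W'=0$ gives $\varphi=\pm\sqrt{\beta(\beta-1)}$, while $\cosh t=\sqrt{\beta}$ gives $\tanh t=\sqrt{(\beta-1)/\beta}$ and hence $\varphi+h=\pm\operatorname{arccosh}(\sqrt{\beta})$, so that $h=\pm h_{\mathrm{c}}$.

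The step I expect to be hardest is Step 1 in the non-convex regime: making the renormalization quantitatively lossless without Bakry--\'Emery, that is, controlling the conditional fluctuations and the Dirichlet form of the macroscopic variable uniformly in $N$, together with --- for the precise exponent in case (i) --- producing matching one-dimensional upper and lower bounds whose logarithms agree to leading order. The degenerate-point estimate in case (ii) is then a more delicate but essentially standard Laplace-type analysis around a cubic critical point, and case (iii) is a routine single-well bound.
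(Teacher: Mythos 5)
Your route is genuinely different from the paper's. The paper never works with the projected magnetization chain: it uses the continuous one\mbox{-}step renormalization ($\varphi\in\mathbb R$, potential $V_1$), transfers SGI/LSI from $\nu_N$ to $\rho$ via Proposition \ref{theo1} (a transfer that \emph{loses a factor $N$}), and then (i) for $|h|<h_c$ proves the exponential lower bound through the Bobkov--G\"otze LSI criterion \cite{BG} with Laplace asymptotics, and the matching upper bound by an explicit hypercube trial function depending on $\bar\sigma$ (Lemma \ref{lem:asym}); (ii) for $|h|=h_c$ and $|h|>h_c$ it analyzes the Witten/renormalized Schr\"odinger operator \eqref{eq:renWitt} semiclassically (IMS localization, \cite{S,CFKS}), getting gaps $\Theta(N^{2/3})$ resp.\ $\Theta(N)$ for $\nu_N$, hence $\Theta(N^{-1/3})$ resp.\ $\Theta(1)$ for $\rho$ after the factor-$N$ loss. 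Your replacement of the semiclassical step by Muckenhoupt/Hardy bounds on the one-dimensional object is legitimately more elementary for $n=1$, your upper bound via the autonomy of $\bar\sigma$ (spectrum of the lumped birth--death chain is contained in that of $L$) is a clean substitute for the trial-function computation, your identification of $\Delta_{\rm small}(V)$ with the shallow-well barrier and your derivation of $h_c$ are correct, and your scalings $e^{-N\Delta}$, $N^{-1/3}$, $\Theta(1)$ for the projected chain are consistent with the paper's results (note the theorem only claims a lower bound at $h=h_c$, so your matching upper bound there is a bonus).

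The genuine gap is your Step 1. The inequality $\lambda_N\gtrsim\min\{c_0,\lambda(\pi_N)\}$ is \emph{not} ``as in \cite{BB18}'': both \cite{BB18} and Proposition \ref{theo1} of this paper give $\operatorname{Var}_\rho(F)\le \tfrac1{\gamma_n}\bigl(1+\tfrac{4N\beta^2}{\lambda}\bigr)\mathcal E_\rho(F)$, i.e.\ a transfer with an explicit factor $N$, which is harmless only because the \emph{continuum} renormalized gap $\lambda$ is itself of order $N$, $N^{2/3}$, $e^{-N\Delta}$ in the three regimes; your $\lambda(\pi_N)$ is smaller by exactly that factor $N$, so quoting \cite{BB18} for a lossless min-type bound is a misattribution. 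Moreover, the off-the-shelf Markov-chain decomposition theorems do not apply along level sets of $m$, because the Glauber dynamics restricted to a slice has no moves at all; you correctly import an auxiliary Bernoulli--Laplace dynamics to control $\mathbb E_{\pi_N}[\operatorname{Var}_\rho(F\mid m)]$, but the second half of the decomposition, namely the macroscopic Dirichlet-form comparison $\mathcal E_{\pi_N}\bigl(\mathbb E_\rho[F\mid m]\bigr)\le C\,\mathcal E_\rho(F)$ with $C$ independent of $N$, is nowhere stated or proved, and without it cases (ii) and (iii) (which need constant-factor, not $e^{\smallO(N)}$, accuracy) do not follow. This missing lemma is exactly the analogue of the covariance computation \eqref{eq:equation}--\eqref{eq:Cauchy-Schwarz} in Proposition \ref{theo1}; it can be proved, e.g.\ by coupling $\rho(\cdot\mid m)$ and $\rho(\cdot\mid m+2/N)$ through a single uniformly chosen spin flip and applying Cauchy--Schwarz, or you can avoid it entirely by running the paper's factor-$N$ transfer against the continuum measure $\nu_N$, whose one-dimensional gap you can still estimate by the same Muckenhoupt/Laplace analysis; either repair yields the stated rates.
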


In the case of supercritical multi-component systems ($n\ge 2, \beta>n$) without magnetic fields, it is the rotational invariance of the model that leads to a closing spectral gap as $N$ tends to infinity. To capture this property, we call a function $f:(\mathbb S^n)^N \rightarrow \RR$ \emph{radial}, if it is only a function of the norm of the mean spin $\vert \bar \sigma \vert.$ 
Our main results for all multi-component systems in the supercritical regime $\beta>n$ are summarized in the following Theorem: 
\begin{theo}[Spectral gap--Supercritical Mean-field $\mathcal O(n)$-models, $\beta>n\ge 2$]
\label{theo:2}
Let $N$ be the number of spins and $n$ the number of components. \newline
For the  \emph{supercritical \emph{mean-field $\mathcal O(n)$-models}($n \ge 2, \beta>n$)}, the spectral gap $\lambda_N$ of the generator 
\begin{itemize} 
\item closes as $\lambda_N=\Theta(N^{-1})$ if there is no external magnetic field $h=0$, but remains open $\lambda_N= \Theta(1)$ for radial functions.
\item is bounded away from zero uniformly in the number of spins for all $h \in \mathbb R^n\backslash\{0\}.$ \end{itemize}
\end{theo}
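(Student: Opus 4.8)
The plan is to combine the one-step renormalization already used in the subcritical regime with a semiclassical analysis of the resulting low-dimensional operator; the new difficulty is that the renormalized potential now has a \emph{degenerate manifold of minima} when $h=0$, and a unique but non-convex well when $h\neq0$. First, as above, apply a Hubbard--Stratonovich transformation to decouple the mean-field interaction and disintegrate the Gibbs measure as $d\rho=d\mu_\varphi(\sigma)\,d\bar\rho(\varphi)$ on $(\mathbb S^{n-1})^N\times\mathbb R^n$: conditionally on $\varphi$ the spins are i.i.d., $\mu_\varphi=\bigotimes_x\nu_{\varphi+h}$ with $\nu_\psi(d\sigma)\propto e^{\langle\psi,\sigma\rangle}\,d\sigma$, and $d\bar\rho(\varphi)\propto e^{-N\mathcal V(\varphi)}\,d\varphi$ with renormalized potential $\mathcal V(\varphi)=\tfrac1{2\beta}|\varphi|^2-\Lambda(|\varphi+h|)$, $\Lambda$ the log-Laplace transform of the uniform measure on $\mathbb S^{n-1}$ (so $\Lambda''(0)=1/n$). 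The standard variance/Dirichlet-form factorization bounds $\lambda_N$ from below by a constant times $\min(\lambda_{\mathrm{fluct}},\lambda_{\mathrm{ren}})$, where $\lambda_{\mathrm{fluct}}$ is the single-spin gap --- which is $\Theta(1)$ \emph{uniformly} in $\psi\in\mathbb R^n$, since $\nu_\psi$ lives on a compact manifold, its gap diverges as $|\psi|\to\infty$ and equals the gap $n-1$ of the round Laplacian at $\psi=0$, while the covariance-type cross terms are harmless because $\mu_\varphi$ is an explicit exponential family --- and $\lambda_{\mathrm{ren}}$ is the gap of the renormalized diffusion, whose generator is $\tfrac1N\Delta_\varphi-\nabla\mathcal V\cdot\nabla$; the prefactor $1/N$, forced by the $N^{-1/2}$ fluctuations of the empirical mean, is the whole origin of the $N^{-1}$ rate. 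Conjugating $\Delta_\varphi-N\nabla\mathcal V\cdot\nabla$ by $e^{N\mathcal V/2}$ turns it into $-N^2$ times a semiclassical Witten Laplacian $P_\hbar=-\hbar^2\Delta+\tfrac14|\nabla\mathcal V|^2-\tfrac\hbar2\Delta\mathcal V$ with $\hbar=1/N$, so that $\lambda_{\mathrm{ren}}\asymp N\,\lambda_1(P_\hbar)$.

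For $h\neq0$: from elementary properties of $\Lambda$ one checks that, for $\beta>n$ and every $h\neq0$, $\mathcal V$ has a \emph{unique} critical point that is a minimum, at $\varphi_*=t_*\hat h$, with positive-definite Hessian (transverse part $(|h|/\beta t_*)\,\mathrm{Id}$); for small $|h|$ the sombrero also leaves a local maximum and a saddle behind, but no second minimum. Hence $P_\hbar$ is in the single-well regime, $\lambda_1(P_\hbar)\asymp\hbar=1/N$ by the harmonic Helffer--Sj\"ostrand analysis (no exponentially small tunneling, and the spurious critical points contribute nothing below that scale), so $\lambda_{\mathrm{ren}}\asymp1$ and $\lambda_N=\Theta(1)$.

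For $h=0$: now $\mathcal V(\varphi)=\tfrac1{2\beta}|\varphi|^2-\Lambda(|\varphi|)$ is radial, and since $\Lambda'(r)/r$ decreases from $1/n$ to $0$ the equation $r/\beta=\Lambda'(r)$ has a unique positive root $r_*$ precisely because $\beta>n$, so $\mathcal V$ attains its minimum on the sphere $\mathcal M=\{|\varphi|=r_*\}\cong\mathbb S^{n-1}$, non-degenerate in the normal direction, with a local maximum at the origin. This is a Morse--Bott situation for $P_\hbar$: an adiabatic (Born--Oppenheimer) reduction onto $\mathcal M$ shows that the normal fluctuations contribute nothing below scale $\hbar$, while the tangential motion is governed, to leading order, by $\hbar^2$ times the Laplace--Beltrami operator of $\mathbb S^{n-1}$ (in the metric and density induced by $\mathcal V$), whose first nonzero eigenvalue is $\asymp\hbar^2=N^{-2}$; thus $\lambda_1(P_\hbar)\asymp N^{-2}$, $\lambda_{\mathrm{ren}}\asymp N^{-1}$, and $\lambda_N\gtrsim N^{-1}$ (with implied constant governed by the first eigenvalue $n-1$ of $\Delta_{\mathbb S^{n-1}}$). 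The matching upper bound uses the test function $f(\sigma)=\langle e_1,\bar\sigma\rangle$: since $|\nabla_{\sigma(x)}f|\le 1/N$ one gets $\mathcal E(f)\lesssim 1/N$, while $\operatorname{Var}_\rho(f)=\Theta(1)$ because $\bar\sigma$ concentrates on $\{|\bar\sigma|=m_*\}$, $m_*=\Lambda'(r_*)>0$, with rotation-invariant law; hence $\lambda_N\le\mathcal E(f)/\operatorname{Var}_\rho(f)=O(1/N)$. Finally, on \emph{radial} functions the tangential directions of $\mathcal M$ are absent and the renormalized diffusion collapses to a one-dimensional problem on $(0,\infty)$ with invariant density $\propto r^{n-1}e^{-N\mathcal V(r)}$, a single non-degenerate well (the weight $r^{n-1}$ and the strict maximum of $\mathcal V$ at the origin both repel mass from $0$); its gap times the $1/N$ time-change equals $\mathcal V''(r_*)=\Theta(1)$ uniformly in $N$ by the one-dimensional Muckenhoupt/Hardy criteria of \cite{BG,BGL14}. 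Together with the reduction step, this gives $\lambda_N=\Theta(1)$ on radial functions.

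The main obstacle is the semiclassical analysis of $P_\hbar$ for $h=0$ \emph{beyond the harmonic approximation}: carrying out the adiabatic reduction onto the minimum sphere $\mathbb S^{n-1}$ for a potential that is nowhere convex, with enough precision both to identify the effective tangential operator and pin down its $\Theta(\hbar^2)$ gap, and to certify that the Goldstone-orthogonal (transverse) fluctuations genuinely sit at energy $\gtrsim\hbar$, so that the spectral gap of the renormalized operator really is the tangential one. The remaining points --- uniformity in $N$ (equivalently in $\psi\in\mathbb R^n$) of the single-spin gap and of the cross terms in the reduction, and the absence of small eigenvalues from the spurious critical points of $\mathcal V$ when $0<|h|$ is small --- would be automatic from convexity in the subcritical regime but must here be argued directly.
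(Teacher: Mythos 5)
Your overall architecture is the same as the paper's: one-step renormalization into fluctuation and renormalized measures, passage to the Witten/Schr\"odinger operator \eqref{eq:renWitt} with semiclassical parameter $\hbar=1/N$, harmonic (Simon/Helffer--Sj\"ostrand) analysis at the unique nondegenerate minimum for $h\neq0$, a mean-spin test function for the $O(1/N)$ upper bound (your simpler choice $\langle e_1,\bar\sigma\rangle$ does work, since $|\bar\sigma|$ concentrates away from $0$ for $\beta>n$ and rotational invariance gives variance $\Theta(1)$), and a one-dimensional radial problem for the statement about radial functions. The genuine gap is in the $h=0$ lower bound, which is the heart of the theorem: you reduce it to the claim that the Morse--Bott well on the sphere $\{|\varphi|=r_{\min}\}$ produces a tangential gap $\Theta(\hbar^2)$ while the transverse fluctuations sit at energy $\gtrsim\hbar$, via an adiabatic/Born--Oppenheimer reduction ``beyond the harmonic approximation'' --- and you then explicitly list exactly this reduction as an unresolved ``main obstacle''. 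As written, the central claims $\lambda_N\gtrsim N^{-1}$ and $\lambda_N=\Theta(1)$ on radial functions therefore rest on an unproven semiclassical input, so the proposal does not yet constitute a proof.

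The fix is considerably easier than the machinery you invoke, and it is what the paper does: because $V_n$ is exactly rotationally invariant when $h=0$, no adiabatic approximation is needed. In polar coordinates the renormalized measure factorizes exactly into a radial measure times the uniform measure on $\mathbb S^{n-1}$ (equivalently, the Witten Laplacian decomposes into angular sectors as in \eqref{eq:radialone}, the angular momentum entering only through $\hbar^2\ell(\ell+n-2)/r^2$), so the ``Goldstone'' gap $\Theta(\hbar^{2})$ is immediate from the $N$-independent Poincar\'e constant of the sphere of radius $\approx r_{\min}$; the only semiclassical work is the $\ell=0$ radial sector, which Prop.~\ref{prop:radialop} handles by IMS localization and harmonic approximation at the two critical radii $0$ and $r_{\min}$. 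Tensorization then gives an $N$-uniform SGI for $\nu_N$, and Prop.~\ref{theo1} converts it into the $\Theta(N^{-1})$ lower bound; restricting to radial $F$ (for which $\varphi\mapsto\mathbb E_{\mu_\varphi}(F)$ is radial) only the radial sector enters and the gap stays open. Two further points you assert but do not establish: for $h\neq0$ the uniqueness and nondegeneracy of the minimum, and the harmlessness of the spurious critical points, require the Bessel-function analysis of Prop.~\ref{Simon} (in Simon's asymptotics the wells at non-minimal critical points, where the Hessian has a negative eigenvalue, automatically sit at strictly positive limiting energy); and the ``harmless cross terms'' in your two-scale decomposition are precisely the covariance estimates of Prop.~\ref{theo1}, which rely on the $\varphi$-uniform LSI of the fluctuation measure and should be cited rather than waved through.
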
 

We also analyze the behavior of the spectral gap at the critical point $\beta=n$ and $h=0$. Using a discrete Fourier analysis approach implemented in Section \ref{sec:critreg} for the Ising case $n=1$ and a direct asymptotic analysis for all higher component systems $n \ge 2$, we find a different asymptotic of the spectral gap from both the supercritical $\beta>n$ (exponentially fast closing) and subcritical $\beta<n$ (spectral gap remains open) regimes:
\begin{theo}[Spectral gap--Critical Mean-field $\mathcal O(n)$ models, $\beta=n$]
\label{theo:theo2}
For all critical, $\beta=n$, $h=0$ mean-field $\mathcal O(n)$-models the spectral gap closes as $\lambda_N=\Theta(N^{-1/2})$. In particular, the rate $N^{-1/2}$ is attained for the magnetization 
\[ M(\sigma) = N^{-1/2} \sum_{x \in [N]} \sigma(x).\]
\end{theo}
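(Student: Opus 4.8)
\medskip

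The plan is to establish matching bounds $\lambda_N\asymp N^{-1/2}$, with the exponent $\tfrac12$ originating from the fact that at $\beta=n$ (and $h=0$) the effective potential governing the empirical magnetization $\bar\sigma:=N^{-1}\sum_{x\in[N]}\sigma(x)$ has a \emph{quartic}, rather than quadratic, minimum at the origin; the Ising case $n=1$ is handled separately since there the continuous carr\'e du champ degenerates.

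\smallskip
\noindent\textbf{Upper bound via the magnetization.} For a unit vector $e\in\R^n$ I would insert $f=M_e:=N^{-1/2}\sum_{x}\langle e,\sigma(x)\rangle$ into \eqref{def:spectral_gap}; it is centred under $\rho$ by the residual $\mathcal O(n)$-symmetry ($h=0$). For $n\ge 2$, $|\nabla M_e|^2=N^{-1}\sum_x\bigl(1-\langle e,\sigma(x)\rangle^2\bigr)$, so $-\langle LM_e,M_e\rangle_{L^2(\rho)}=1-\mathbb E_\rho[\langle e,\sigma(1)\rangle^2]=\tfrac{n-1}{n}=\Theta(1)$ by exchangeability of sites and of components; for $n=1$ the discrete analogue is again $\Theta(1)$. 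On the other hand $\operatorname{Var}_\rho(M_e)=N\,\mathbb E_\rho[\langle e,\bar\sigma\rangle^2]$, equal to $\tfrac Nn\mathbb E_\rho[|\bar\sigma|^2]$ for $n\ge2$, so the claim reduces to $\mathbb E_\rho[|\bar\sigma|^2]=\Theta(N^{-1/2})$. Via the one-step renormalization (Hubbard--Stratonovich decoupling of the mean-field interaction) $\rho$ is a mixture $\rho=\int\mu_m\,d\bar\rho(m)$ over the renormalized measure $\bar\rho\propto e^{-NW}$ on $\R^n$ with $W(m)=\tfrac1{2\beta}|m|^2-\log\int_{\mathbb S^{n-1}}e^{\langle m,\omega\rangle}\,d\omega$; since $\operatorname{Hess}W(0)=(\tfrac1\beta-\tfrac1n)\mathrm{Id}=0$ at $\beta=n$ while $W(m)=a|m|^4+O(|m|^6)$ with $a>0$ and $W$ is globally confining, the rescaling $m=N^{-1/4}y$ turns $e^{-NW}dm$ into a probability measure converging to one $\propto e^{-a|y|^4}dy$. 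This gives $\mathbb E_{\bar\rho}[|m|^2]\asymp N^{-1/2}$, and since $\bar\sigma$ differs from a smooth reparametrisation of $m$ only by conditional fluctuations of order $N^{-1/2}$, also $\mathbb E_\rho[|\bar\sigma|^2]\asymp N^{-1/2}$; hence $\lambda_N\le C N^{-1/2}$ and the rate is indeed realized by $M$.

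\smallskip
\noindent\textbf{Lower bound for $n\ge2$.} With the same decomposition $\rho=\int\mu_m\,d\bar\rho(m)$, the law of total variance reads
\[
\operatorname{Var}_\rho(f)=\int\operatorname{Var}_{\mu_m}(f)\,d\bar\rho(m)+\operatorname{Var}_{\bar\rho}\bigl(\mathbb E_{\mu_m}[f]\bigr).
\]
Each $\mu_m$ is a product of bounded tilts of the uniform measure on $\mathbb S^{n-1}$, hence has spectral gap bounded below uniformly in $m$ (and $\bar\rho$ concentrates at scale $N^{-1/4}$ anyway), so $\int\operatorname{Var}_{\mu_m}(f)\,d\bar\rho\le C\,\langle -Lf,f\rangle_{L^2(\rho)}$. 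For $\bar f(m):=\mathbb E_{\mu_m}[f]$ one computes $\nabla_m\bar f=N\operatorname{Cov}_{\mu_m}(f,\bar\sigma)$, and since $\operatorname{Var}_{\mu_m}(\bar\sigma)=O(1/N)$ (product measure), Cauchy--Schwarz together with the uniform spin-gap gives $\int|\nabla_m\bar f|^2\,d\bar\rho\le CN\,\langle -Lf,f\rangle$. Finally $\bar\rho\propto e^{-NW}$ is concentrated at scale $N^{-1/4}$, so its Poincar\'e constant is $\Theta(N^{-1/2})$; this last step is where the non-harmonic, semiclassical analysis of the associated (anharmonic) Schr\"odinger operator enters, $W$ being only asymptotically diagonalizable near its degenerate minimum, unlike in the strictly sub- or supercritical regimes. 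Combining the three estimates yields $\operatorname{Var}_\rho(f)\le C N^{1/2}\,\langle -Lf,f\rangle$, i.e.\ $\lambda_N\ge c N^{-1/2}$. (Both $h=0$ and the resulting symmetry of $W$, which pins the degenerate minimum at $m=0$, are used here.)

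\smallskip
\noindent\textbf{The Ising case $n=1$ and the main obstacle.} Here the fluctuation measures are trivial, so instead I would use the permutation symmetry to project the Glauber generator onto the birth--death chain carried by $S=\sum_x\sigma(x)$, reversible for $\pi(S)\propto\binom{N}{(N+S)/2}e^{\beta S^2/(2N)}$; at $\beta=1$ this weight, after $S=N^{3/4}t$, concentrates like $e^{-t^4/12}$, and a discrete Fourier/birth--death spectral analysis (Section~\ref{sec:critreg}) shows the reduced chain has gap $\Theta(N^{-1/2})$ while the remaining Fourier modes relax faster, whence $\lambda_N=\Theta(N^{-1/2})$. The step I expect to be the main obstacle is the lower bound for $n\ge2$: beyond the (routine) uniform lower bound on the gaps of the product fluctuation measures one needs the \emph{sharp} $\Theta(\sqrt N)$ Poincar\'e-type inequality for $\bar\rho$, which forces control of the renormalized potential $W$ globally and a spectral analysis of the effective Schr\"odinger operator beyond the harmonic approximation — precisely the non-explicitly-diagonalizable anharmonic situation flagged in the introduction.
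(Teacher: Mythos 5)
Your upper bound is essentially the paper's argument: the same test function (the magnetization), the same observation that at $\beta=n$, $h=0$ the renormalized potential has a degenerate quartic minimum at the origin, and the same $N^{-1/4}$ Laplace rescaling giving $\mathbb E_{\nu_N}\lVert\varphi\rVert^2=\Theta(N^{-1/2})$, hence $\operatorname{Var}_\rho(M)=\Theta(N^{1/2})$ against a $\Theta(1)$ Dirichlet form; your transfer from the renormalized variable to $\bar\sigma$ via $\mathbb E_{\mu_m}(\bar\sigma)$ and $\operatorname{Var}_{\mu_m}(\bar\sigma)=\mathcal O(1/N)$ corresponds to the paper's explicit computation of $\mathbb E_{\mu_\varphi}(\sigma(x))$ as a Bessel ratio. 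Likewise, your lower-bound skeleton (law of total variance, uniform gap of the product fluctuation measures, the covariance identity for $\nabla_m\mathbb E_{\mu_m}(f)$ and Cauchy--Schwarz) is exactly the mechanism of Proposition \ref{theo1}, which converts an SGI with constant $\Theta(N^{1/2})$ for $\nu_N$ into a $\Theta(N^{-1/2})$ gap for $\rho$.

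The genuine gap is the step you yourself flag: the assertion that $\bar\rho\propto e^{-NW}$ has Poincar\'e constant $\Theta(N^{-1/2})$ "because it is concentrated at scale $N^{-1/4}$". Concentration at a scale does not imply a spectral gap at the corresponding scale, and neither does pointwise or weak convergence of the rescaled density to $e^{-a|y|^4}dy$; one must rule out slow modes of $e^{-NV_n}$ globally, i.e. prove a quantitative statement about the low-lying spectrum of the renormalized generator at a non-quadratic minimum, where the harmonic approximation and the Bakry--\'Emery criterion both fail. This is precisely the content of the final Proposition of Section \ref{sec:critreg}: an IMS partition of unity separating the quartic well from its complement, a lower bound $\lambda^2\lvert\nabla V_n\rvert^2-\lambda\Delta V_n\gtrsim\lambda^{2/3}\gg\sqrt\lambda$ off the well, the unitary rescaling $x\mapsto\lambda^{-1/4}x$ (with $\lambda=N/2$) conjugating the localized operator into the $\lambda$-independent anharmonic operators $S_1=-\partial_x^2+\tfrac19x^6-x^2$ and $S_n^\ell$ of \eqref{eq:S2}, and the supersymmetric factorization of Section \ref{sec:SUSYQM} showing $\inf\Spec(S_n^0)=0$ with a strictly positive gap above it; only then does Proposition \ref{theo1} yield the matching $\Theta(N^{-1/2})$ lower bound. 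Without this (or an equivalent Muckenhoupt/one-dimensional analysis of the radial weight made uniform in $N$), your lower bound is an assertion, not a proof. Note also that this proposition covers $n=1$, so your separate birth--death reduction is unnecessary; moreover its key claim that "the remaining Fourier modes relax faster" at $\beta=1$ is again exactly the kind of statement that needs proof and is of the same difficulty as the missing SGI, whereas the paper's renormalization decomposition handles all modes at the cost of the factor $N$ already built into Proposition \ref{theo1}.
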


We emphasize that at the critical points ($\beta=n$, $h=0$), the gap does no longer close once a non-zero magnetic field is present: 
\begin{theo}[Spectral gap--Mean-field $\mathcal O(n)$ models, $\beta=n$, $h \neq 0$ ]
\label{theo:theo3}
For all, $\beta=n$ and $h \neq 0$, the spectral gap of all mean-field $\mathcal O(n)$-models remains open.
\end{theo}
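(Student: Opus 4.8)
The plan is to follow the scheme used for Theorems \ref{theo:theo1}--\ref{theo:theo2}; the only genuinely new input is the geometry of the renormalized potential once a magnetic field is present. First I would apply the one--step (Hubbard--Stratonovich) renormalization exactly as in \cite{BB18}: this factorizes $d\rho\propto e^{-\beta H}$ through a \emph{renormalized measure}
\[ d\bar\rho(\varphi)\ \propto\ e^{-N W(\varphi)}\,d\varphi,\qquad W(\varphi)=\tfrac{\beta}{2}\,|\varphi|^2-\log Z_n(\beta\varphi+h),\quad Z_n(v)=\int_{\mathbb S^{n-1}}e^{\langle v,\omega\rangle}\,d\omega, \]
on $\mathbb R^n$, together with \emph{fluctuation measures} that, conditionally on $\varphi$, are $N$--fold products of the single--site spin measures $\nu_{\beta\varphi+h}\propto e^{\langle\beta\varphi+h,\cdot\rangle}$ on $\mathbb S^{n-1}$. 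On the compact manifold $\mathbb S^{n-1}$ these single--site measures have a spectral gap bounded below in terms of $|\beta\varphi+h|$ only (Bakry--\'Emery / Holley--Stroock), so the fluctuation part contributes a uniformly positive gap on any region where $|\beta\varphi+h|$ stays bounded, the complement being absorbed by a remote--mass estimate. Hence everything reduces to showing that the generator attached to $\bar\rho$ --- equivalently, via the Schr\"odinger (Witten) operator correspondence with semiclassical parameter $N$, its first nonzero eigenvalue --- stays bounded away from $0$.

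Next I would pin down the geometry of $W$ at $\beta=n$, $h\neq 0$, which is precisely where this differs from the critical case $h=0$ of Theorem \ref{theo:theo2}. Computing
\[ \nabla W(\varphi)=\beta\bigl(\varphi-m(\beta\varphi+h)\bigr),\qquad \nabla^2W(\varphi)=\beta\bigl(I-\beta\,\mathrm{Cov}_{\nu_{\beta\varphi+h}}(\omega)\bigr),\qquad m(v):=\mathbb E_{\nu_v}[\omega], \]
and invoking the covariance bound already used in the subcritical analysis, $\mathrm{Cov}_{\nu_v}(\omega)\preceq \tfrac1n I$ for all $v$ with $\|\mathrm{Cov}_{\nu_v}(\omega)\|<\tfrac1n$ for $v\neq 0$, one sees that at $\beta=n$ the potential $W$ is convex and coercive, and strictly convex away from the single point $\varphi=-h/\beta$. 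Thus $W$ has a unique critical point $\varphi^*$, its global minimum, solving $\varphi^*=m(\beta\varphi^*+h)$. Since $m(0)=0\neq-h/\beta$ whenever $h\neq 0$, necessarily $\psi^*:=\beta\varphi^*+h\neq 0$, and therefore $\nabla^2W(\varphi^*)=\beta(I-\beta\,\mathrm{Cov}_{\nu_{\psi^*}}(\omega))\succeq\kappa I$ for some $\kappa=\kappa(n,h)>0$: a non-zero field turns the degenerate ($|\varphi|^4$--type) well present at $h=0$ into a non-degenerate one. This is exactly the configuration the semiclassical results used for $n\ge 2$, and the explicit spectral-gap criteria of \cite{BG,BGL14} used in the Ising case $n=1$, are designed to handle, and it produces an $N$--independent positive lower bound on the gap of $\bar\rho$.

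To conclude I would quantify this: fixing $r>0$ with $\nabla^2W\succeq\tfrac\kappa2 I$ on the convex ball $B=B(\varphi^*,r)$, the Bakry--\'Emery criterion gives the restricted renormalized measure a gap $\ge\tfrac\kappa2 N$; by convexity and coercivity $W-W(\varphi^*)\ge\delta>0$ on $\partial B$ and grows at least linearly outside $B$, so $\bar\rho(\mathbb R^n\setminus B)=O(e^{-\delta N})$, and the standard two--region decomposition of the Poincar\'e inequality (as in \cite{BB18}) upgrades the local bound to $\mathrm{gap}(\bar\rho)\ge c_0>0$ uniformly in $N$. Feeding this and the uniform fluctuation--measure gaps into the renormalization estimate yields $\lambda_N\ge c>0$ uniformly in $N$. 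For $n=1$ one can alternatively skip the reduction to $\bar\rho$ and argue directly as for the Ising models treated above, the non-convex single-site term being tamed by a Muckenhoupt-type criterion because $W$ is uniformly strictly convex near its unique minimum when $h\neq 0$.

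The hard part will be the second step: showing that a non-zero magnetic field destroys the criticality of the $h=0$ point, i.e. that the minimizer of $W$ is non-degenerate even at $\beta=n$. This rests on the \emph{strict} covariance bound $\|\mathrm{Cov}_{\nu_v}(\omega)\|<\tfrac1n$ for $v\neq0$ together with the observation, from the self-consistency equation, that $\beta\varphi^*+h\neq 0$. A secondary nuisance is that the single--spin fluctuation measures have spectral gaps deteriorating as $|\beta\varphi+h|\to\infty$ and so cannot be bounded uniformly in $\varphi$; this is handled, as in the earlier theorems, by using the exponential concentration of $\bar\rho$ near $\varphi^*$ and a remote-mass decomposition.
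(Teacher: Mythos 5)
Your strategy is sound, and its key geometric ingredient coincides with the paper's: for $\beta=n$, $h\neq 0$ the renormalized potential has a unique critical point, a non-degenerate global minimum, because the self-consistency equation forces $\beta\varphi^*+h\neq 0$. The paper gets this in Proposition \ref{Simon} for $n\ge 2$ from concavity/monotonicity properties of the Bessel ratio $r\mapsto I_{n/2}(r)/I_{n/2-1}(r)$, and in the Ising case from Proposition \ref{strongfields} (note $h_c(1)=0$, so every $h\neq 0$ is a ``strong field''). Your covariance bound $\mathrm{Cov}_{\nu_v}(\omega)\preceq \tfrac1n I$, strict for $v\neq 0$, is the same input phrased probabilistically (longitudinal variance $=m'(|v|)$, transverse variance $=m(|v|)/|v|$, both $<m'(0)=\tfrac1n$ by strict concavity of $m$), so you should make explicit where the strict inequality comes from -- it is exactly the point where $h\neq0$ removes the critical degeneracy. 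Where you genuinely diverge is the conversion of the non-degenerate minimum into a gap: the paper passes to the renormalized Schr\"odinger operator \eqref{eq:renWitt} and uses semiclassical eigenvalue asymptotics (Simon's theorem; for $n=1$ also the one-dimensional Muckenhoupt/LSI criteria) to show the renormalized generator's gap grows \emph{linearly in $N$}; you instead use global convexity of $W$ at $\beta=n$, local Bakry--\'Emery near $\varphi^*$, and a two-region decomposition. That is a viable, more elementary route, provided the patching near the single degenerate point $-h/\beta$ (where the Hessian vanishes at $\beta=n$) and at infinity is written out. Your remote-mass treatment of the fluctuation measures is unnecessary, since the paper uses an LSI for them with constant independent of $\varphi$ and $h$ (tensorization for $n=1$, \cite{ZQM} for $n\ge2$), but it is harmless.

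There is, however, one step that fails as written: the final bookkeeping. Proposition \ref{theo1} converts an SGI($\lambda$) for the renormalized measure into $\operatorname{Var}_{\rho}(F)\le \tfrac{1}{\gamma_n}\left(1+\tfrac{4N\beta^2}{\lambda}\right)\sum_{x}\left\lVert \nabla^{(x)}_{\mathbb S^{n-1}}F\right\rVert^2_{L^2(d\rho)}$, so a spectral gap for $\rho$ that is uniform in $N$ requires $\lambda\gtrsim N$, not merely $\lambda\ge c_0>0$. Your local Bakry--\'Emery bound does produce a gap of order $\kappa N$ for the restricted measure (the potential is $NW$), but you then weaken this to ``$\operatorname{gap}(\bar\rho)\ge c_0>0$'' before applying the renormalization estimate, which only yields a full gap of order $1/N$ -- the type of bound in Theorem \ref{theo:2}, not the uniform gap claimed in Theorem \ref{theo:theo3}. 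The fix lies inside your own argument: carry the factor $N$ through the two-region decomposition (restricted gap $\ge \tfrac{\kappa}{2}N$, exterior mass $O(e^{-\delta N})$, using \eqref{eq:CS} together with uniform convexity of $W$ outside a neighbourhood of $-h/\beta$), and conclude $\operatorname{gap}(\bar\rho)\ge cN$, exactly the order-$N$ growth that Propositions \ref{strongfields} and \ref{Simon} establish before invoking Proposition \ref{theo1}.
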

The proof of Theorem \ref{theo:theo3} is along the lines of Theorem \ref{theo:theo1} in the regime $h > h_{\text{c}}$ and follows from Proposition \ref{strongfields} in the Ising-case, $n=1$, and in the multi-component case, $n \ge 2$, from Proposition \ref{Simon}.

\subsection{Organization of the article}
The article is organized as follows:
\begin{itemize}
\item In Section \ref{sec:On} we introduce the mean-field $\mathcal O(n)$-model.
\item In Section \ref{sec:renormalized} we introduce the renormalized methods.
\item In Section \ref{sec:Ising} we analyze the mean-field Ising model in the supercritical regime $\beta>1$ and prove Theorem \ref{theo:theo1}.
\item In Section \ref{sec:High-dim} we analyze the higher-component mean-field $\mathcal O(n)$-models in the supercritical regime $\beta>n$ and prove Theorem \ref{theo:2}.
\item In Section \ref{sec:critreg} we study the critical regime and prove both Theorems \ref{theo:theo2} and \ref{theo:theo3}.
\item Our article contains an appendix that contains technical details and further details on numerical methods.
\end{itemize}

\smallsection{Acknowledgements} 
Both authors were supported by the EPSRC grant EP/L016516/1 for the University of Cambridge CDT, the CCA. The authors are grateful to Roland Bauerschmidt for many useful discussions and bringing this problem to our attention.

\smallsection{Notation}
We write $f(z) = \mathcal O(g(z)) $ to indicate that there is $C>0$ such that $\left\lvert f(z) \right\rvert \le C \left\lvert g(z) \right\rvert$ and $f(z)= \smallO(g(z))$ for $z \rightarrow z_0$ if there is for any $\varepsilon>0$ a neighbourhood $U_{\varepsilon}$ of $z_0$ such that $\left\lvert f(z)\right\rvert \le \varepsilon \left\lvert g(z) \right\rvert.$ We say that $f(z) = \Theta(g(z))$ if there are $ k_1, k_2>0$ and $z_0$ such that for all $z \ge z_0$  we have $k_1  g(z) \leq f(z) \leq k_2  g(z).$
The expectation with respect to a measure $\mu$ is written denoted by $\mathbb E_{\mu}(X).$
The normalized surface measure on the $n$ sphere is denoted as $dS_{\mathbb S^{n}}.$ We write $\indic $ to denote a vector or matrix whose entries are all equal to one and $\operatorname{id}$ for the identity map. Finally, we introduce the notation $[N]:=\left\{1,...,N \right\}.$ The eigenvalues of a self-adjoint matrix $A$ shall be denoted by $\lambda_1(A)\le ... \le \lambda_N(A)$.

\section{The mean-field $\mathcal O(n)$-model}
\label{sec:On}
We study the mean-field $\mathcal O(n)$-model with spin configuration $\sigma: [N] \rightarrow \mathbb S^{n-1}$ and introduce the mean-field Laplacian $(\Delta_{\text{MF}}\sigma)(x):= \tfrac{1}{N}\sum_{y \in [N]} \left( \sigma(y)-\sigma(x)\right).$

The mean spin is defined as $\overline{\sigma}:=\tfrac{1}{N} \sum_{x \in [N]} \sigma(x).$
The energy of a spin configuration $\sigma$ is given by the \emph{Curie-Weiss Hamiltonian}
\begin{equation}
\begin{split}
\label{eq:Hamenergy}
H(\sigma) &= \tfrac{1}{2} \sum_{x \in [N]}\sigma(x)(-\Delta_{\text{MF}}  \sigma)(x) - \frac{1}{\beta} \sum_{x \in [N]} \langle h, \sigma (x) \rangle \\
& = \tfrac{1}{4N} \sum_{x,y \in[N]} \left\lvert \sigma(x)-\sigma(y) \right\rvert^2 - \frac{1}{\beta} \sum_{x \in [N]} \langle h, \sigma (x) \rangle \\
&=\tfrac{N}{2}(1-\left\lvert \overline{\sigma} \right\rvert^2) - \tfrac{N}{\beta}  \  \langle h,  \overline{\sigma}  \rangle.
\end{split}
\end{equation}
where the constant vector $h \in \mathbb R^n$ represents an external magnetic field and $\beta$ is the inverse temperature of the system. The critical temperature for the $\mathcal O(n)$-models is $\beta=n$ and we study both regimes: the supercritical regime $\beta>n $ and the critical regime $\beta=n$.

The dynamics we consider is the continuous-time Ginzburg-Landau dynamics
\begin{equation}
\begin{split}
\label{eq:krasmo}
\partial_t f &= \sum_{x \in [N]} \left\langle \nabla^{(x)}_{\mathbb S^{n-1}}, \beta^{-1} \nabla^{(x)}_{\mathbb S^{n-1}}f+ f \nabla^{(x)}_{\mathbb S^{n-1}} H \right\rangle_{\mathbb R^n}
\end{split}
\end{equation}
to the invariant distribution of the mean-field $\mathcal O(n)$-model which is the Gibbs measure $d\rho(\sigma):=e^{- \beta H(\sigma)} /Z \ dS^{\otimes N}_{\mathbb S^{n-1}}(\sigma)$ with normalizing constant $Z$. The operators $\Delta^{(x)}_{\mathbb S^{n-1}}$ defined by $\langle f,-\Delta^{(x)}_{\mathbb S^{n-1}}f \rangle:=\langle \nabla^{(x)}_{\mathbb S^{n-1}}f,\nabla^{(x)}_{\mathbb S^{n-1}}f\rangle$ and $\nabla^{(x)}_{\mathbb S^{n-1}}$ are the Laplace-Beltrami and gradient operator on $\mathbb S^{n-1}$ acting on spin $i$, respectively. We recall that for the Ising model $n=0$ and a function $F:\mathbb S^0 \rightarrow \mathbb R$, the gradient is given by $(\nabla_{\mathbb S^0} F)(\sigma)=F(\sigma)-F(-\sigma).$
The $L^2\left((\mathbb S^{n-1})^N\right)$-adjoint of the generator of the Kramers-Smoluchowski equation \eqref{eq:krasmo} is the generator
\begin{equation}
\begin{split}
\label{eq:Langevin}
(L\zeta)(\sigma)&:= \sum_{x \in [N]}\beta^{-1}(\Delta^{(x)}_{\mathbb S^{n-1}}\zeta)(\sigma)-\langle (\nabla^{(x)}_{\mathbb S^{n-1}} H)(\sigma), (\nabla^{(x)}_{\mathbb S^{n-1}}\zeta)(\sigma) \rangle_{\mathbb R^n}. 
\end{split}
\end{equation}
Studying the operator $L$ on the weighted space $L^2\left((\mathbb S^{n-1})^N, \ d\rho \right)$ makes this generator self-adjoint.
The quadratic form of the generator \eqref{eq:Langevin} is just a rescaled Dirichlet form
\begin{equation*}
\begin{split}
-\langle Lf,f\rangle_{L^2(d\rho)} & = \beta^{-1} \sum_{x \in [N]} \left\lVert  \nabla_{\mathbb S^{n-1}}^{(x)}f \right\rVert^2_{L^2(d\rho)}.
\end{split}
\end{equation*}

\section{Renormalized measure and mathematical preliminaries}
\label{sec:renormalized}
We start with the definition of entropy with respect to probability measures:
\begin{defi}[Entropy]
For a probability measure $\mu $ on some Borel set $\Omega$ the entropy $\operatorname{Ent}_{\mu}(F)$ of a positive measurable function $F:\Omega \rightarrow \mathbb R_{\ge 0}$ with $\int_{\Omega} F(x) \log^+(F(x)) \ d\mu(x)< \infty$ is defined as 
\begin{equation}
\begin{split}
\operatorname{Ent}_{\mu}(F) &:=\int_{\Omega} F(x) \log\left(F(x) \bigg/  \int_{\Omega} F(y) \ d\mu(y) \right) \ d\mu(x).
\end{split}
\end{equation}
\end{defi}

Instead of studying the generator of the dynamics directly, we apply a one step renormalization first \cite[Sec.\@ $1.4$]{BBS18}:

\begin{defi}[Renormalized quantities]
The \emph{renormalized single spin potential} $V_n$ associated with the mean-field $ \mathcal O(n)$-model for $\varphi \in \mathbb R^n$ is defined as 
\begin{equation}
\begin{split}
\label{eq:renpot}
V_n(\varphi) &= -\log \int_{\mathbb S^{n-1}} e^{-\tfrac{\beta}{2}\left\lVert \varphi-\sigma \right\rVert^2+ \left\langle h,  \sigma \right\rangle}\ dS_{\mathbb S^{n-1}}(\sigma) \\
&= \tfrac{\beta}{2}\left(1+\left\lVert \varphi \right\rVert^2 \right) - \log \left( \Gamma\left(\tfrac{n}{2}\right) \left(\tfrac{2}{ \left\lVert \beta \varphi + h \right\rVert}\right)^{\frac{n}{2}-1} I_{\tfrac{n}{2}-1}( \left\lVert \beta \varphi +h \right\rVert) \right)
\end{split}
\end{equation}
where $I$ is the modified Bessel function of the first kind.
The \emph{$N$-particle renormalized measure} is defined for a normalizing constant $\nu_N^{(n)}$ by 
\begin{equation}
\label{eq:renmeasure}
d\nu_{N}(\varphi) = \nu_N^{(n)}e^{-NV_n(\varphi)} \ d\varphi\text{ on }\mathbb R^n.
\end{equation}
\end{defi}
\begin{defi}[Fluctuation measure]
For any $\varphi \in \mathbb R^n,$ there is a probability measure $\mu_{\varphi}$,  the \emph{fluctuation measure}, on $\left(\mathbb S^{n-1}\right)^N$ defined as 
\begin{equation}
\label{eq:flucmeasure}
 \mathbb E_{\mu_{\varphi}}(F) = \int_{\left(\mathbb S^{n-1}\right)^N} F(\sigma)  e^{NV_n(\varphi)}  \prod_{x \in [N]} e^{-\tfrac{\beta}{2}\left\lVert \varphi-\sigma(x) \right\rVert_2^2+ \left\langle h, \sigma(x) \right\rangle} \ dS(\sigma(x)).
 \end{equation}
 \end{defi}
A straightforward calculation shows that the stationary measure $d\rho$ can be decomposed into the fluctuation and renormalized measure such that $\mathbb E_{\rho}(F)= \mathbb E_{\nu_N}(\mathbb E_{\mu_{\varphi}}(F)).$

\begin{ex}
In the case of the \emph{Ising model} $(n=1)$ the renormalized potential is 
\begin{equation}
\label{eq:Ising}
 V_1(\varphi) = \tfrac{\beta}{2}(1+  \varphi^2)  - \log\left(\cosh(\beta \varphi+h)\right).
 \end{equation}
For the \emph{XY model} $(n=2)$ the renormalized potential reads
\begin{equation}
\label{eq:XY} V_2(\varphi) = \tfrac{\beta}{2}(1+  \left\Vert\varphi\right\Vert^2)  - \log\left(I_0(\left\lVert \beta \varphi+h \right\rVert)\right)
 \end{equation}
where $I$ is the modified Bessel function of the first kind.

For the \emph{Heisenberg model} $(n=3)$ one finds
\[ V_3(\varphi)= \tfrac{\beta}{2}\left(1+ \left\lVert \varphi \right\rVert^2\right)- \log\left( \frac{\sinh\left\lVert \beta \varphi  +h\right\rVert}{\left\lVert \beta \varphi+h  \right\rVert}\right).\]
\end{ex}

For the $N$-asymptotic study of eigenvalues we observe that the renormalized potential grows quadratically at infinity such that $\Delta V_n \in L^{\infty}(\mathbb R^n).$

The Ginzburg-Landau dynamics dynamics for the renormalized measure is then given by the self-adjoint operator $L_{\text{ren}}:D(L_{\text{ren}}) \subset L^2(\mathbb R^n, d\nu_N) \rightarrow L^2(\mathbb R^n, d\nu_N)$, satisfying
\begin{equation}
\label{eq:renLang}
(L_{\text{ren}} \zeta)(\varphi) =  (\Delta_{\mathbb R^n} \zeta)(\varphi) - N\left \langle \nabla_{\mathbb R^n} V_n(\varphi), \nabla_{\mathbb R^n} \zeta (\varphi)\right\rangle. 
\end{equation}
The renormalized generator $L_{\text{ren}}$ satisfies 
\begin{equation}
\label{eq:renmeasuprop}-\left\langle L_{\text{ren}} f,f \right\rangle_{L^2(d\nu_N)}=\left\lVert \nabla_{\mathbb R^n}f \right\rVert^2_{L^2(d\nu_N)}.
\end{equation}

The renormalized Schr\"odinger operator with null space spanned by $e^{-NV_n} $ is the operator defined by conjugation $- \Delta_{\operatorname{ren}} = e^{-NV_n/2} L_{\text{ren}}  e^{NV_n/2}$
\begin{equation}
\label{eq:renWitt}
\Delta_{\text{ren}} =- \Delta_{\mathbb R^n}+\tfrac{N^2}{4}\vert \nabla V_n(\varphi)\vert^2- \tfrac{N}{2}\Delta V_n(\varphi).
\end{equation}

\begin{defi}[LSI and SGI]
Let $\mu$ be a Borel probability measure on $\mathbb R^n$. We say that $\mu$ satisfies a logarithmic Sobolev inequality \emph{LSI(k)} iff
\[ \operatorname{Ent}_{\mu}(f^2) \le \tfrac{2}{k} \left\lVert \nabla f \right\rVert_{L^2(d \mu)}^2\]
for all smooth functions $f$. 
The \emph{LSI(k)} implies \cite[Prop.\ $2.1$]{Le99} that $\mu$ satisfies a spectral gap inequality \emph{SGI(k)}
\[ \operatorname{Var}_{\mu}(f) \le \tfrac{1}{k}  \left\lVert \nabla f \right\rVert_{L^2(d \mu)}^2.\]
\end{defi}
Thus, in light of the characterisation \eqref{def:spectral_gap}, the spectral gap of $L_{\text{ren}}$ is by \eqref{eq:renmeasuprop} precisely the constant in the SGI of the renormalized measure.
\begin{rem}
If $f$ vanishes outside a set $\Omega$ of measure $\mu(\Omega) <1$ and if $\mu$ satisfies a \emph{SGI(k)} then
\begin{equation}
\label{eq:CS}
\left\lVert f \right\rVert_{L^2(d \mu)}^2\le \tfrac{1}{k(1-\mu(\Omega))}  \left\lVert \nabla f \right\rVert_{L^2(d \mu)}^2.
\end{equation}
\end{rem}

For Borel probability measures $\mu$ on $\mathbb R$ there is an explicit characterization of the measures satisfying a {LSI} \cite[Theorem $5.3$]{BG}:

Any such measure $\mu$ satisfies a {LSI(k)} iff there exist absolute constants $K_0=1/150$ and $K_1=468$ such that the optimal value $k$ in the {LSI(k)} satisfies $K_0(D_0+D_1)\le 1/k \le K_1(D_0+D_1)$ for finite $D_0$ and $D_1$. Let $m$ be the median of $\mu$ and $p(t) \ dt$ the absolutely continuous part of $\mu$ with respect to Lebesgue measure. The constants $D_0$ and $D_1$ are given by
\begin{equation}
\begin{split}
\label{eq:LSIconstants}
D_0 &:=\sup_{x<m} \left( -\mu((-\infty,x])\log(\mu((-\infty,x]))  \int_x^m \frac{ds}{p(s)}\right)\text{ and } \\ 
D_1 &:=\sup_{x>m} \left( -\mu([x,\infty))\log(\mu([x,\infty))  \int_m^x \frac{ds}{p(s)}\right).
\end{split}
\end{equation}

For constants 
\begin{equation}
\begin{split} 
\label{eq:SGIconstants}
B_0 &:=\sup_{x<m} \left( \mu((-\infty,x])  \int_x^m \frac{ds}{p(s)}\right) \text{ and }
B_1 :=\sup_{x>m} \left( \mu([x,\infty)) \int_m^x \frac{ds}{p(s)}\right)
\end{split}
\end{equation}
one defines the Muckenhoupt number \cite{M72} $B:=\operatorname{max}(B_0,B_1).$ The measure $\mu$ satisfies then a SGI with optimal constant $c=1/k$ if and only if $B$ is finite in which case 
\begin{equation}
\label{eq:SGI}
 B/2  \le c \le 4 B
\end{equation}
\cite[Theorem $4.5.1$]{BGL14}. 

\begin{rem}
\label{redrem}
The proof given in \cite[Theorem $5.3$]{BG} shows that the characterization of \emph{LSI} constants holds true not only by splitting at the median: Instead, there is $\varepsilon>0$ such that for any $\zeta$ for which $\mu((-\infty,\zeta]), \mu([\zeta,\infty)) \in (1/2-\varepsilon,1/2+\varepsilon)$ the above characterization \eqref{eq:LSIconstants} holds true when the median $m$ is replaced by $\zeta.$ The same is, up to an unimportant adaptation of the lower bound in \eqref{eq:SGI}, for the \emph{SGI} as well, cf.\@ \cite[Prop.\@ $3.2$ + $3.3$]{GR01}. 
\end{rem}
We continue by observing that the fluctuation measures satisfy a LSI($\tfrac{2}{\gamma_n}$) independent of $h$ or $\varphi$. This follows for $n=1$ with $\gamma_n = 4$ from a simple application of the tensorization principle to the classical bound on the Bernoulli distribution \cite{A, Le01, SC}. For number of components $n \ge 2$ one can use the results from \cite{ZQM}.
\begin{prop}
\label{theo1}
 Let the renormalized measure $\nu_N$ satisfy a \emph{LSI($\lambda$)}, then the full equilibrium measure $\rho$ satisfies a \emph{LSI} 
\[  \operatorname{Ent}_{\rho}(F^2) \le \tfrac{2}{\gamma_n} \left(1+\tfrac{8 N  \beta^2}{\lambda} \right) \sum_{x \in [N]} \left\lVert  \nabla_{\mathbb S^{n-1}}^{(x)}F \right\rVert^2_{L^2(d\rho)}\]
and if the renormalized measure $\nu_N$ satisfies a \emph{SGI($\lambda$)}, then the equilibrium measure $\rho$ satisfies a \emph{SGI}
\[  \operatorname{Var}_{\rho}(F) \le \tfrac{1}{\gamma_n} \left(1+\tfrac{4N \beta^2}{\lambda} \right)\sum_{x \in [N]} \left\lVert  \nabla_{\mathbb S^{n-1}}^{(x)}F \right\rVert^2_{L^2(d\rho)}.\]
\end{prop}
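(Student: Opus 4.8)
The plan is to exploit the factorisation $\mathbb E_{\rho}(F)=\mathbb E_{\nu_N}(\mathbb E_{\mu_\varphi}(F))$ recorded above together with the standard additivity of entropy, resp.\ variance, along this disintegration: setting $\Psi(\varphi):=\mathbb E_{\mu_\varphi}(F^2)$ and $\Phi(\varphi):=\mathbb E_{\mu_\varphi}(F)$,
\[
\operatorname{Ent}_{\rho}(F^2)=\mathbb E_{\nu_N}\!\big[\operatorname{Ent}_{\mu_\varphi}(F^2)\big]+\operatorname{Ent}_{\nu_N}(\Psi),\qquad
\operatorname{Var}_{\rho}(F)=\mathbb E_{\nu_N}\!\big[\operatorname{Var}_{\mu_\varphi}(F)\big]+\operatorname{Var}_{\nu_N}(\Phi),
\]
and then to estimate the two summands separately. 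The first is governed by the fluctuation measure. Since $\mu_\varphi$ is, for every $\varphi$, the $N$-fold product of a single-spin law on $\mathbb S^{n-1}$, tensorisation of the one-spin bound gives, uniformly in $\varphi$ and $h$, $\operatorname{Ent}_{\mu_\varphi}(F^2)\le \tfrac{2}{\gamma_n}\sum_{x\in[N]}\|\nabla^{(x)}_{\mathbb S^{n-1}}F\|^2_{L^2(\mu_\varphi)}$ and $\operatorname{Var}_{\mu_\varphi}(F)\le \tfrac{1}{\gamma_n}\sum_{x\in[N]}\|\nabla^{(x)}_{\mathbb S^{n-1}}F\|^2_{L^2(\mu_\varphi)}$; applying $\mathbb E_{\nu_N}$ and using $\mathbb E_{\rho}=\mathbb E_{\nu_N}\mathbb E_{\mu_\varphi}$ converts these into $\tfrac{2}{\gamma_n}\sum_{x}\|\nabla^{(x)}_{\mathbb S^{n-1}}F\|^2_{L^2(\rho)}$ and $\tfrac{1}{\gamma_n}\sum_{x}\|\nabla^{(x)}_{\mathbb S^{n-1}}F\|^2_{L^2(\rho)}$ respectively, which are exactly the ``$1$'' contributions in the asserted constants.

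For the second summand I would invoke the hypothesis on $\nu_N$, namely $\operatorname{Ent}_{\nu_N}(\Psi)\le \tfrac{2}{\lambda}\|\nabla_{\mathbb R^n}\sqrt{\Psi}\|^2_{L^2(\nu_N)}$ and $\operatorname{Var}_{\nu_N}(\Phi)\le \tfrac{1}{\lambda}\|\nabla_{\mathbb R^n}\Phi\|^2_{L^2(\nu_N)}$, which reduces everything to bounding these $\varphi$-gradients by the Dirichlet form of $F$. Differentiating \eqref{eq:flucmeasure} under the integral sign and using $\partial_{\varphi_j}V_n(\varphi)=\beta(\varphi_j-m_j(\varphi))$, where $m(\varphi):=\mathbb E_{\mu_\varphi}(\sigma(1))$ is the single-spin mean, the normalisation $e^{NV_n(\varphi)}$ cancels the $\varphi$-linear part of the self-interaction and one is left with the clean identity
\[
\nabla_{\mathbb R^n}\mathbb E_{\mu_\varphi}(F)=\beta\sum_{x\in[N]}\operatorname{Cov}_{\mu_\varphi}\!\big(F,\sigma(x)\big),
\]
and the analogous one with $F^2$ in place of $F$. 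Because $\mu_\varphi$ is a product measure, each term equals $\mathbb E_{\mu_\varphi}\big[\operatorname{Cov}_{\sigma(x)}(F,\sigma(x))\big]$ with the inner covariance taken in the single variable $\sigma(x)$; Cauchy--Schwarz on $\mathbb S^{n-1}$ together with $\sum_j\operatorname{Var}_{\sigma(x)}(\sigma(x)_j)\le 1$ bounds it by $\sqrt{\operatorname{Var}_{\sigma(x)}(F)}$, and the single-spin spectral-gap inequality (Poincaré constant $1/\gamma_n$, again by tensorisation) replaces this by $\gamma_n^{-1/2}\,(\mathbb E_{\sigma(x)}|\nabla^{(x)}_{\mathbb S^{n-1}}F|^2)^{1/2}$. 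Summing over $x\in[N]$ by Cauchy--Schwarz — this is the step that produces exactly one power of $N$, and not $N^2$ — and integrating over $\nu_N$ yields $\|\nabla_{\mathbb R^n}\Phi\|^2_{L^2(\nu_N)}\le c\,\tfrac{N\beta^2}{\gamma_n}\sum_{x}\|\nabla^{(x)}_{\mathbb S^{n-1}}F\|^2_{L^2(\rho)}$ for an absolute constant $c$, hence $\operatorname{Var}_{\nu_N}(\Phi)\le \tfrac{c}{\gamma_n}\,\tfrac{N\beta^2}{\lambda}\sum_{x}\|\nabla^{(x)}_{\mathbb S^{n-1}}F\|^2_{L^2(\rho)}$; adding the two contributions gives the claimed SGI, and the LSI follows by the same scheme with $\sqrt{\Psi}$ in place of $\Phi$.

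The genuinely delicate point, which I expect to be the main obstacle, is the log-Sobolev gradient estimate, namely bounding $\|\nabla_{\mathbb R^n}\sqrt{\Psi}\|^2_{L^2(\nu_N)}$ without letting $\|F\|_\infty$ enter: applying the one-spin Poincaré inequality directly to $F^2$ produces the inseparable factor $\mathbb E_{\sigma(x)}[F^2\,|\nabla^{(x)}_{\mathbb S^{n-1}}F|^2]$. Instead one writes $\nabla_{\mathbb R^n}\sqrt{\Psi}=\nabla_{\mathbb R^n}\Psi/(2\sqrt{\Psi})$ and, inside each single-site covariance of $F^2$, uses the identity $F^2-\langle F^2\rangle_x=(F-\langle F\rangle_x)(F+\langle F\rangle_x)+(\langle F\rangle_x^2-\langle F^2\rangle_x)$, with $\langle\,\cdot\,\rangle_x$ the conditional $\sigma(x)$-mean; Cauchy--Schwarz peels off the factor $F-\langle F\rangle_x$, to which the one-spin Poincaré inequality applies, while the complementary factor is absorbed into a constant multiple of $\langle F^2\rangle_x$ via $\|\sigma(x)-m\|\le 2$ on the sphere. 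Taking $\mathbb E_{\mu_\varphi}$ and using $\mathbb E_{\mu_\varphi}\langle F^2\rangle_x=\Psi$, a factor $\sqrt{\Psi}$ cancels the $1/(2\sqrt{\Psi})$ and the estimate closes with the correct dependence on $N$ and $\gamma_n$. It then remains only to note that $\sqrt{\Psi}$ is smooth (the measures $\mu_\varphi$ being mutually equivalent, $\Psi>0$ unless $F\equiv 0$), so the chain rule above is legitimate, and to track the numerical constants, with which I have been deliberately wasteful here.
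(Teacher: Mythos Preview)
Your proposal is correct and follows essentially the same architecture as the paper's proof: the entropy/variance decomposition along the disintegration $\rho=\int\mu_\varphi\,d\nu_N$, the covariance identity $\nabla_\varphi\mathbb E_{\mu_\varphi}(F)=\beta\sum_x\operatorname{Cov}_{\mu_\varphi}(F,\sigma(x))$, and Cauchy--Schwarz over $x\in[N]$ to extract the single power of $N$.

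The one place where you deviate is in bounding the individual covariances. The paper estimates each $\operatorname{cov}_{\mu_\varphi}(F,\sigma_x)$ by $2\sqrt{\operatorname{Var}_{\mu_\varphi}(F)}$ with the \emph{full} variance on the product space and then invokes the $N$-body Poincar\'e inequality for $\mu_\varphi$; for the LSI it handles $\operatorname{cov}_{\mu_\varphi}(F^2,\sigma_x)$ by the ``doubling the variables'' trick, again at the level of the full product measure. You instead use the product structure of $\mu_\varphi$ to rewrite $\operatorname{Cov}_{\mu_\varphi}(F,\sigma(x))=\mathbb E_{\mu_\varphi}[\operatorname{Cov}_{\sigma(x)}(F,\sigma(x))]$ and apply the \emph{single-spin} Poincar\'e inequality site by site; your LSI estimate does the analogous algebraic splitting $F^2-\langle F^2\rangle_x=(F-\langle F\rangle_x)(F+\langle F\rangle_x)+\text{const}$ at the single site. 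This is a genuine (if modest) improvement: it keeps the bookkeeping of the $N$-dependence entirely transparent, since after Cauchy--Schwarz you are summing $\sum_x\mathbb E_{\mu_\varphi}[\operatorname{Var}_{\sigma(x)}(F)]$ rather than $N$ copies of the full variance, and it yields slightly sharper numerical constants. The trade-off is that the paper's global bound is a one-liner once one accepts the $N$-body LSI for $\mu_\varphi$, whereas your route requires the extra (easy) observation that under a product law the covariance with a coordinate function localises.
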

\begin{proof}
The proof of the SGI is as follows:
For the SGI we obtain the decomposition
\begin{equation}
\label{eq:SGIstep}
\begin{split}
\operatorname{Var}_{\rho}(F)&=\mathbb E_{\nu_N}( \operatorname{Var}_{\mu_{\varphi}}(F))+ \operatorname{Var}_{\nu_{N}}(\mathbb E_{\mu_{\varphi}}(F)) \\
&\le  \tfrac{1}{\gamma_n} \sum_{x \in [N]} \left\lVert  \nabla_{\mathbb S^{n-1}}^{(x)}F \right\rVert^2_{L^2(d\rho)} + \tfrac{1}{\lambda} \mathbb E_{\nu_N}\left(\left\lvert \nabla_{\varphi} \mathbb E_{\mu_{\varphi}}(F) \right\rvert^2\right).
\end{split}
\end{equation}
To bound the second term in the above estimate, we compute using the Cauchy-Schwarz inequality and the spectral gap inequality for fluctuation measures $\mu_{\varphi}$ on the sphere, defined by \eqref{eq:flucmeasure} such that, see \cite[Theorem $1$, (11)-(15)]{BB18},
\begin{equation}
\begin{split}
\label{eq:equation}
\nabla_{\varphi} \mathbb E_{\mu_{\varphi}}(F) &=N \nabla V(\varphi) \mathbb E_{\mu_{\varphi}}(F) -\beta \sum_{x \in [N]}\mathbb E_{\mu_{\varphi}}(F(\varphi-\sigma_x)).
\end{split}
\end{equation}

We then use that by the explicit expression \eqref{eq:renpot}
\begin{equation}
\begin{split}
 \nabla V(\varphi) =\beta \frac{\int_{\mathbb S^{n-1}} e^{-\frac{\beta}{2} \Vert \varphi-\sigma_x \Vert^2} (\varphi-\sigma_x)dS_{\mathbb S^{n-1}}(\sigma_x)
}{\int_{\mathbb S^{n-1}} e^{-\frac{\beta}{2} \Vert \varphi-\sigma_x \Vert^2} \ dS_{\mathbb S^{n-1}}(\sigma_x)} =\beta( \varphi- \mathbb E_{\mu_{\varphi}}(\sigma_x)).
\end{split}
\end{equation}
Inserting this into \eqref{eq:equation} we find that 
\[ \nabla_{\varphi} \mathbb E_{\mu_{\varphi}}(F)  = \beta  \sum_{x \in [N]}  \mathbb E_{\mu_{\varphi}}(F\sigma_x)- \mathbb E_{\mu_{\varphi}}(F) \mathbb E_{\mu_{\varphi}}(\sigma_x) = \beta \sum_{x \in [N]} \operatorname{cov}_{\mu_{\varphi}}(F, \sigma_x).\]
Thus, we have using Cauchy-Schwarz that 
\begin{equation}
\label{eq:Cauchy-Schwarz}
 \left\lvert \nabla_{\varphi} \mathbb E_{\mu_{\varphi}}(F) \right\rvert^2  \le N  \beta^2 \sum_{x \in [N]} \vert \operatorname{cov}_{\mu_{\varphi}}(F, \sigma_x) \vert^2.
 \end{equation}
We can then use that by Cauchy-Schwarz again 
\begin{equation}
\begin{split}
 \operatorname{cov}_{\mu_{\varphi}}(F, \sigma_x)&=\mathbb E_{\mu_{\varphi}}\left((F-\mathbb E_{\mu_{\varphi}}(F))(\sigma -\mathbb E_{\mu_{\varphi}}(\sigma)) \right) \\
&\le \sqrt{\mathbb E_{\mu_{\varphi}}(F-\mathbb E_{\mu_{\varphi}}(F))^2}  \sqrt{\mathbb E_{\mu_{\varphi}}(F-\mathbb E_{\mu_{\varphi}}(F))^2} \sqrt{\mathbb E_{\mu_{\varphi}}(\sigma-\mathbb E_{\mu_{\varphi}}(\sigma))^2}\\
&\le 2\sqrt{\operatorname{Var}_{\mu_{\varphi}}(F)}.
\end{split}
\end{equation}
Finally, inserting this into 
\eqref{eq:Cauchy-Schwarz} and using the LSI for the fluctuation measure, we find 
\begin{equation}
\begin{split}
 \mathbb E_{\nu_N}\left(\left\lvert \nabla_{\varphi} \mathbb E_{\mu_{\varphi}}(F) \right\rvert^2\right) &\le 4N \beta^2 \mathbb E_{\nu_N}\operatorname{Var}_{\mu_{\varphi}}(F) \\
 &\le \frac{4N \beta^2}{\gamma_n} \mathbb E_{\rho}\sum_{x \in [N]} \left\vert \nabla^{(x)}_{\mathbb S^{n-1}} F \right\rvert^2
 \end{split}
\end{equation}
which after inserting this bound into \eqref{eq:SGIstep} implies the claim. To prove the LSI we follow \cite{BB18} and write \begin{align*} 
\operatorname{Ent}_{\rho}(F^2) &= \mathbb{E}_{\nu_N} \left(  \operatorname{Ent}_{\mu_\varphi}(F^2)  \right) + \operatorname{Ent}_{\nu_N}\left(\mathbb{E}_{\mu_\varphi}( F^2 ) \right)  \\ &\le 
\tfrac{2}{\gamma_n} \sum_{x \in [N]} \left\lVert  \nabla_{\mathbb S^{n-1}}^{(x)}F \right\rVert^2_{L^2(d\rho)} + \tfrac{2}{\lambda} \mathbb E_{\nu_N}\left(  \left\lvert \nabla_{\varphi} \sqrt{ \mathbb E_{\mu_{\varphi}}(F^2) } \right\rvert^2 \right).
\end{align*} 
For the second term we have from applying the Cauchy-Schwarz inequality: \begin{align*} 
\left\lvert \nabla_{\varphi} \sqrt{ \mathbb E_{\mu_{\varphi}}(F^2) } \right\rvert^2  = \beta^2 \left\lvert \frac{\sum_{x \in [N]}  \operatorname{cov}_{\mu_{\varphi}}(F^2, \sigma_x) }{ \sqrt{ \mathbb E_{\mu_{\varphi}}(F^2) } }   \right\rvert^2 \le \beta^2N \frac{\sum_{x \in [N]} \vert \operatorname{cov}_{\mu_{\varphi}}(F^2, \sigma_x)  \vert^2}{\mathbb E_{\mu_{\varphi}}(F^2)}.
\end{align*} 

By \textit{ doubling the variables} $\sigma_x, \sigma_x'$, we write 
\begin{align*} 
\vert \operatorname{cov}_{\mu_\varphi}(F^2(\sigma_x), \sigma_x) \vert &= \frac{1}{2} \left\vert \mathbb{E}_{\mu_\varphi}\Big( (F^2(\sigma_x)-F^2(\sigma_x')  )(\sigma_x-\sigma_x') \Big)  \right\vert \\ &\le \sqrt{\operatorname{Var}_{\mu_\varphi}(F)} \sqrt{  \frac{1}{2} \mathbb{E}_{\mu_\varphi \otimes \mu_\varphi} \left(  ( F(\sigma_x)+F(\sigma_x'))^2(\sigma_x-\sigma_x')^2\right)   } \\ &\le 
\sqrt{\operatorname{Var}_{\mu_\varphi}(F)} \sqrt{8\mathbb{E}_{\mu_\varphi}(F^2)   }
\end{align*} 
where in the last two lines we applied CS inequality and used that $ | \sigma_x - \sigma_x'|\le 2$. Then  $$ \vert \operatorname{cov}_{\mu_\varphi}(F^2, \sigma)  \vert^2 \le 8 \operatorname{Var}_{\mu_\varphi}(F)  \mathbb{E}_{\mu_\varphi}(F^2).$$ 
This gives
\begin{align*}
\left\lvert \nabla_{\varphi} \sqrt{ \mathbb E_{\mu_{\varphi}}(F^2) } \right\rvert^2 \le 8 N \beta^2 \operatorname{Var}_{\mu_\varphi}(F). 
\end{align*}
Overall we have 
 \begin{align*}
\operatorname{Ent}_{\rho}(F^2) &\le \tfrac{2}{\gamma_n} \sum_{x \in [N]} \left\lVert  \nabla_{\mathbb S^{n-1}}^{(x)}F \right\rVert^2_{L^2(d\rho)} + \tfrac{16 \beta^2 N }{\lambda \gamma_n} \sum_{x \in [N]} \left\lVert  \nabla_{\mathbb S^{n-1}}^{(x)}F \right\rVert^2_{L^2(d\rho)} \\ & = \tfrac{2}{\gamma_n}\left(1+ \tfrac{8 \beta^2 N}{\lambda} \right)  \sum_{x \in [N]} \left\lVert  \nabla_{\mathbb S^{n-1}}^{(x)}F \right\rVert^2_{L^2(d\rho)}.
 \end{align*} 
 
\end{proof}

\section{The mean-field Ising model}
\label{sec:Ising}
Without loss of generality, we assume $h \ge 0$ when studying the Ising model.
We define the critical magnetic field strength in the Ising model 
\[ h_{c}(\beta):=\sqrt{\beta(\beta-1)}- \operatorname{arccosh}(\sqrt{\beta})\] for temperatures $\beta\ge 1$ as the supremum of all $h>0$ such that $x = \tanh(\beta x+h)$ has three distinct solutions for $x \in [-1,1].$ In particular $h_c(\beta)$ is monotone with respect to the inverse temperature $\beta$. 

\medskip

The critical magnetic field strength is chosen in such a way that for fields $h<h_c(\beta)$ there are two potential wells in the renormalized potential landscape, see Figure \ref{Figure1}, whereas for $h \ge h_c(\beta)$ there is only one, see Figure \ref{Figure2} in subsection \ref{sec:SGSMF} where this case is discussed.

\subsection{Lower bound on spectral gap in weak field $h< h_c(\beta)$ regime}
We start by showing that the inverse spectral gap in the Ising model in the case of subcritical magnetic fields, i.e. $h<h_c(\beta),$ converges at most exponentially fast to zero as the number of spins, $N$, increases. 

\medskip

We start by showing a LSI with exponential constant for the renormalized measure. This implies by Prop. \ref{theo1} that such an LSI must also hold for the full many-particle measure $d\rho.$
\begin{figure}
\centerline{\includegraphics[height=7cm]{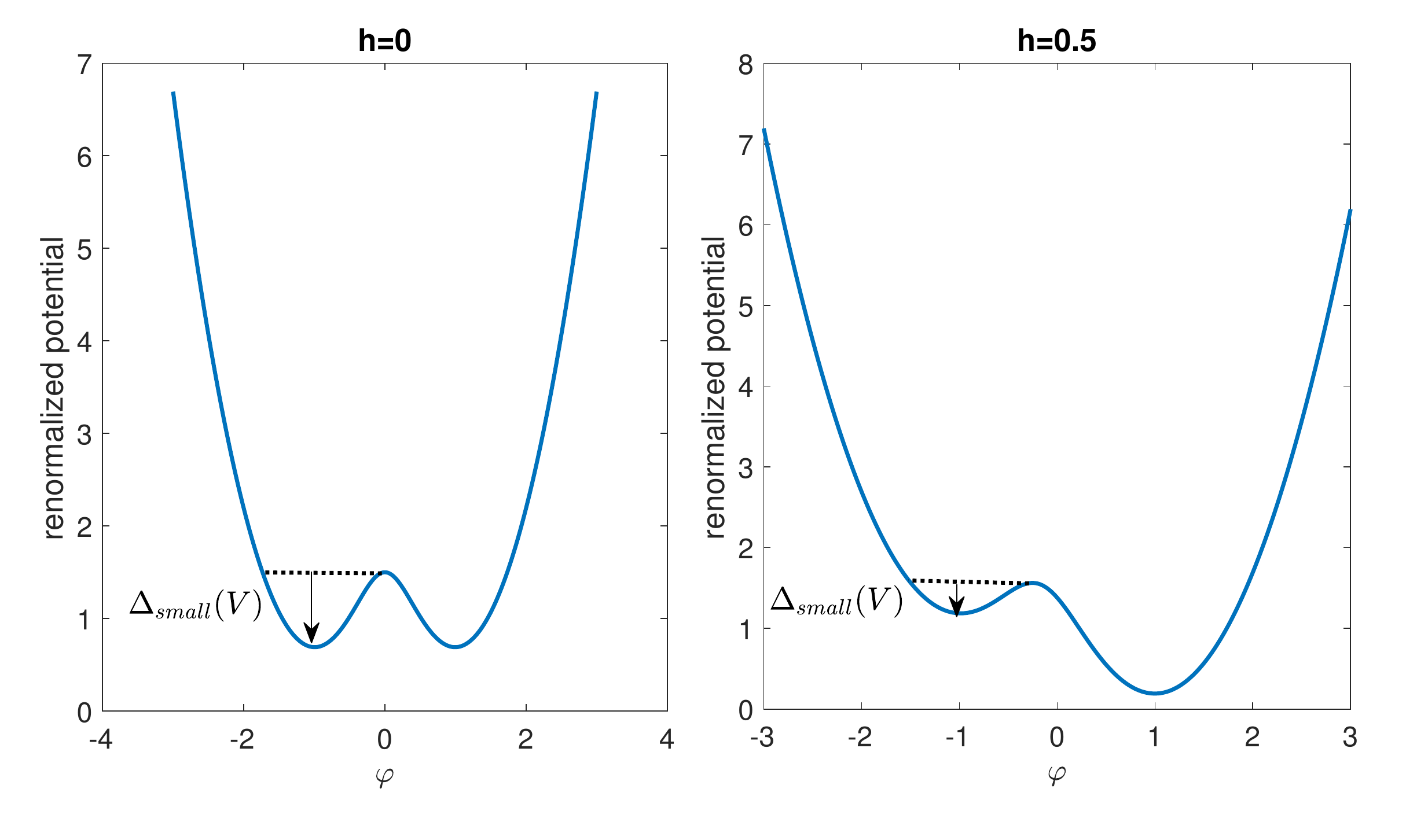}} 
\caption{\emph{Weak magnetic fields:} Renormalized potentials for the Ising model with $\beta=3$ and zero $h=0$ or weak $h=0.5$ magnetic fields form a double well.}
\label{Figure1}
\end{figure}
\begin{prop}[LSI for $\nu_N$]
\label{weakone}
Let $\beta>1$ and $h<h_c(\beta)$ such that $V_1$ is a double well potential where the depth of the smaller well is denoted by $\Delta_{\rm{small}}(V),$ cf. Fig.\@ \ref{Figure1}. The mean-field Ising model satisfies a \emph{LSI}$\left(e^{-N\Delta_{\rm{small}}(V)(1+\smallO(1))}\right)$\footnote{If the magnetic field is zero, i.e. $h=0,$ both wells are of equal size.}
\[ \operatorname{Ent}_{\nu_N}(F^2) \lesssim  e^{N\Delta_{\rm{small}}(V)(1+\smallO(1))} \int_{\mathbb R} \left\lvert F' \right\rvert^2 \ d\nu_N. \]
\end{prop}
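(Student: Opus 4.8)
The plan is to reduce the logarithmic Sobolev inequality for the one-dimensional measure $d\nu_N \propto e^{-NV_1(\varphi)}\,d\varphi$ to the explicit Bobkov--G\"otze characterization \eqref{eq:LSIconstants} quoted in the excerpt, and to estimate the constants $D_0,D_1$ by Laplace-type asymptotics. First I would recall the structure of the double well: since $h<h_c(\beta)$, the potential $V_1$ has exactly two local minima $\varphi_-<\varphi_+$ (the two smallest solutions of $\varphi=\tanh(\beta\varphi+h)$) separated by a local maximum $\varphi_0$, with one well strictly deeper than the other when $h>0$ (equal when $h=0$). Denote the smaller of the two well depths, measured from the barrier top, by $\Delta_{\rm{small}}(V)$; this is exactly $\int_{\gamma_1(\beta)}^{\gamma_2(\beta)}\beta(\varphi-\tanh(\beta\varphi+h))\,d\varphi$ appearing in Theorem \ref{theo:theo1}, since $V_1'(\varphi)=\beta(\varphi-\tanh(\beta\varphi+h))$. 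The key point is that, because the two wells have unequal (or, if $h=0$, equal) mass but are separated by a barrier of height $\Delta_{\rm{small}}(V)\cdot N$, the measure $\nu_N$ concentrates essentially entirely in the deeper well, while the cost of transporting mass across the barrier is $e^{N\Delta_{\rm{small}}(V)}$ up to subexponential corrections.

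Next I would locate a suitable splitting point. When $h>0$ the median $m$ of $\nu_N$ lies (for $N$ large) inside the deeper well, so I would instead invoke Remark \ref{redrem}, which permits splitting the real line at any point $\zeta$ with $\nu_N((-\infty,\zeta]),\nu_N([\zeta,\infty))\in(1/2-\varepsilon,1/2+\varepsilon)$; but here one should be more careful — the natural choice is to take $\zeta=\varphi_0$, the barrier location, and treat the two wells on an equal footing, using the general fact that the LSI constant is, up to absolute constants, $\sup$ over all splitting points of the local $D_0+D_1$ quantities. Concretely, I would bound
\[
D_0+D_1 \;\le\; C\sup_{x}\Big(-\nu_N(J_x)\log\nu_N(J_x)\int_{I_x}\frac{ds}{p(s)}\Big),
\]
where $J_x$ is the tail beyond $x$ and $I_x$ the interval between $x$ and the chosen center, with $p(s)=\nu_N^{(1)}e^{-NV_1(s)}$. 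The dangerous configuration is $x$ on the far side of the barrier from the deeper well: then $\nu_N(J_x)$ is exponentially small — of order $e^{-N\Delta_{\rm{deep}}(V)}$ or $e^{-N\Delta_{\rm{small}}(V)}$ depending on which well $J_x$ sits in — but the inner integral $\int_{I_x}e^{NV_1(s)}\,ds/\nu_N^{(1)}$ picks up a factor $e^{NV_1(\varphi_0)}$ from the barrier, times $\nu_N^{(1)\,-1}\sim e^{NV_1(\varphi_{\rm deep})}\cdot(\text{poly})$. Multiplying the two, the leading exponential rate is precisely $V_1(\varphi_0)-V_1(\varphi_{\rm shallow})=\Delta_{\rm small}(V)$ — the cancellation of the deep-well contributions is what makes the shallow-well depth, not the deep-well depth, the relevant quantity. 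Away from the barrier, inside a single well, the Gaussian-type tails give $D_0+D_1=O(1/N)$, which is negligible compared with the exponential term. I would organize this as: (i) Laplace asymptotics for the normalization $\nu_N^{(1)}$ and for $\nu_N$ of the tails, giving the exact exponential rates; (ii) a uniform bound on $\int ds/p(s)$ over the barrier region via the maximum of $V_1$; (iii) combining to get $D_0+D_1\le e^{N\Delta_{\rm small}(V)(1+\smallO(1))}$; then Bobkov--G\"otze gives $1/k\le K_1(D_0+D_1)$, i.e. the claimed LSI$(e^{-N\Delta_{\rm small}(V)(1+\smallO(1))})$.

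The main obstacle, and the step needing the most care, is making the Laplace asymptotics uniform enough to get a clean $(1+\smallO(1))$ in the exponent rather than merely matching exponential rates: one must control the subexponential (polynomial in $N$) prefactors coming from the Gaussian widths of the wells and from the curvature of $V_1$ at $\varphi_0$, and check that the supremum over the splitting point $x$ is genuinely attained (up to $\smallO(1)$ in the exponent) at or near the barrier, not at some intermediate point where a partial barrier height competes with a partial tail mass. A convexity/monotonicity analysis of $V_1$ on each side of $\varphi_0$ — using $V_1''(\varphi)=\beta(1-\beta\,\mathrm{sech}^2(\beta\varphi+h))$ — should show that the product $-\nu_N(J_x)\log\nu_N(J_x)\int_{I_x}ds/p(s)$ is, on the exponential scale, monotone in $x$ up to the barrier, so that the worst case is exactly $x=\varphi_0$. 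A secondary but routine technicality is the $h=0$ case, where the two wells are symmetric: then the median sits exactly at $\varphi_0=0$, Remark \ref{redrem} is not even needed, and the computation is the cleanest instance of the above, with $\Delta_{\rm small}(V)=\Delta_{\rm deep}(V)$. Finally, passing from the LSI for $\nu_N$ to one for $\rho$ is immediate from Proposition \ref{theo1}, though that is not part of this proposition's statement.
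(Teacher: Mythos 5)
Your overall framework (the Bobkov--G\"otze characterization \eqref{eq:LSIconstants} together with Laplace asymptotics for $D_0,D_1$) is exactly the paper's, and your treatment of $h=0$, where the median sits at the barrier by symmetry, agrees with the paper. The genuine gap is your choice of splitting point when $h>0$. First, Remark \ref{redrem} does not license splitting at the barrier $\varphi_0$: it only allows replacing the median by points $\zeta$ carrying mass in $(1/2-\varepsilon,1/2+\varepsilon)$ on each side, whereas for $h>0$ the shallow side of $\varphi_0$ carries exponentially small $\nu_N$-mass. Second, and more seriously, even if one grants a sufficient criterion at an arbitrary split point, the computation at $\zeta=\varphi_0$ does not produce $\Delta_{\rm small}(V)$. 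Your claim that inside the deep well one only sees Gaussian-type $O(1/N)$ contributions is false for this split: take $x$ just to the left of the global minimum $\varphi_{\min}$. Then $\nu_N([x,\infty))\approx 1/2$, so $-\nu_N([x,\infty))\log\nu_N([x,\infty))=\Theta(1)$, while $\int_{\varphi_0}^{x} ds/p(s)$ contains the barrier point, where $1/p(\varphi_0)=e^{NV_1(\varphi_0)}/\nu_N^{(1)}\asymp e^{N(V_1(\varphi_0)-V_1(\varphi_{\min}))}$ because the normalization is set by the deep well. Hence the deep-well side gives $D_1(\varphi_0)=e^{N\Delta_{\mathrm{deep}}(V)(1+\smallO(1))}$ with $\Delta_{\mathrm{deep}}(V):=V_1(\varphi_0)-V_1(\varphi_{\min})>\Delta_{\rm small}(V)$, so your argument only yields LSI$\bigl(e^{-N\Delta_{\mathrm{deep}}(V)(1+\smallO(1))}\bigr)$ --- strictly weaker than the proposition, and not matching the trial-function upper bound needed for Theorem \ref{theo:theo1}. (Your shallow-side computation at the barrier split does give rate $\Delta_{\rm small}(V)$, but it is not the dominant term; there is also a sign slip, since $\nu_N^{(1)}=(\int e^{-NV_1}\,d\varphi)^{-1}\sim e^{NV_1(\varphi_{\min})}$, so $1/\nu_N^{(1)}\sim e^{-NV_1(\varphi_{\min})}$.)

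The repair is the paper's choice of split point: by Laplace's principle the median converges to the deep minimum $\varphi_{\min}$, so Remark \ref{redrem} allows taking $m=\varphi_{\min}$. With that split, on the right of $m$ the potential is monotone increasing and $D_1$ is subexponential, while on the left the extremal $x$ is the shallow-well minimum, where $-\nu_N((-\infty,x])\log\nu_N((-\infty,x])\asymp e^{-N(V_1(\varphi_{\rm shallow})-V_1(\varphi_{\min}))}$ and $\int_x^{m} ds/p(s)\asymp e^{N(V_1(\varphi_0)-V_1(\varphi_{\min}))}$, whose product has exponential rate exactly $V_1(\varphi_0)-V_1(\varphi_{\rm shallow})=\Delta_{\rm small}(V)$. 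This is precisely the cancellation of deep-well contributions you describe, but it occurs at the split $m=\varphi_{\min}$, not at the barrier; likewise the supremum in $x$ is attained near the shallow minimum, not at $\varphi_0$ as you assert.
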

\begin{proof}
The renormalized potential $V_1$ has on $[0,\infty)$ a global minimum with positive second derivative at some $\varphi_{\text{min}}$ satisfying $\varphi_{\text{min}} = \tanh(\beta \varphi_{\text{min}}+h).$ This follows since the renormalized potential \eqref{eq:renpot} reduces to
\[V_1(\varphi) = \frac{\beta}{2} \varphi^2-\log(\cosh(\beta \varphi+h)) \] 
and the critical points of this potential are easily found to satisfy $\varphi= \tanh(\beta \varphi+h),$ see also \cite[Lemma $1.4.6$]{BBS18}.
For small temperatures, i.e. $\beta \rightarrow \infty,$ one has $\varphi_{\text{min}}=1+\smallO(1).$

\emph{We first consider $h=0:$}
In this case, the median of the renormalized measure is located precisely at $\varphi=0$ and $\varphi_{\text{min}}>0$ is one of the two non-degenerate global minima of the renormalized potential (the other minimum is located at $-\varphi_{\text{min}}$ by axisymmetry).

An application of Laplace's principle, see \cite[Ch.\@ II,Theorem $1$]{W01}, shows that for all $x>0:$
\begin{equation}
\begin{split}
\label{eq:h=0}
&\lim_{N \rightarrow \infty}\tfrac{1}{N} \log\left(-\nu_N( [x,\infty)) \log(\nu_N( [x,\infty))) \int_0^x \frac{e^{N V_1(\varphi)} }{\nu_N^{(1)}}\ d\varphi \right) \\
&=\lim_{N \rightarrow \infty} \tfrac{1}{N}\left( \log \int_{x}^{\infty} e^{-NV_1(\varphi) } \ d\varphi  + \log\left(- \log(\nu_N( [x,\infty)))\right)+ \log\int_0^x e^{N V_1(\varphi )} \ d\varphi   \right) \\
&= -\inf_{t \in [x,\infty)} V_1(t)  +\sup_{t \in [0,x]} V_1(t).
\end{split}
\end{equation}
The supremum of \eqref{eq:h=0} is attained at $x=\varphi_{\text{min}}$ such that 
\[-\inf_{t \in [x, \infty)} V_1(t)  +\sup_{t \in [0,x]} V_1(t)=\Delta_{\rm{small}}(V)>0.\]

Here, we used that for $x> \varphi_{\text{min}}$ we get by Laplace's principle 
\[-\log\left(\nu_N [x,\infty) \right) = N(V(x)-V(\varphi_{\text{min}}))(1+\smallO(1))\] 
and thus $\lim_{N \rightarrow \infty} \tfrac{\log\left(-\log\left(\nu_N [x,\infty) \right)\right)}{N}=0.$ \\
 On the other hand, if $x \in (0,\varphi_{\text{min}})$ then, again by Laplace's principle, $-\log\left(\nu_N [x,\infty) \right) = -\log(\tfrac{1}{2})+\smallO(1)$ and thus $\lim_{N \rightarrow \infty} \tfrac{\log\left(-\log\left(\nu_N [x,\infty) \right)\right)}{N}=0$ as well. The case $x=\varphi_{\text{min}}$ can be treated analogously.
Hence, we obtain for the constant $D_1$ as in \eqref{eq:LSIconstants} 
\begin{equation}
\begin{split}
\label{eq:D1}
D_1&:=\sup_{x>0}\left(-\nu_N( [x,\infty)) \log(\nu_N( [x,\infty))) \int_0^x \frac{e^{N V_1(\varphi)} }{\nu_N^{(1)}} \ d\varphi \right)  \\
&= e^{N\Delta_{\rm{small}}(V)(1+\smallO(1))}.
\end{split}
\end{equation}
The symmetry of the distribution for $h=0$ implies then that $D_0=D_1$.

\emph{We now consider $h>0$:} The renormalized potential possesses a unique global minimum at some $\varphi_{\text{min}}$ and the median of the renormalized measure converges to this point $\varphi_{\text{min}}$, see Fig.\@ \ref{Figure1}, as Laplace's principle implies 
\[\frac{\int_{\varphi_{\text{min}}}^{\infty} e^{-N V_n(\varphi)} \ d\varphi}{\int_{-\infty}^{\infty} e^{-N V_n(\varphi)} \ d\varphi}=\frac{1}{2}+\mathcal O(1/N).\]
Hence, it suffices to verify the {LSI} bounds \eqref{eq:LSIconstants}  for $m=\varphi_{\text{min}}$ as argued in Remark \ref{redrem}.

Arguing as in \eqref{eq:D1} yields for $h>0$ and $x< \varphi_{\text{min}}:$
\begin{equation}
\begin{split}
\label{eq:hneq0}
&\lim_{N \rightarrow \infty}\tfrac{1}{N} \log\left(-\nu_N( (-\infty,x] \log(\nu_N((-\infty,x] )) \int^{m}_x \frac{e^{N V_1(\varphi)} }{\nu_N^{(1)}} \ d\varphi \right) \\
&=\lim_{N \rightarrow \infty} \tfrac{1}{N}\left( \log \int_{-\infty}^{x} e^{-NV_1(\varphi )} \ d\varphi   + \log\left(- \log(\nu_N( [x,\infty))) \right)+ \log\int_x^m e^{N V_1(\varphi )} \ d\varphi  \right) \\
&= -\inf_{t \in (-\infty,x]} V_1(t)  +\sup_{t \in [x,m]} V_1(t)
\end{split}
\end{equation}
which shows $D_0= e^{N\Delta_{\rm{small}}(V)(1+\smallO(1))}$ by taking $x$ to be the minimum of the smaller well of the renormalized potential.
For the constant $D_1$ we get on the other hand for $x>\varphi_{\text{min}}$, since the renormalized potential is monotonically increasing on $[\varphi_{\text{min}},\infty),$ 
\begin{equation}
\begin{split}
&\lim_{N \rightarrow \infty}\tfrac{1}{N} \log\left(-\nu_N( [x,\infty)) \log(\nu_N( [x,\infty)) \int_{m}^x \frac{e^{N V_1(\varphi)} }{\nu_N^{(1)}} \ d\varphi \right) \\
&=\lim_{N \rightarrow \infty} \tfrac{1}{N}\left( \log \int_{x}^{\infty}  e^{-NV_1(\varphi )} \ d\varphi   + \log\int_{m}^x e^{N V_1(\varphi )} \ d\varphi  + \log\left(- \log(\nu_N( [x,\infty)))\right) \right) \\
&= -\inf_{t \in [x,\infty)} V_1(t)  +\sup_{t \in [m,x]} V_1(t) = 0
\end{split}
\end{equation}
such that $D_1$ is negligible compared with $D_0.$
\end{proof}

\subsection{Upper bound on spectral gap in weak field $h<h_c(\beta)$ regime.}
The upper bound on the spectral gap is obtained by finding an explicit trial function saturating the SGI. For this construction, we use the notation and results of Lemma \ref{lem:asym}.

\medskip 

In order to fix ideas first, we assume $h=0$. We start by observing that the mean spin $\overline{\sigma}$ can only take values in the set $\mathcal M:=\left\{-1, -1+2/N,...,1 \right\}.$ The weights of the stationary measure $d\rho$ are given by functions $\eta_N:\mathcal M \rightarrow \mathbb R$
\begin{equation}
\label{eq:etaN}
\eta_N(i) :=\sum_{\sigma;\overline{\sigma}=i} e^{-\beta H(\sigma)}=\binom{N}{N/2(1+i)} e^{-\tfrac{N\beta}{2}(1-i^2)}.
\end{equation}
where we used \eqref{eq:Hamenergy}.

\medskip 

We also introduce trial functions $f_N: \left\{ \pm 1 \right\}^N \rightarrow \mathbb R$ for the spectral gap inequality given by
\begin{equation}
\label{eq:fN}
f_N(\sigma) := \sum_{i \in \mathcal M; 0 \le i \le \overline{\sigma}}\frac{\indic_{\left\{i \le \gamma_3(\beta)\right\}}}{\eta_N(i)}  
\end{equation}
with indicator function $\indic$ and $\gamma_3(\beta)$ is the largest solution to $\varphi = \tanh(\beta \varphi+h).$
Since $f_N$ depends only on the mean spin, we can identify them with functions $g_N:\mathcal M \rightarrow \mathbb R$
\[g_N(m) =\sum_{i \in \mathcal M; 0 \le i \le m}\frac{\indic_{\left\{i \le \gamma_3(\beta)\right\}}}{\eta_N(i)}  
 \text{ such that }f_N(\sigma)=g_N(\overline{\sigma}).  \]
For the $L^2$ norm of the $f_N$ we find
\begin{equation}
\begin{split}
\label{eq:lowerbound}
 \left\lVert f_N \right\rVert_{L^2(d\rho)}^2
 &= \sum_{i \in \mathcal M} \frac{\eta_N(i)}{Z} \vert g_N(i) \vert^2  \ge  \sum_{i \in \mathcal M; i > \gamma_3(\beta)} \frac{\eta_N(i)}{Z} \left(\sum_{j \in \mathcal M; 0 \le j \le \gamma_3(\beta)}\frac{1}{\eta_N(j)}  \right)^2
 \end{split}
 \end{equation}
where $Z$ is the normalization constant of the full measure $d\rho$.
For the gradient of $f_N$ we find \[\left\vert \nabla_{\mathbb S^0}^{(i)} f_N(\sigma) \right\vert^2  = \left\vert g_N(\overline{\sigma})-g_N(\overline{\sigma} \pm 2/N) \right\vert^2 \lesssim \eta_N(\overline{\sigma})^{-2}.\]
Hence, for some $C>0$
\begin{equation}
\label{eq:upperbd}
 \sum_{i \in [N]} \left\lVert \nabla_{\mathbb S^0}^{(i)} f_N\right\rVert_{L^2(d\rho)}^2  \le  \frac{CN}{Z}\sum_{i \in \mathcal M; 0 \le i \le \gamma_3(\beta)} \frac{1}{\eta_N(i)}.
 \end{equation}
Using \eqref{eq:CS} with $\mu(\Omega) = \tfrac{1}{2}$ implies by comparing \eqref{eq:lowerbound} with \eqref{eq:upperbd} that the constant $\gamma$ in the SGI is bounded from below by 
\begin{equation}
\label{eq:ratio}
\frac{1}{NC} \sum_{i \in \mathcal M; i > \gamma_3(\beta)} \eta_N(i) \ \sum_{i \in \mathcal M; 0\le i \le \gamma_3(\beta)} \frac{1}{\eta_N(i)}  \le  \frac{1}{2\gamma}.
\end{equation}
We recall from the discussion in Lemma \ref{lem:asym} that the continuous approximation $\eta_N(i)$ attains its maximum in the limit at $i= \gamma_3(\beta)$ and the summand $\frac{1}{\eta_N(i)}$ in the second sum attains its maximum in the limit at $i= 0$.

Thus it suffices to study the asymptotic of the logarithm of the leading order summands in \eqref{eq:ratio} using the asymptotic behaviour of $\zeta_{N} := \partial_s \log(\eta_{N}(s))$ given in \eqref{eq:logder}
\begin{equation*}
\begin{split}
\log\left(\tfrac{\eta_N(\gamma_3(\beta))}{\eta_N(0)}\right) &= N \int_0^{\gamma_3(\beta)} \zeta_N(s) \ ds= N \int_0^{\gamma_3(\beta)} (\beta s -\operatorname{arctanh}(s)) \ ds \ (1+\smallO(1)) \\
&= N \left(\frac{\beta \gamma_3(\beta)^2}{2} - \int_0^{\gamma_3(\beta)\beta}  \frac{\operatorname{arctanh}(x/\beta)}{\beta} \ dx \right)(1+\smallO(1))\\
&= -N \int_0^{\gamma_3(\beta)} \beta  (x -\operatorname{tanh}(\beta x)) \ dx \ (1+\smallO(1)) \\
&= N \Delta_{\rm{small}}(V)(1+\smallO(1)).
\end{split}
\end{equation*}
Here, we used integration of the inverse function to obtain the last line and \eqref{eq:Ising} in the last one.
In the case of a positive weak magnetic field $h \in (0, h_c(\beta))$ we choose a trial function $f_{N,h}: \left\{ \pm 1 \right\}^N \rightarrow \mathbb R$ given by
\begin{equation}
\begin{split}
\label{eq:fNh}
f_{N,h}(\sigma) &:= \sum_{i \in \mathcal M;  \overline{\sigma}<i<\gamma_3(\beta) }\frac{\indic_{\{i \ge \gamma_1(\beta)\}}}{\eta_{N,h}(i)} \text{
where for }i \in \mathcal M\\
\eta_{N,h}(i) &:= \binom{N}{N/2(1+i)} e^{-\tfrac{N\beta}{2}\left(1-i^2\right)+h N i }.
\end{split}
\end{equation}
Proceeding as above in \eqref{eq:lowerbound} we obtain for the $L^2$ norm the lower bound
\begin{equation}
\begin{split}
\label{eq:lowerbd2}
 \left\lVert f_{N,h} \right\rVert_{L^2(d\rho)}^2 \ge \frac{1}{Z}\sum_{i \in \mathcal M; i < \gamma_1(\beta)} \eta_{N,h}(i) \left( \sum_{j \in \mathcal M;\gamma_3(\beta)> j \ge \gamma_1(\beta)} \frac{1}{\eta_{N,h}(j)} \right)^2.
 \end{split}
 \end{equation}
For the Dirichlet form we find, as for \eqref{eq:upperbd}, for some $C>0$
\begin{equation}
\label{eq:upperbd2}
 \sum_{x \in [N]} \left\lVert \nabla_{\mathbb S^0}^{(x)} f_{N,h} \right\rVert_{L^2(d\rho)}^2  \le  \frac{CN}{Z}  \sum_{j \in \mathcal M;\gamma_3(\beta)> j \ge \gamma_1(\beta)}  \frac{1}{\eta_{N,h}(j)}.
 \end{equation}
We can apply \eqref{eq:CS} with $\mu(\Omega) = \tfrac{1}{1-\varepsilon}$ for some $\varepsilon>0$ since the trial function \eqref{eq:fNh} vanishes to the right of the global maximum such that by comparing \eqref{eq:lowerbd2} with \eqref{eq:upperbd2}  the constant $\gamma$ in the SGI is bounded from below by 
\begin{equation}
\frac{1}{NC}\sum_{i \in \mathcal M; i < \gamma_1(\beta)}\eta_{N,h}(i) \sum_{j \in \mathcal M;\gamma_3(\beta)> j \ge \gamma_1(\beta)}\frac{1}{\eta_{N,h}(j)}  \le  \frac{1}{\varepsilon \gamma}.
\end{equation}
The weight $\eta_{N,h}(i)$ in the first sum attain their maximum (in the limit) at $i=\gamma_1(\beta)$ and the summands $\frac{1}{\eta_{N,h}(i)}$ in the second sum attain their maximum at $i=\gamma_2(\beta)$.

To explicitly state an upper bound on the spectral gap it suffices to study the asymptotic of the logarithm of the leading order summands
\begin{equation*}
\begin{split}
\log\left(\tfrac{\eta_{N,h}(\gamma_1(\beta))}{\eta_{N,h}(\gamma_2(\beta))}\right) &=N \int_{\gamma_2(\beta)} ^{\gamma_1(\beta)} \zeta_{N,h}(s) \ ds=  \int_{\gamma_2(\beta)}^{\gamma_1(\beta)} (\beta s -\operatorname{arctanh}(s)) \ ds \ (1+\smallO(1)) \\
&= N\left(\frac{\beta (\gamma_1(\beta)^2-\gamma_2(\beta)^2)}{2} - \int_{\gamma_2(\beta)\beta}^{\gamma_1(\beta)\beta}  \frac{\operatorname{arctanh}(x/\beta)}{\beta} \ dx \right)(1+\smallO(1))\\
&= -N \int_{\gamma_2(\beta)}^{\gamma_1(\beta)} \beta  (x -\operatorname{tanh}(\beta x)) \ dx \ (1+\smallO(1)) \\
&= N\Delta_{\rm{small}}(V)(1+\smallO(1)).
\end{split}
\end{equation*}

\subsection{Spectral gap in strong magnetic field regime $h> h_{c}(\beta)$.}
\label{sec:SGSMF}
\begin{figure}
\centerline{\includegraphics[height=7cm]{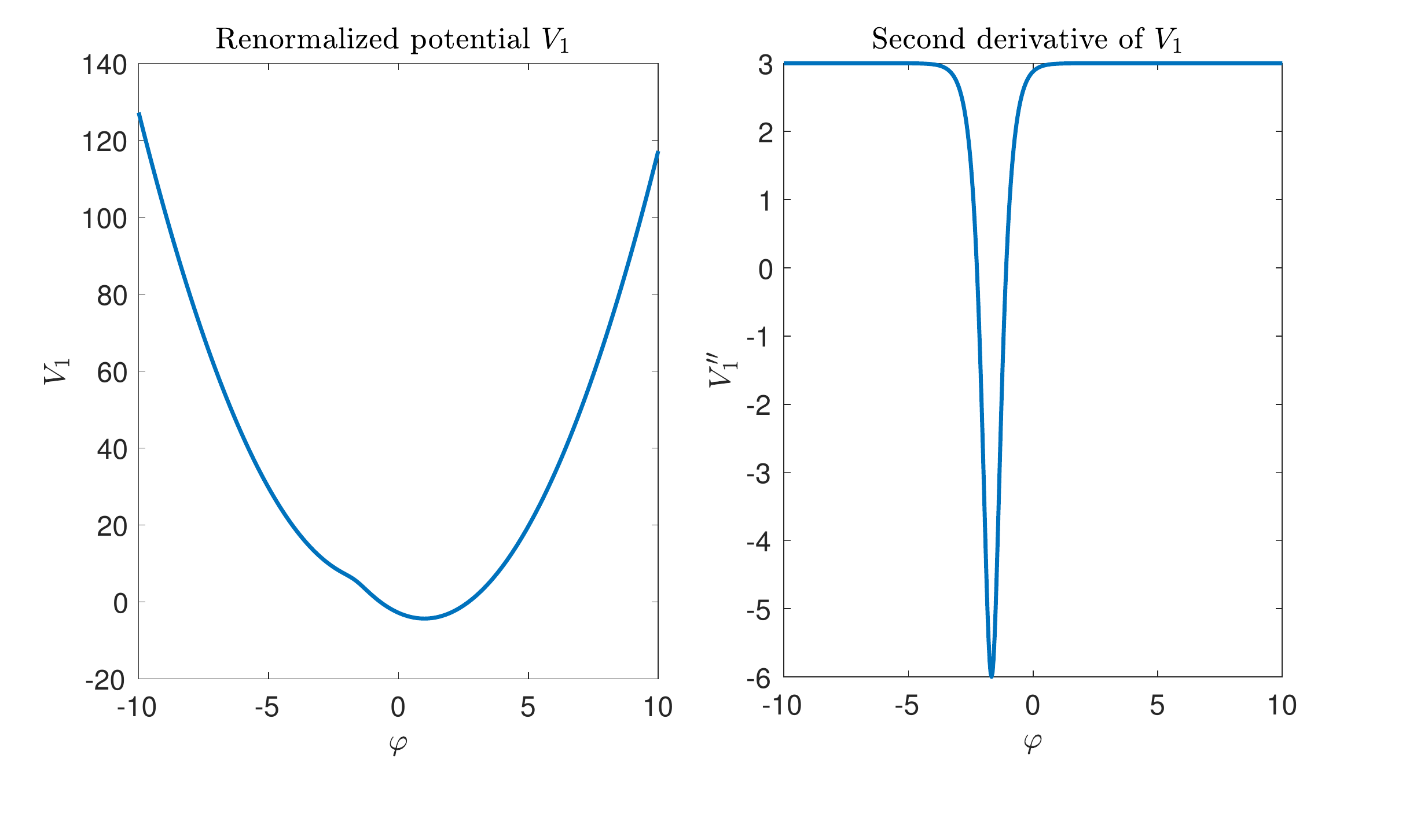}} 
\caption{\emph{Strong magnetic fields:} The renormalized potential and its second derivative for $h=5$ for $\beta=3$. The potential is non-convex even though it is a single well potential. However, it is convex in a neighbourhood of the global minimum.}
\label{Figure2}
\end{figure}
Next, we study the case of strong magnetic fields for the Ising model, that is $V_1'$ has at most one root, for $\beta >1$. We also include the case $\beta=1$ and $h \neq 0$.
Unlike in the case of weak magnetic fields, in which case the constant in the {LSI} for the renormalized measure is exponentially increasing in the number of spins, the spectral gap of the renormalized measure is now linearly increasing in the number of spins. Responsible for this uniform gap is the local uniform convexity at the minimum of the renormalized potential. More precisely, we have
\begin{equation}
\begin{split}
V_1'(\varphi) =\beta (\varphi-\operatorname{tanh}(\beta \varphi +h)) \text{ and } V_1''(\varphi) = \beta(1 - \beta \operatorname{sech}(\beta \varphi +h)^2).
\end{split}
\end{equation}
Thus, $V_1''(\varphi)=0$ yields $\varphi_{\pm}= \frac{-h \pm \operatorname{arccosh}(\sqrt{\beta})}{\beta}.$ Inserting this into $V_1'(\varphi_{\pm})=0$ implies that $h_{\pm}=\pm \operatorname{arccosh}(\sqrt{\beta}) \mp \sqrt{\beta(\beta-1)}$ with sign $\sgn(h_{\pm})=\mp 1$ and thus $\varphi_{\pm}= \pm \sqrt{\frac{\beta-1}{\beta}}.$
In particular, in the subcritical regime $\beta>1$ all global minima $\varphi_{*}$ have sign $\sgn(\varphi_{*})=\sgn(h)$, such that the renormalized potential satisfies $V_1''(\varphi_{*})>0.$ Moreover, for $\beta =1$ and $h \neq 0$ there are no points at which both the first and second derivative vanish. The third derivative at this point however is always non-zero and given by 
\[ V_1^{(3)}(\varphi_{\pm}) = \mp 2 \sqrt{\beta(\beta-1)} \beta. \]
\begin{prop}[Ising model, strong field]
\label{strongfields}
Let $\beta\ge 1$ and $h> h_c(\beta)$, i.e. $V_1$ is a single well potential. We obtain for the Ising model a \emph{SGI$(\gamma)$}
\[ \operatorname{Var}_{\nu_N}(F) \le  \tfrac{1}{\gamma} \int_{\mathbb R} \left\lvert F' \right\rvert^2 \ d\nu_N \] 
where $\frac{1}{\gamma}$ is uniformly bounded in $N.$ 
\end{prop}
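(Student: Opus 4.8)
The plan is to verify that the renormalized measure $\nu_N$ satisfies the Muckenhoupt condition with $B=\max(B_0,B_1)=\mathcal O(1/N)$, where $B_0,B_1$ are as in \eqref{eq:SGIconstants}; by \eqref{eq:SGI} this gives an \emph{SGI$(\gamma)$} with $1/\gamma\le 4B$, which is then uniformly bounded in $N$ (and in fact vanishes). Recall that for $h>h_c(\beta)$ the single-well potential $V_1$ has a unique critical point, its global minimum $\varphi_{\min}$ (with $\sgn\varphi_{\min}=\sgn h$); being of quadratic growth with $V_1'\to\pm\infty$, $V_1$ is strictly decreasing on $(-\infty,\varphi_{\min}]$ and strictly increasing on $[\varphi_{\min},\infty)$, and $V_1''(\varphi_{\min})>0$. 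Normalise $V_1(\varphi_{\min})=0$, so $V_1\ge0$. Since $\nu_N([\varphi_{\min},\infty))=\tfrac12+\smallO(1)$ by Laplace's principle, Remark \ref{redrem} lets us compute $B_0,B_1$ by splitting at $\zeta=\varphi_{\min}$ instead of at the median.

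The first step is the change of variables $t=V_1(s)$, which is legitimate separately on $(-\infty,\varphi_{\min}]$ and on $[\varphi_{\min},\infty)$ because $V_1$ is strictly monotone there. After the normalisation constant $\nu_N^{(1)}$ cancels, the quantity inside the supremum defining $B_1$ equals, for $x>\varphi_{\min}$ and $a:=V_1(x)$,
\[ \nu_N([x,\infty))\int_{\varphi_{\min}}^{x}\frac{ds}{p(s)}=\Big(\int_a^{\infty}e^{-Nt}g(t)\,dt\Big)\Big(\int_0^{a}e^{Nt}g(t)\,dt\Big), \]
with $g(t):=1/\big|V_1'\big(V_1^{-1}(t)\big)\big|$ on the relevant monotonicity branch, and an entirely analogous identity holds for $B_0$. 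The analytic core is the bound $g(t)\le Ct^{-1/2}$ for all $t>0$, equivalently $V_1'(s)^2\ge c\,V_1(s)$ for all $s$. This holds because $V_1'(s)^2/V_1(s)$ is continuous and strictly positive on $\mathbb R\setminus\{\varphi_{\min}\}$, extends continuously to $2V_1''(\varphi_{\min})>0$ at $\varphi_{\min}$ (both sides behave like a quadratic there), and tends to $2\beta>0$ as $s\to\pm\infty$ by the quadratic growth; hence its infimum over $\mathbb R$ is positive.

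Granting this bound, the rescaling $u=Nt$, $b=Na$ gives
\[ B_1\le C^2\sup_{a>0}\Big(\int_a^{\infty}e^{-Nt}t^{-1/2}\,dt\Big)\Big(\int_0^{a}e^{Nt}t^{-1/2}\,dt\Big)=\frac{C^2}{N}\sup_{b>0}\Big(\int_b^{\infty}e^{-u}u^{-1/2}\,du\Big)\Big(\int_0^{b}e^{u}u^{-1/2}\,du\Big), \]
and the last supremum is a finite absolute constant: the product vanishes as $b\to0$ (the second factor does) and behaves like $b^{-1}$ as $b\to\infty$, and is continuous in between. The same estimate bounds $B_0$, so $B=\mathcal O(1/N)$ and the claim follows from \eqref{eq:SGI}; for the finitely many small $N$ not covered by these asymptotics one only needs that any probability measure on $\mathbb R$ with a quadratically growing potential has a finite \emph{SGI} constant, so that $\sup_N 1/\gamma_N<\infty$. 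In fact the computation shows that the spectral gap of $\nu_N$ grows linearly in $N$, which reflects the local uniform convexity of $V_1$ at its minimum, as anticipated.

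The main obstacle is the inequality $V_1'(s)^2\gtrsim V_1(s)$ (equivalently $g(t)\lesssim t^{-1/2}$): this is precisely where the strong-field hypothesis $h>h_c(\beta)$ enters, through the facts that $V_1'$ has no zero away from $\varphi_{\min}$ and that $V_1''(\varphi_{\min})>0$. It must also be checked in the borderline case $\beta=1$, $h\ne0$, where $V_1$ is convex but $V_1''$ vanishes at $s=-h$: since $\varphi_{\min}$ has the sign of $h$ one still has $V_1''(\varphi_{\min})=\tanh^2(\beta\varphi_{\min}+h)>0$, so the argument goes through unchanged.
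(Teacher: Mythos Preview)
Your proof is correct and takes a genuinely different route from the paper. The paper proceeds by conjugating $L_{\text{ren}}$ to the Schr\"odinger operator $\Delta_{\text{ren}}=-\partial_\varphi^2+\tfrac{N^2}{4}|V_1'|^2-\tfrac{N}{2}V_1''$ and then invokes the semiclassical eigenvalue asymptotics of Simon \cite{S} at the unique non-degenerate well; this immediately yields a spectral gap growing linearly in $N$. You instead stay entirely within the one-dimensional Muckenhoupt framework \eqref{eq:SGIconstants}--\eqref{eq:SGI} already set up in Section~\ref{sec:renormalized}: the change of variables $t=V_1(s)$ and the rescaling $u=Nt$ reduce the problem to the pointwise inequality $V_1'(s)^2\gtrsim V_1(s)$, which you correctly derive from the single-well structure, the non-degeneracy $V_1''(\varphi_{\min})>0$, and the quadratic growth at infinity. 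Both arguments deliver the same $\mathcal O(1/N)$ bound on the SGI constant. Your approach is more elementary and self-contained (no black-box semiclassical input), and it makes explicit exactly where the strong-field hypothesis $h>h_c(\beta)$ is used; on the other hand it is intrinsically one-dimensional, whereas the Schr\"odinger-operator viewpoint is what the paper relies on for the multi-component models in Section~\ref{sec:High-dim}.
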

\begin{proof}
Since the renormalized Schr\"odinger operator and renormalized generator are unitarily equivalent up to a factor, see \eqref{eq:renWitt}, the semiclassical eigenvalue distribution stated in \cite[Theo. $1.1$]{S} implies the statement of the Proposition: 

\medskip

It follows immediately from the renormalized Schr\"odinger operator \eqref{eq:renWitt}
\begin{equation}
\Delta_{\text{ren}} =- \tfrac{d^2}{d\varphi^2}+\tfrac{N^2}{4}\vert V_1'(\varphi)\vert^2- \tfrac{N}{2} V_1''(\varphi).
\end{equation}
that the low-lying eigenfunction of $\Delta_{\text{ren}}$ accumulate at the unique non-degenerate (the second derivative is non-zero) potential well and the spectral gap of the renormalized measure grows linearly in $N.$ The result then follows from Prop. \ref{theo1}.
\end{proof}

\subsection{Critical magnetic fields in $n=1$}

\begin{figure}
\centerline{\includegraphics[height=7cm]{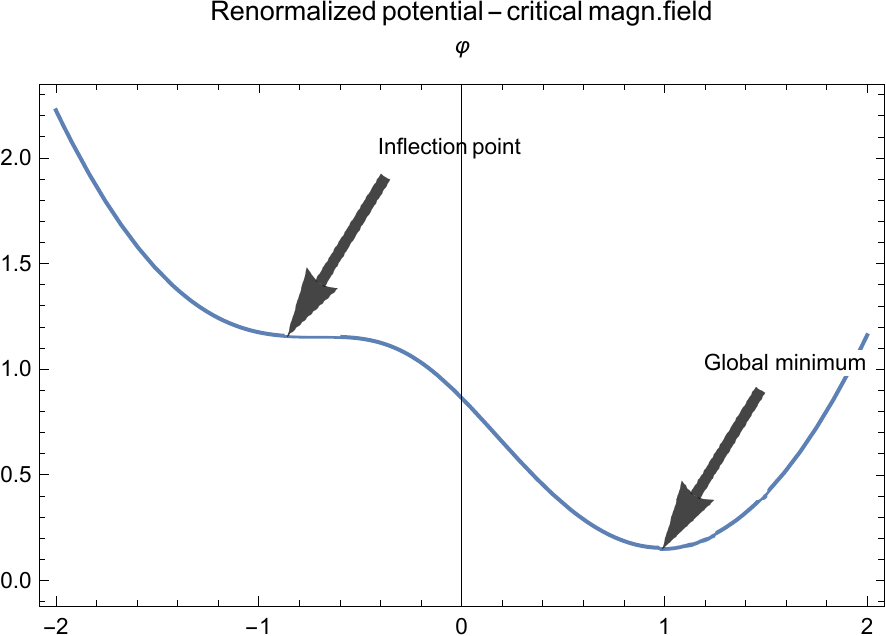}} 
\caption{\emph{Critical magnetic field for $n=1$:} Renormalized potential of the Ising model with $\beta=2$ possesses two critical points, one inflection point and a global minimum.}
\label{fig:crit}
\end{figure}
\begin{prop}
Let $h=h_{\pm}$ and $\beta > 1.$ The spectral gap of the radial renormalized Schr\"odinger operator grows as $\Theta( N^{2/3})$ and in particular, the spectral gap of the full measure does not close faster than $\Theta( N^{-1/3}).$
\end{prop}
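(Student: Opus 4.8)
The plan is to reduce the statement to a semiclassical spectral-spacing estimate for the renormalized Schr\"odinger operator $\Delta_{\mathrm{ren}} = -\tfrac{d^2}{d\varphi^2} + \tfrac{N^2}{4}|V_1'(\varphi)|^2 - \tfrac N2 V_1''(\varphi)$ at a \emph{degenerate} critical point, and then transfer the result to the spectral gap of $\nu_N$ and hence to $d\rho$ via \eqref{eq:renWitt} and Proposition \ref{theo1}. The key structural fact, recorded just before the statement, is that at $h = h_\pm$ the global minimum $\varphi_*$ of $V_1$ coincides with an inflection point: $V_1'(\varphi_*) = V_1''(\varphi_*) = 0$ but $V_1^{(3)}(\varphi_*) = \mp 2\sqrt{\beta(\beta-1)}\,\beta \neq 0$ (see Figure \ref{fig:crit}). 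Since the minimum is unique and $V_1$ grows quadratically at infinity, the low-lying spectrum of $\Delta_{\mathrm{ren}}$ is governed entirely by the local behaviour near $\varphi_*$.

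The main steps I would carry out are as follows. First, Taylor expand near $\varphi_*$: writing $V_1'(\varphi) = \tfrac{a}{2}(\varphi-\varphi_*)^2 + O((\varphi-\varphi_*)^3)$ with $a = V_1^{(3)}(\varphi_*) \neq 0$, the potential term is $\tfrac{N^2}{4}|V_1'|^2 \approx \tfrac{N^2 a^2}{16}(\varphi-\varphi_*)^4$, a \emph{quartic} well, while the subprincipal term $-\tfrac N2 V_1''(\varphi) \approx -\tfrac N2 a(\varphi-\varphi_*)$ is linear of size $N$. Second, introduce the natural semiclassical rescaling that balances the quartic potential $N^2 x^4$ against the kinetic term: setting $\varphi - \varphi_* = N^{-1/3} y$ turns $-\partial_\varphi^2 + \tfrac{N^2 a^2}{16}(\varphi-\varphi_*)^4$ into $N^{2/3}\big(-\partial_y^2 + \tfrac{a^2}{16} y^4\big)$, and the subprincipal term becomes $-\tfrac N2 a \cdot N^{-1/3} y = -\tfrac a2 N^{2/3} y$, i.e. of the \emph{same} order $N^{2/3}$. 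Third, conclude that after dividing by $N^{2/3}$ the operator converges (in norm resolvent sense, using the confining tails and the error control on the Taylor remainder) to the fixed anharmonic operator $-\partial_y^2 + \tfrac{a^2}{16}y^4 - \tfrac a2 y$ on $L^2(\mathbb R)$, which has discrete simple spectrum $0 = \mu_0 < \mu_1 < \mu_2 < \cdots$ (the ground state is $\mu_0 = 0$ because $\Delta_{\mathrm{ren}}$ annihilates $e^{-NV_1/2}$). Hence the spectral gap of $\Delta_{\mathrm{ren}}$ is $(\mu_1 - \mu_0)N^{2/3}(1 + \smallO(1)) = \Theta(N^{2/3})$. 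Fourth, by \eqref{eq:renWitt} the renormalized generator $L_{\mathrm{ren}}$ has the same nonzero spectrum up to the unitary conjugation, so $\nu_N$ satisfies SGI$(\Theta(N^{2/3}))$; plugging $\lambda = \Theta(N^{2/3})$ into Proposition \ref{theo1} gives that $\rho$ satisfies SGI with constant $\gamma_1^{-1}(1 + 4N\beta^2/\lambda) = \Theta(N^{1/3})$ on the inverse, i.e. the gap of the full generator is $\Theta(N^{-1/3})$; combined with a matching lower bound on $1/\gamma$ (or the one-sidedness ``does not close faster than'') the claim follows.

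For the lower bound direction on the full-measure gap (that it is \emph{at least} $\Theta(N^{-1/3})$ up to the $\mathcal O(N)$ loss in Proposition \ref{theo1}), the SGI$(\Theta(N^{2/3}))$ for $\nu_N$ established above is exactly what is needed, so only the convergence of the rescaled operator must be done carefully; the phrase ``radial'' is vacuous for $n=1$ and can be ignored. The hard part will be step three: establishing genuine spectral convergence of $N^{-2/3}\Delta_{\mathrm{ren}}$ to the limiting quartic oscillator with enough uniformity to extract the \emph{gap} (not just the ground-state energy). The obstacles are (i) controlling the cubic-and-higher Taylor remainder of $V_1'$ in the rescaled variable — it contributes terms like $N^{2/3}\cdot N^{-1/3}y^3 \cdot (\ldots)$ that are formally lower order but must be dominated by the confining $y^4$ tail via an Agmon-type/IMS localization argument — and (ii) handling the region away from $\varphi_*$, where $|V_1'|^2$ is bounded below, so the potential is $\gtrsim N^2$ there and these states cost energy $\gg N^{2/3}$ and cannot interfere with the low-lying part. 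Both are standard in semiclassical analysis (cf. \cite{S,Helffer1,Helffer2}), so the role of this proof is mainly to identify the correct $N^{1/3}$ scaling dictated by the cubic vanishing of $V_1'$; I would present it as an application of the harmonic-approximation machinery with the harmonic oscillator replaced by the quartic one.
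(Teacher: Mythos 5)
There is a genuine gap, and it sits at the very first structural claim of your argument: at $h=h_{\pm}$ the degenerate point is \emph{not} the global minimum. A point $\varphi$ with $V_1'(\varphi)=V_1''(\varphi)=0$ and $V_1^{(3)}(\varphi)\neq 0$ cannot be a local minimum at all (the potential behaves like a cubic there and changes sign), so your premise is internally inconsistent. What actually happens at the critical field, as recorded in Section \ref{sec:SGSMF} and Figure \ref{fig:crit}, is that $V_1$ has \emph{two} critical points: a non-degenerate global minimum $\varphi_*$ with $V_1''(\varphi_*)>0$ (whose sign agrees with $h$), and a separate inflection point $\varphi_{\pm}$ (of opposite sign) where the shallow well and the barrier have just merged, with $V_1^{(3)}(\varphi_{\pm})=\mp 2\sqrt{\beta(\beta-1)}\,\beta\neq 0$. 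This error propagates: your third step asserts that the rescaled limit $-\partial_y^2+\tfrac{a^2}{16}y^4-\tfrac{a}{2}y$ has ground state $\mu_0=0$ ``because $\Delta_{\mathrm{ren}}$ annihilates $e^{-NV_1/2}$,'' but that operator factors as $A^*A$ with $A=\partial_y+\tfrac{a}{4}y^2$, whose formal zero mode $e^{-ay^3/12}$ is \emph{not} in $L^2(\mathbb R)$; hence its bottom of spectrum is strictly positive (this is exactly the SUSY argument of Section \ref{sec:SUSYQM} applied to $S_{\varphi_\pm}$ in \eqref{eq:S}). The true zero mode $e^{-NV_1/2}$ concentrates at the non-degenerate minimum $\varphi_*$, not at $\varphi_{\pm}$, so a single rescaling by $N^{-1/3}$ around one point cannot converge (in any resolvent sense) to an operator capturing both the eigenvalue $0$ and the first excited level; your claimed norm-resolvent convergence of $N^{-2/3}\Delta_{\mathrm{ren}}$ to the quartic model is false as stated.

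The paper's proof repairs precisely this by a two-region IMS localization: a harmonic model $H_{\varphi_*}$ at the non-degenerate minimum, rescaled by $\lambda^{-1/2}$ and carrying the zero ground state with level spacing $\Theta(\lambda)$, together with the quartic-plus-linear model $H_{\varphi_{\pm}}$ at the inflection point, rescaled by $\lambda^{-1/3}$, whose lowest eigenvalue $f_1>0$ sits at scale $\lambda^{2/3}$; away from both points $|V_1'|^2$ is bounded below and those states cost energy $\gg\lambda^{2/3}$. The gap is then $\Theta(N^{2/3})$ because the inflection-point scale $\lambda^{2/3}$ lies below the harmonic spacing $\lambda$ — a different mechanism from the one you describe, even though the exponent you guessed is the same. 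Your final transfer step via \eqref{eq:renWitt} and Proposition \ref{theo1} is fine once the correct SGI$(\Theta(N^{2/3}))$ for $\nu_N$ is in hand, but as written the core spectral estimate does not go through.
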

\begin{proof}
Let $\lambda:= N/2$ and consider the Schr\"odinger operator, defined in \eqref{eq:renWitt},
\begin{equation}
\begin{split}
\label{eq:Ham1}
\mathcal H&:=-\partial_x^2 + \lambda^2 \vert V'_1(x) \vert^2-\lambda V_1''(x)
\end{split}
\end{equation}
for the renormalized potential and auxiliary Schr\"odinger operators, which are obtained as the Taylor expansion of \eqref{eq:Ham1}
\begin{equation}
\begin{split}
\label{eq:Hamiltonian3}
H_{\varphi_*}&=-\partial_{x}^2 + \lambda^2 \vert V_1'(\varphi_{*}) \vert^2 (x-\varphi_{*})^2 - \lambda V_1''(\varphi_{*}) \text{ and } \\
H_{\varphi_{\pm}}&=-\partial_{x}^2 + \lambda^2 \beta^3(\beta-1) (x-\varphi_{\pm})^4 \pm  \lambda  2\sqrt{\beta(\beta-1)} \beta (x-\varphi_{\pm})
\end{split}
\end{equation}
on $L^2(\RR)$ localized to the two critical points, the inflection point $\varphi_{\pm}$ and the global minimum $\varphi_{*}.$
We then define $j \in C_c^{\infty}((-2,2);[0,1])$ such that $j(x)=1$ for $\left\lvert x\right\rvert \le 1$ and from this functions
\begin{equation}
\begin{split}
\label{eq:property2}
J_{\varphi_{*} }(x) &:= j(\lambda^{2/5}\vert x-\varphi_{*} \vert ), \ J_{\varphi_{\pm}}(x) := j(\lambda^{3/10}\vert x-\varphi_{\pm} \vert )  \text{ and }\\
&J(x) := \sqrt{1-J_{\varphi_{\pm}}(x)^2-J_{\varphi_{*} }(x)^2} \text{ with }\\
&\left\lVert \nabla  J_{\varphi_{*} } \right\rVert_{\mathbb R^n}^2=\mathcal O(\lambda^{4/5}),\left\lVert \nabla J_{\varphi_{\pm}} \right\rVert_{\RR^n}^2 = \mathcal O(\lambda^{3/5}).
\end{split}
\end{equation}
Invoking then unitary maps $U_{\varphi_{*} }, U_{\varphi_{\pm}} \in \mathcal L(L^2(\mathbb R))$ defined as 
\begin{equation}
\begin{split}
 (U_{\varphi_{*} }f)(x) := \lambda^{-1/4}f(\lambda^{-1/2}(x+\varphi_{*})) \text{ and }(U_{\varphi_{\pm} }f)(x) := \lambda^{-1/6}f(\lambda^{-1/3}(x+\varphi_{\pm})) 
\end{split}
\end{equation}
shows that the two Schr\"odinger operators in \eqref{eq:Hamiltonian3} are in fact unitarily equivalent, up to multiplication by powers of $\lambda$, to the $\lambda$-independent Schr\"odinger operators
\begin{equation}
\begin{split}
\label{eq:S}
S_{\varphi_*}&=- \partial_x^2+\vert V_1'(\varphi_{*}) \vert^2 x^2 -V_1''(\varphi_{*}) \\
S_{\varphi_{\pm}} &= - \partial_x^2+\beta^3(\beta-1) x^4 \pm 2\sqrt{\beta(\beta-1)} \beta x,
\end{split}
\end{equation}
respectively. Both operators have discrete spectrum and that $\inf(\Spec(S_{\varphi_{\pm}}))>0$ is shown in Section \ref{sec:SUSYQM}. We illustrate the behaviour of the smallest eigenvalues of $S_{\varphi_{\pm}}$ in Figure \ref{fig:Sop}. More precisely, we have that
\begin{equation}
\begin{split}
\label{eq:uniteqv2}
\lambda  U_{\varphi_*}^{-1}\ S_{\varphi_*} U_{\varphi_*}&= H_{\varphi_*}\text{ and }
\lambda^{2/3}  U_{\varphi_{\pm}}^{-1}\ S_{\varphi_{\pm}} U_{\varphi_{\pm}}= H_{\varphi_{\pm}}.
\end{split}
\end{equation}

Taylor expansion of the potential at the respective critical point and the estimate on the gradient \eqref{eq:property2} imply that  
\begin{equation}
\begin{split}
\left\lvert J_{\varphi_{*} }(\mathcal H-H_{\varphi_{*} })J_{\varphi_{*} } \right\rvert &= \mathcal O(\lambda^{4/5}) \text{ and also }\left\lvert J_{\varphi_{\pm} }(\mathcal H-H_{\varphi_{\pm} })J_{\varphi_{\pm} } \right\rvert = \mathcal O(\lambda^{3/5}). 
\end{split}
\end{equation}
Let $0=e_1<e_2\le..$ be the eigenvalues (counting multiplicities) of $S_{\varphi_{*}}$ and $0<f_1\le f_2\le...$ the ones of $S_{\varphi_{\pm}}$  and choose $\tau$ such that $\lambda e_{n+1}>\tau >\lambda e_n$ and $\lambda^{2/3} f_{m+1}>\tau >\lambda^{2/3} f_m$ with $P_i$ being the projection onto the eigenspace to all eigenvalues of $S_i$ below $\tau.$
\begin{figure}
\centerline{\includegraphics[height=7cm]{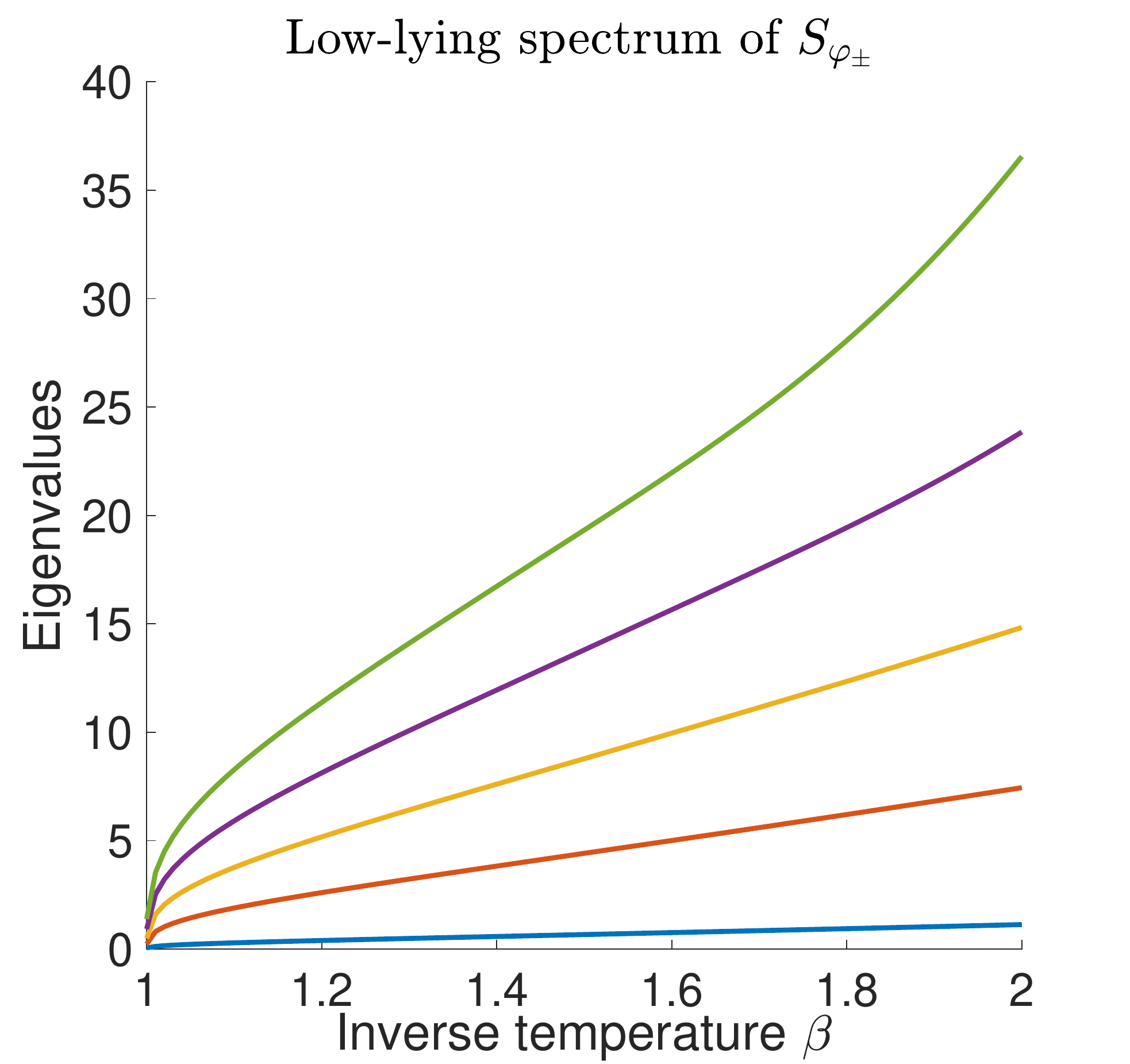}} 
\caption{The five smallest eigenvalues of the operator $S_{\varphi_{\pm}}$ as a function of $\beta.$ The smallest eigenvalue is strictly positive.}
\label{fig:Sop}
\end{figure}
The IMS formula, see \cite[(11.37)]{CFKS} for a version on manifolds, implies that  
\begin{equation}
\label{eq:usefulrepresent3}
\mathcal H = J \mathcal H J -\vert \nabla J \vert^2+\sum_{i \in \{\varphi_{*}, \varphi_{\pm}\}}\left(J_{i} H_i J_{i} + J_{i} (\mathcal H-H_i) J_{i}-  \vert \nabla J_{i} \vert^2 \right).
\end{equation}
On the other hand, it follows that 
\begin{equation*} 
\begin{split}
J_{\varphi_{*}} H_{\varphi_{*}} J_{\varphi_{*}}
&= J_{\varphi_{*}} H_{\varphi_{*}} P_{\varphi_{*}} J_{\varphi_{*}} + J_{\varphi_{*}} H_{\varphi_{*}} (\operatorname{id}-P_{\varphi_{*}}) J_{\varphi_{*}} \\
&\ge J_{\varphi_{*}} H_{\varphi_{*}}P_{\varphi_{*}}J_{\varphi_{*}} + \lambda^{} e_n J_{\varphi_{*}}^2\end{split}
\end{equation*}
and also 
\begin{equation*} 
\begin{split}
J_{\varphi_{\pm} } H_{\varphi_{\pm} } J_{\varphi_{\pm} }
&= J_{\varphi_{\pm} } H_{\varphi_{\pm} } P_{\varphi_{\pm} } J_{\varphi_{\pm} } + J_{\varphi_{\pm} } H_{\varphi_{\pm} } (\operatorname{id}-P_{\varphi_{\pm} }) J_{\varphi_{\pm} } \\
&\ge J_{\varphi_{\pm} } H_{\varphi_{\pm} }P_{\varphi_{\pm} }J_{\varphi_{\pm} } + \lambda^{2/3} f_m J_{\varphi_{\pm} }^2.
\end{split}
\end{equation*}
In particular, we find 
\[ \Vert  V_1' \Vert^2_{\mathbb R^n} \ge c\lambda^{-6/5} \text{ on } J\text{ for some }c>0.\] 
and 
\[ \Vert V_1'' \Vert_{\mathbb R^n} \ge c \lambda^{-3/10}  \text{ on } J\text{ for some }c>0.\] 
This implies for large $\lambda$ that
\begin{equation}
\label{eq:support2}
 J H J \ge \lambda^{2/3} f_m J^2.
 \end{equation}
From \eqref{eq:usefulrepresent3} we then conclude that for some $C>0$
\[ \mathcal H \ge \lambda^{2/3} f_m J^2 - C \lambda^{4/9} +\sum_{i \in \left\{\varphi_{\pm},\varphi_{*} \right\}} J_i H_i P_i J_i = \lambda^{2/3} f_m+ \sum_{i \in \left\{\varphi_{\pm},\varphi_{*} \right\}} J_i H_i P_i J_i -o(\sqrt{\lambda}). \]
This implies the claim of the Proposition, since \[\operatorname{rank} \left(J_{0} H_{\pm}PJ_{0} \right) \le n.\] More precisely, for the eigenvalues $E_1(\lambda)\le E_2(\lambda)\le..$ of $\mathcal H$ we have shown that 
\[\liminf_{\lambda \rightarrow \infty} \lambda^{-2/3}E_n(\lambda)\ge f_{n-1}>0 \text{ for n} \ge 2 .\] 
In particular, the lowest possible eigenvalue $e_1=0$ of the renormalized Schr\"odinger operator is of course attained as the nullspace of the renormalized Schr\"odinger operator $H_{\varphi_{*}}$ is non-trivial.
This shows that the spectral gap of the renormalized Schr\"odinger operator grows at least proportional to $\lambda^{2/3}.$ 
\end{proof}

\section{Multi-component $\mathcal O(n)$-models}
\label{sec:High-dim}
\begin{figure}
\centerline{\includegraphics[height=7cm]{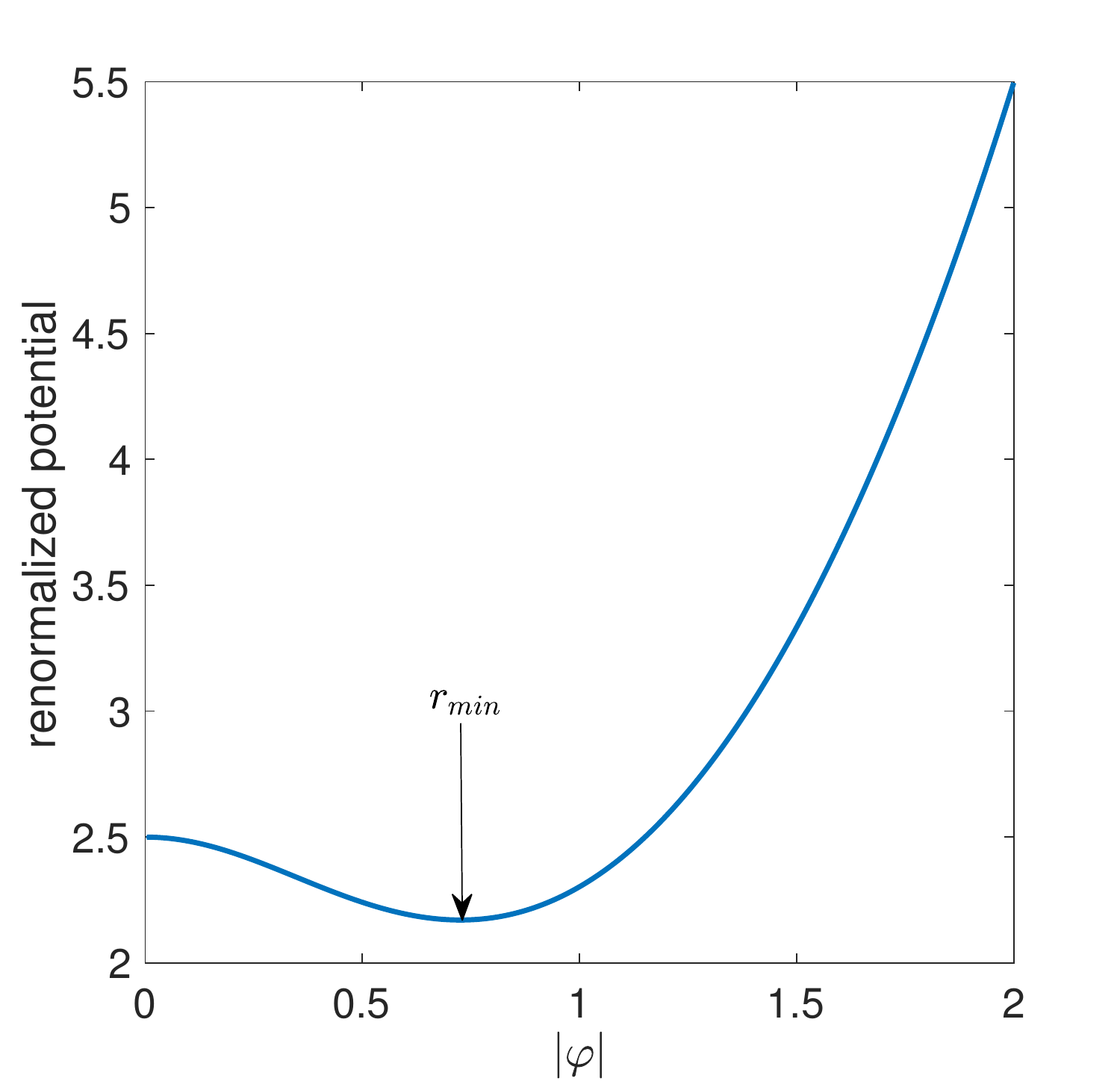}} 
\caption{\emph{Heisenberg model,} ($n=3$): The renormalized potential of the Heisenberg model for $h=0$ and $\beta=5$.}
\label{Figure3}
\end{figure}
\subsection{$n\ge 2$: Zero magnetic field, $h=0$}
\label{sec:ZMF and MC}
Let $h=0$ then the renormalized potential for $n \ge 2$ is radially symmetric and possesses a critical point at $\varphi=0$. In the supercritical case, i.e.\@ $\beta>n,$ the renormalized potential possesses another critical radius $r=\Vert \varphi \Vert \in (0,1),$ see Figure \ref{Figure4}. 
To see this, we differentiate the renormalized potential
\[ \partial_r V_n(r) = \beta r \left(1- \tfrac{I_{n/2}(\beta r)}{r I_{n/2-1}(\beta r)} \right). \]
It is now obvious that $r=0$ is a critical point of the renormalized potential at which 
\begin{equation}
\begin{split}
&\lim_{r \downarrow 0}\tfrac{I_{n/2}(\beta r)}{r I_{n/2-1}(\beta r)}= \tfrac{2}{n} \tfrac{\beta}{2}>1 \text{ such that } \partial_r^2 V_n(0)=\beta\left(1-\tfrac{\beta}{n} \right)<0
\end{split}
\end{equation}
where we used that $\beta>n$ is supercritical.
To conclude the existence of precisely one other critical radius $r_{\text{min}}$ at which the renormalized potential attains its global minimum it suffices therefore to show that $\tfrac{I_{n/2}(\beta r)}{r I_{n/2-1}(\beta r)}$ decays monotonically to zero. We prove this in Lemma \ref{eq:auxlemma} in the appendix. This implies that also the factor $\left(1- \tfrac{I_{n/2}(\beta r)}{r I_{n/2-1}(\beta r)} \right)$ has precisely one root, i.e. the second critical radius.

In the next proposition we show that the radial part of the measure $d\nu_N(\varphi)$ which we denote by $dr_N:=\nu_N^{(n)} r^{n-1} e^{-NV_n(r)} \ dr$ in the sequel satisfies a {SGI} with a constant that is uniformly bounded in the number of spins.

\subsection{Zero magnetic field- A lower bound on the spectral gap}
\begin{figure}
\centerline{\includegraphics[height=3.5cm]{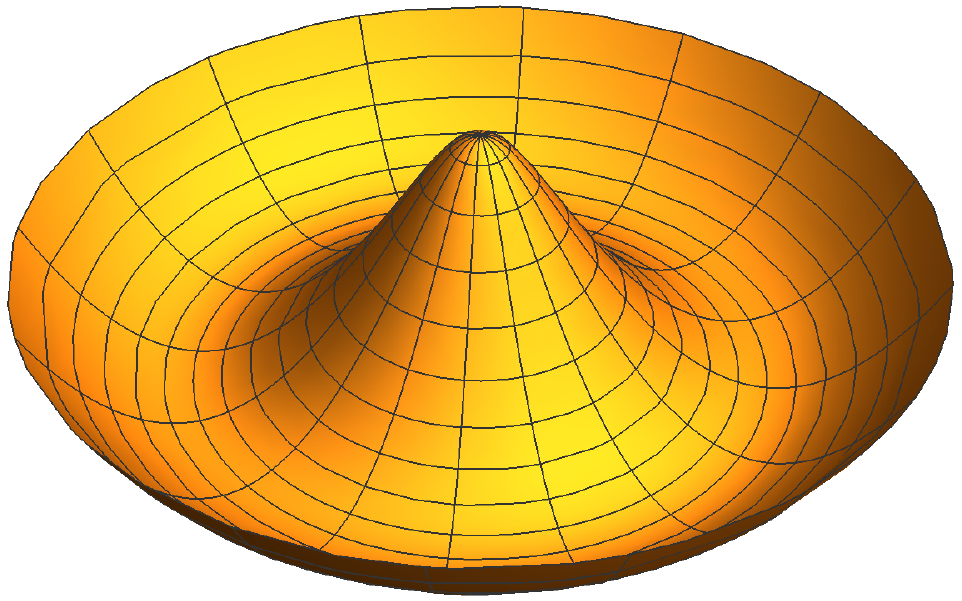}} 
\caption{\emph{XY-model:} The renormalized potential of the XY-model for $h=0$ and $\beta=10$.}
\label{Figure4}
\end{figure}

When $h=0$ and $n \ge 2$, then the renormalized Schr\"odinger operator \eqref{eq:renWitt} for $\lambda:=N/2$ is the self-adjoint operator \[\Delta_{\text{ren}}= - \Delta_{\mathbb R^n} + \lambda^2 \left\lvert \nabla_{\mathbb R^n}  V \right\rvert^2 - \lambda \Delta_{\mathbb R^n} V.\]
This operator is also rotationally symmetric such that by separating (spherical coordinates) the angular part from the radial part, the remaining radial component $\Delta^{\text{rad},\ell}_{\text{ren}}$ of the renormalized Schr\"odinger operator on $L^2((0,\infty),r^{n-1} dr )$  for $\ell \in \mathbb N_0$ reads
\begin{equation}
\label{eq:radialone}
\Delta^{\text{rad},\ell}_{\text{ren}} = - \left(\partial_r^2  + \tfrac{n-1}{r} \partial_r - \tfrac{\ell(\ell+n-2)}{r^2}\right) +  \lambda^2 \left\lvert \partial_r V_n(r) \right\rvert^2-\lambda \partial_r^2 V_n(r).
\end{equation}
Here, the term $\ell(\ell+n-2)$ accounts for the eigenvalues of the angular part of the Laplacian.
The renormalized potential possesses, when $h=0$ and $n \ge 2$, exactly two critical radii at which $\left\lvert \partial_r V_n(r) \right\rvert^2=0.$ The radii are $r=0$ and $r=r_{\text{min}},$ see the beginning of this Section \ref{sec:ZMF and MC}.
However, $V_n(r)$ is strictly concave at $0$, i.e. $ \partial_r^2 V_n(0)<0,$ and by Lemma \ref{eq:auxlemma} strictly convex at $r_{\text{min}}$ such that $ \partial_r^2 V_n(r_{\text{min}})>0$. This follows from
\[\partial_r^2 V_n(r_{\text{min}}) =\beta r_{\text{min}} \partial_r \vert_{r=r_{\text{min}}}\left(1- \tfrac{I_{n/2}(\beta r)}{r I_{n/2-1}(\beta r)} \right)>0,\]
see the beginning of Section \ref{sec:High-dim}.

By the tensorization principle we already know that the rotational invariance of the renormalized measure implies that the spectral gap inequality for the renormalized measure is at least uniform in $N.$ In our next Proposition we therefore study the low-lying spectrum of the radial component, $\Delta_{\text{ren}}^{\text{rad},0}$, as $\lambda \rightarrow \infty$. 
\begin{prop}
\label{prop:radialop}
Let $h=0$, $n \ge 2$ and $\beta >n.$ The spectral gap of the radial renormalized Schr\"odinger operator $\Delta_{\text{ren}}^{\text{rad},0}$ grows linearly in $N$.
\end{prop}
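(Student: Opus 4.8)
The plan is to regard the radial operator $\Delta_{\text{ren}}^{\text{rad},0}$ of \eqref{eq:radialone} as a semiclassical Schr\"odinger operator with parameter $\lambda:=N/2$ and to localize, exactly as in the critical--field analysis of the preceding subsection, to the critical radii of the renormalized potential. By the discussion at the beginning of Section \ref{sec:High-dim} together with Lemma \ref{eq:auxlemma}, for $h=0$ and $\beta>n$ the function $|\partial_rV_n(r)|^2$ appearing in \eqref{eq:radialone} vanishes exactly at $r=0$ and at the global minimum $r=r_{\text{min}}\in(0,1)$, with $\partial_r^2V_n(r_{\text{min}})>0$ while $\partial_r^2V_n(0)<0$. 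The structural point is that, although $r=0$ is a zero of $|\partial_rV_n|^2$, the renormalized potential is \emph{concave} there, so the subprincipal term $-\lambda\,\Delta_{\mathbb R^n}V_n$ of \eqref{eq:renWitt} (bounded by $\lambda\|\Delta V_n\|_{L^\infty}$, and near the origin equal to $\lambda\,n\,|\partial_r^2V_n(0)|\,(1+\smallO(1))>0$) lifts the would--be well at $r=0$ to energy of order $\lambda$. Hence only the non--degenerate well at $r_{\text{min}}$ hosts low--lying spectrum, and there the harmonic approximation predicts a first excited eigenvalue of order $\lambda\sim N$.

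I would work on $L^2((0,\infty),r^{n-1}\,dr)$, where $\Delta_{\text{ren}}^{\text{rad},0}$ is non-negative with one--dimensional kernel spanned by $e^{-NV_n/2}$, so that its spectral gap equals its second eigenvalue $E_2(\lambda)$, and where no $1/r^2$ coordinate singularity appears (so the case $n=2$ requires no special care). Following \eqref{eq:property2}, put $J_{r_{\text{min}}}(r):=j(\lambda^{2/5}|r-r_{\text{min}}|)$ and $J:=\sqrt{1-J_{r_{\text{min}}}^2}$, so that $\|\partial_rJ_{r_{\text{min}}}\|^2=\mathcal O(\lambda^{4/5})=\smallO(\lambda)$, and apply the IMS localization formula \cite[(11.37)]{CFKS}
\[ \Delta_{\text{ren}}^{\text{rad},0}=J\,\Delta_{\text{ren}}^{\text{rad},0}\,J+J_{r_{\text{min}}}\,\Delta_{\text{ren}}^{\text{rad},0}\,J_{r_{\text{min}}}-(\partial_rJ)^2-(\partial_rJ_{r_{\text{min}}})^2. \]
On $\supp J$ the potential part $W_\lambda:=\lambda^2|\partial_rV_n|^2-\lambda\,\Delta_{\mathbb R^n}V_n$ satisfies $W_\lambda\ge c\lambda$ for some $c>0$ and all large $\lambda$: away from a fixed neighbourhood of $\{0,r_{\text{min}}\}$ one has $|\partial_rV_n|^2\ge c_0>0$, so $W_\lambda\ge c_0\lambda^2-\mathcal O(\lambda)$; on $\lambda^{-2/5}\lesssim|r-r_{\text{min}}|\lesssim1$ one has $|\partial_rV_n(r)|^2\gtrsim|r-r_{\text{min}}|^2\gtrsim\lambda^{-4/5}$, so $W_\lambda\gtrsim\lambda^{6/5}$; and near $r=0$ the subprincipal term alone gives $W_\lambda\ge c\lambda$ by concavity of $V_n$ at the origin. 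Since $-(\partial_r^2+\tfrac{n-1}{r}\partial_r)\ge0$ as a form on $L^2(r^{n-1}\,dr)$, this yields $J\,\Delta_{\text{ren}}^{\text{rad},0}\,J\ge c\lambda\,J^2$.

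On $\supp J_{r_{\text{min}}}$ I would Taylor--expand at $r_{\text{min}}$. Let $H_{r_{\text{min}}}:=-\partial_r^2+\lambda^2\big(\partial_r^2V_n(r_{\text{min}})\big)^2(r-r_{\text{min}})^2-\lambda\,\partial_r^2V_n(r_{\text{min}})$ be the harmonic model (the first--order term $-\tfrac{n-1}{r}\partial_r$ contributing $\mathcal O(1)$ after the conjugation $u\mapsto r^{(n-1)/2}u$, and all higher Taylor corrections to $W_\lambda$ being lower order, using $\partial_rV_n(r_{\text{min}})=0$); the cubic Taylor remainder of $\lambda^2|\partial_rV_n|^2$ on $\supp J_{r_{\text{min}}}$ is $\mathcal O(\lambda^2\lambda^{-6/5})=\smallO(\lambda)$, so $J_{r_{\text{min}}}(\Delta_{\text{ren}}^{\text{rad},0}-H_{r_{\text{min}}})J_{r_{\text{min}}}=\smallO(\lambda)$. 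The rescaling $r=r_{\text{min}}+\lambda^{-1/2}x$ conjugates $H_{r_{\text{min}}}$ to $\lambda S_{r_{\text{min}}}$ with the $\lambda$--independent operator $S_{r_{\text{min}}}=-\partial_x^2+\big(\partial_r^2V_n(r_{\text{min}})\big)^2x^2-\partial_r^2V_n(r_{\text{min}})$, whose eigenvalues are $2k\,\partial_r^2V_n(r_{\text{min}})$, $k\in\mathbb N_0$. Writing $P$ for the rank--one projection of $S_{r_{\text{min}}}$ onto its zero eigenvalue and $e_1:=2\,\partial_r^2V_n(r_{\text{min}})$ for the next one, the bound $J_{r_{\text{min}}}H_{r_{\text{min}}}J_{r_{\text{min}}}\ge\lambda e_1\big(J_{r_{\text{min}}}^2-J_{r_{\text{min}}}PJ_{r_{\text{min}}}\big)$, together with $J\,\Delta_{\text{ren}}^{\text{rad},0}\,J\ge c\lambda J^2$, $J^2+J_{r_{\text{min}}}^2=1$, and the IMS identity, gives
\[ \Delta_{\text{ren}}^{\text{rad},0}\ \ge\ \min(c,e_1)\,\lambda\,(1+\smallO(1))+K,\qquad\operatorname{rank}K\le1, \]
so $E_2(\lambda)\ge\min(c,e_1)\,\lambda\,(1+\smallO(1))$ by min--max, i.e. the spectral gap of $\Delta_{\text{ren}}^{\text{rad},0}$ is $\gtrsim N$. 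For the matching bound $\lesssim N$ one tests the spectral gap inequality of the radial measure $r^{n-1}e^{-NV_n}\,dr$ with $f(r)=(r-r_{\text{min}})\chi(r)$ for a fixed bump $\chi$ equal to $1$ near $r_{\text{min}}$: a Laplace estimate around $r_{\text{min}}$ gives $\operatorname{Var}(f)\sim\big(N\,\partial_r^2V_n(r_{\text{min}})\big)^{-1}$ and Dirichlet energy of order $1$, so the gap is $\mathcal O(N)$, whence $\Theta(N)$; Proposition \ref{theo1} then transfers this to the radial sector of $\rho$ as in Theorem \ref{theo:2}.

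The step I expect to be the main obstacle is the error bookkeeping in the IMS localization: choosing the exponent $2/5$ so that \emph{simultaneously} $\|\partial_rJ_{r_{\text{min}}}\|^2=\smallO(\lambda)$ and the cubic remainder of $\lambda^2|\partial_rV_n|^2$ on $\supp J_{r_{\text{min}}}$ is $\smallO(\lambda)$, and verifying the uniform lower bound $W_\lambda\ge c\lambda$ on all of $\supp J$ — in particular near the degenerate radial geometry at $r=0$, which is exactly where the sign $\partial_r^2V_n(0)<0$, forced by supercriticality $\beta>n$, is used. An alternative is to substitute $u\mapsto r^{(n-1)/2}u$ and reduce $\Delta_{\text{ren}}^{\text{rad},0}$ to a half-line Schr\"odinger operator $-\partial_r^2+W_\lambda(r)+\tfrac{(n-1)(n-3)}{4r^2}$, then invoke the single--well semiclassical asymptotics of \cite[Theo.\ $1.1$]{S} directly, at the cost of controlling the attractive term $-\tfrac{1}{4r^2}$ for $n=2$ via Hardy's inequality.
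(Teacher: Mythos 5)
Your proposal is correct and takes essentially the same route as the paper: IMS localization on the scale $\lambda^{-2/5}$, harmonic approximation and $\lambda^{-1/2}$ rescaling at the non-degenerate well $r_{\text{min}}$, and a finite-rank perturbation plus min--max to get $E_2(\lambda)\gtrsim\lambda$. The only (harmless) deviations are that you dispense with the paper's second model operator at $r=0$ by using the pointwise bound $W_\lambda\ge c\lambda$ there, coming from $\partial_r^2V_n(0)<0$, whereas the paper localizes at the origin as well, and that you add a matching $\mathcal O(N)$ upper bound via a trial function, which the paper does not include in this proposition.
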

\begin{proof}
To study the low-lying spectrum of the radial component of the renormalized Schr\"odinger operator, let $\lambda:= N/2$ and consider Schr\"odinger operators
\begin{equation}
\begin{split}
\label{eq:Hamiltonian}
&H^{0}_{\text{osc}}(\lambda)=  - \left(\partial_r^2  + \tfrac{n-1}{r} \partial_r\right)+ \lambda^2 \vert\partial_r^2 V_n(0) \vert^2 r^2 - \lambda \ \partial_r^2 V_n(0) \text{ and} \\
&H^{r_{\text{min}}}_{\text{osc}}(\lambda)= - \partial_x^2 + \lambda^2 \ \vert \partial_r^2 V_n(r_{\text{min}}) \vert^2(x-r_{\text{min}})^2-\lambda \ \partial_r^2 V_n(r_{\text{min}})
\end{split}
\end{equation}
where we use the variable $x$ rather than $r$ to emphasize that the last operator is defined on $L^2(\mathbb R)$, unlike the first one which is an operator on $L^2((0,\infty),r^{n-1} \ dr).$ Observe that in \eqref{eq:Hamiltonian} we replaced the gradient term of the Schr\"odinger operator by its Taylor approximation at the critical point. This explains the occurrence of the second derivative at the critical point in \eqref{eq:Hamiltonian}.
Invoking the unitary maps $U_0 \in \mathcal L( L^2((0,\infty),r^{n-1} \ dr))$ and $U_{r_{\text{min}}}  \in \mathcal L(L^2(\mathbb R))$ defined as 
\begin{equation}
\begin{split}
(U_{0}f)(x) = \lambda^{-n/4}f(\lambda^{-1/2}x)  \text{ and } (U_{r_{\text{min}}}f)(x) = \lambda^{-1/4}f(\lambda^{-1/2}(x+r_{\text{min}}))
\end{split}
\end{equation}
shows that the two Schr\"odinger operators in \eqref{eq:Hamiltonian} are in fact unitarily equivalent, up to multiplication by $\lambda$, to the $\lambda$-independent Schr\"odinger operators
\begin{equation}
\begin{split}
S^{0}_{\text{osc}} &= - \left(\partial_r^2  + \tfrac{n-1}{r} \partial_r\right)+ \vert\partial_r^2 V_n(0) \vert^2 r^2 -  \ \partial_r^2 V_n(0)\\
S^{r_{\text{min}}}_{\text{osc}}&= - \partial_x^2 +  \ \vert \partial_r^2 V_n(r_{\text{min}}) \vert^2x^2-\ \partial_r^2 V_n(r_{\text{min}}),
\end{split}
\end{equation}
respectively. More precisely, we have that
\begin{equation}
\begin{split}
\label{eq:uniteqv}
U_{0}^{-1}\lambda \ S^{0}_{\text{osc}} U_{0}&= H^{0}_{\text{osc}}(\lambda)\quad\text{ and }\quad
U_{r_{\text{min}}}^{-1} \lambda \ S^{r_{\text{min}}}_{\text{osc}}U_{r_{\text{min}}}= H^{r_{\text{min}}}_{\text{osc}}(\lambda).
\end{split}
\end{equation}
Since the bottom of the spectrum of the operator $S^0_{\text{osc}}$ is strictly positive $S^0_{\text{osc}}\ge -\partial_r^2 V_n(0)>0,$
we conclude from \eqref{eq:uniteqv} that the bottom of the spectrum of $H^{0}_{\text{osc}}(\lambda)$ increases linearly to infinity as $\lambda \rightarrow \infty.$

To connect the low-energy spectrum of the renormalized Schr\"odinger operator with the above auxiliary operators, take $j \in C_c^{\infty}(-\infty,2)$ such that $j(x)=1$ for $\left\lvert x \right\rvert \le 1.$ Then, we define
\begin{equation}
\begin{split}
\label{eq:property}
J_0(x) &= j(\lambda^{2/5}\left\lvert x \right\rvert ), \ J_{r_{\text{min}}}(x) = j(\lambda^{2/5}\left\lvert x-r_{\text{min}} \right\rvert )\text{ with }\left\lVert \nabla  J_{i} \right\rVert_{\mathbb R^n} = \mathcal O(\lambda^{2/5})
\end{split}
\end{equation}
for $i \in \left\{ 0,r_{\text{min}} \right\}$ and $J := \sqrt{1-J_{r_{\text{min}}}^2-J_0^2}.$ 
\medskip

Without loss of generality we can assume that $\lambda$ is large enough such that $J_0$ and $J_{r_{\text{min}}}$ are disjoint.

\medskip

Taylor expansion of the potential at $0$ and $r_{\text{min}}$ respectively and the estimate on the gradient \eqref{eq:property} imply that  
\[\left\lvert J_i(\Delta^{\text{rad},0}_{\text{ren}}-H^{i}_{\text{osc}})J_i \right\rvert = \mathcal O(\lambda^{4/5}) \text{ for }i \in \left\{ 0,r_{\text{min}} \right\}. \]
Let $0=e_1<e_2\le..$ be the eigenvalues (counting multiplicities) of $S^{0}_{\text{osc}} \oplus S^{r}_{\text{osc}}$ and choose $\tau$ such that $e_{n+1}>\tau >e_n$ with $P_i$ being the projection onto the eigenspace to all eigenvalues of $H^{i}_{\text{osc}}$ below $\tau \lambda.$
The IMS (\textbf{I}smagilov, \textbf{M}organ, and \textbf{S}imon/\textbf{S}igal) formula, see \cite[(11.37)]{CFKS} for a version on manifolds, implies that  
\[ \Delta^{\text{rad},0}_{\text{ren}} = J \Delta^{\text{rad},0}_{\text{ren}} J  - \vert \partial_r J \vert^2 +\sum_{i \in \left\{0,r_{\text{min}} \right\}} \left(J_{i} \Delta^{\text{rad},0}_{\text{ren}} J_{i}-  \vert \partial_r J_{i} \vert^2 \right) \]
such that 
\begin{equation}
\label{eq:usefulrepresent}
\Delta^{\text{rad},0}_{\text{ren}} = J \Delta^{\text{rad},0}_{\text{ren}} J -\vert \partial_r J \vert^2+\sum_{i \in \left\{0,r_{\text{min}} \right\}} \left(J_{i} H^{i}_{\text{osc}} J_{i} + J_{i} (\Delta^{\text{rad},0}_{\text{ren}}-H^{i}_{\text{osc}}) J_{i}-  \vert \partial_r J_{i} \vert^2 \right).
\end{equation}
On the other hand, it follows that 
\begin{equation*} 
\begin{split}
J_{i} H^{i}_{\text{osc}} J_{i}
&= J_{i} H^{i}_{\text{osc}} P_i J_{i} + J_{i} H^{i}_{\text{osc}} (\operatorname{id}-P_i) J_{i} \ge J_{i} H^{i}_{\text{osc}}P_i J_{i} + \lambda e_n J_{i}^2. 
\end{split}
\end{equation*}
By construction, since $\nabla V_n$ vanishes linearly on the support of $J_i,$ we have 

\[ \Vert \nabla V_n \Vert^2_{\mathbb R^n} \ge c (\lambda^{-2/5})^2 =c\lambda^{-4/5} \text{ on } J\text{ for some }c>0.\] 
Since $\Delta V_n$ is globally bounded anyway, this implies for large $\lambda$ that
\begin{equation}
\label{eq:support}
 J \Delta^{\text{rad},0}_{\text{ren}} J \ge J^2 (c \lambda^{6/5}-\lambda) \ge \lambda e_n J^2.
 \end{equation}
From \eqref{eq:usefulrepresent} we then conclude that for some $C>0$
\[ \Delta^{\text{rad},0}_{\text{ren}} \ge \lambda e_n - C \lambda^{4/5} + \sum_{i \in \left\{0,r_{\text{min}} \right\}}J_{i} H^{i}_{\text{osc}}P_i J_{i}= \lambda e_n+ \sum_{i \in \left\{0,r_{\text{min}} \right\}}J_{i} H^{i}_{\text{osc}}P_i J_{i}-o(\lambda). \]
This implies the claim of the Proposition, since \[\operatorname{rank} \left(\sum_{i \in \left\{0,r_{\text{min}} \right\}}J_{i} H^{i}_{\text{osc}}P_i J_{i} \right) \le n.\] More precisely, for the eigenvalues $E_1(\lambda)\le E_2(\lambda)\le..$ of $\Delta^{\text{rad},0}_{\text{ren}}$ we have shown that 
\[\liminf_{\lambda \rightarrow \infty} \lambda^{-1}E_n(\lambda)\ge e_n.\] 
In particular, the lowest possible eigenvalue $e_1=0$ of the renormalized Schr\"odinger operator is of course attained as the nullspace of the renormalized Schr\"odinger operator is non-trivial.
This shows that the spectral gap of the renormalized Schr\"odinger operator grows at least linearly in $\lambda$ in the angular sector $\ell=0$. 
\end{proof}
\begin{corr} Let $h=0$, $n \ge 2$ and $\beta >n$. The spectral gap of the full Gibbs measure $\rho$ does not close faster than $\Theta(1/N).$ In particular, for radial functions, i.e. $f$ only depends on $\vert \bar \sigma \vert,$ the spectral gap remains open.
\end{corr}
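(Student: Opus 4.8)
The plan is to derive the Corollary from Proposition~\ref{theo1} together with the two facts already recorded about the renormalized measure $\nu_N$ in the regime $h=0$, $n\ge2$, $\beta>n$: (i) by rotational invariance $\nu_N$ satisfies an \emph{SGI($\lambda$)} with $\lambda$ bounded below uniformly in $N$ (noted just before Proposition~\ref{prop:radialop}); and (ii) by Proposition~\ref{prop:radialop} its radial marginal $r_N$, $dr_N=\nu_N^{(n)}r^{n-1}e^{-NV_n(r)}\,dr$, satisfies an \emph{SGI} whose optimal constant $k_{\text{rad}}$ grows linearly in $N$ — here I use that, just as \eqref{eq:renmeasuprop} identifies the spectral gap of $L_{\text{ren}}$ with the \emph{SGI}-constant of $\nu_N$, the spectral gap of $\Delta^{\text{rad},0}_{\text{ren}}$ is the optimal \emph{SGI}-constant of $r_N$.

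Step one (the unconditional lower bound): feed (i) into Proposition~\ref{theo1}. Since $\lambda=\Theta(1)$, the full Gibbs measure $\rho$ satisfies an \emph{SGI} with constant $\tfrac1{\gamma_n}\bigl(1+\tfrac{4N\beta^2}{\lambda}\bigr)=\Theta(N)$, hence by the variational characterisation \eqref{def:spectral_gap} one gets $\lambda_N\gtrsim N^{-1}$, i.e. the gap does not close faster than $\Theta(N^{-1})$.

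Step two (radial functions): run the variance decomposition \eqref{eq:SGIstep}, $\operatorname{Var}_\rho(F)=\mathbb E_{\nu_N}\!\left(\operatorname{Var}_{\mu_\varphi}(F)\right)+\operatorname{Var}_{\nu_N}\!\left(\mathbb E_{\mu_\varphi}(F)\right)$, but now exploit that a radial $F$ (depending only on $|\overline\sigma|$) sees only the $\ell=0$ sector of $\nu_N$. Indeed, since $\mu_\varphi$ is invariant under the simultaneous rotation $(\varphi,\sigma)\mapsto(R\varphi,R\sigma)$ and $|\overline\sigma|$ is rotation-invariant, the map $\varphi\mapsto\mathbb E_{\mu_\varphi}(F)$ is a function $G(|\varphi|)$ of $|\varphi|$ alone, so $\operatorname{Var}_{\nu_N}(\mathbb E_{\mu_\varphi}(F))=\operatorname{Var}_{r_N}(G)$ and $\mathbb E_{\nu_N}\bigl(|\nabla_\varphi\mathbb E_{\mu_\varphi}(F)|^2\bigr)=\mathbb E_{r_N}(|G'|^2)$. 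The first term of the decomposition is $\le\tfrac1{\gamma_n}\sum_x\|\nabla^{(x)}_{\mathbb S^{n-1}}F\|^2_{L^2(\rho)}$ by the uniform-in-$N$ \emph{SGI} for the fluctuation measures, exactly as in the proof of Proposition~\ref{theo1}. For the second term, combine the radial \emph{SGI} of $r_N$ with the chain of estimates \eqref{eq:equation}--\eqref{eq:Cauchy-Schwarz} from that same proof:
\[
\operatorname{Var}_{r_N}(G)\ \le\ \tfrac1{k_{\text{rad}}}\,\mathbb E_{r_N}(|G'|^2)\ =\ \tfrac1{k_{\text{rad}}}\,\mathbb E_{\nu_N}\bigl(|\nabla_\varphi\mathbb E_{\mu_\varphi}(F)|^2\bigr)\ \le\ \tfrac{4N\beta^2}{\gamma_n\,k_{\text{rad}}}\sum_x\bigl\|\nabla^{(x)}_{\mathbb S^{n-1}}F\bigr\|^2_{L^2(\rho)}.
\]
Because $k_{\text{rad}}=\Theta(N)$, the prefactor $N/k_{\text{rad}}$ is bounded, so $\operatorname{Var}_\rho(F)\le C\sum_x\|\nabla^{(x)}_{\mathbb S^{n-1}}F\|^2_{L^2(\rho)}$ with $C$ independent of $N$, which is the asserted open gap on radial functions. (The matching upper bound $\lambda_N=\Theta(N^{-1})$ for general observables follows from a trial function built out of a single mean-spin coordinate, cf. Theorem~\ref{theo:2}.)

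The step I expect to be the crux is the rotational reduction in step two: one must check cleanly that the renormalization decomposition is compatible with restriction to radial observables — that $\mathbb E_{\mu_\varphi}(F)$ genuinely depends on $\varphi$ only through $|\varphi|$, and that passing to the radial marginal turns $\operatorname{Var}_{\nu_N}$ and $\mathbb E_{\nu_N}(|\nabla_\varphi(\cdot)|^2)$ into their one-dimensional radial counterparts — so that the linearly growing gap of Proposition~\ref{prop:radialop}, which only controls the sector $\ell=0$, is precisely the quantity that absorbs the factor $N$ coming from the $N$-fold product over spins. Once that compatibility is in place, the remainder is bookkeeping of constants.
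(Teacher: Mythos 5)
Your proposal is correct and follows essentially the same route as the paper: the unconditional $\Theta(1/N)$ bound comes from feeding the uniform-in-$N$ SGI of $\nu_N$ (radial gap $\Theta(N)$ from Proposition~\ref{prop:radialop} tensorized with the $N$-independent angular gap) into Proposition~\ref{theo1}, and the radial statement comes from noting that $\varphi\mapsto\mathbb E_{\mu_\varphi}(F)$ is radial and re-running \eqref{eq:SGIstep} with the radial SGI constant $\Theta(N)$ absorbing the factor $N$ from the covariance estimate. You merely make explicit two points the paper leaves implicit (the simultaneous-rotation argument for radiality of $\mathbb E_{\mu_\varphi}(F)$ and the identification of the SGI constant of $r_N$ with the gap of $\Delta^{\mathrm{rad},0}_{\mathrm{ren}}$), which is consistent with the paper's proof.
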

\begin{proof}
Since the spectral gap of the radial component of the renormalized measure grows linearly in $N$ and the spectral gap of the angular component is uniform in $N$, the tensorization principle implies that the full renormalized measure satisfies a SGI that is uniform in $N$. 
Due to Proposition \ref{theo1}, the spectral gap of the full measure does therefore not close faster than of order $1/N.$

For radial functions $f$, the  $\RR^n \ni \varphi \mapsto E_{\mu_{\varphi}}(f)$ maps also into radial functions and therefore the spectral gap of the renormalized measure is only determined by the radial renormalized Schr\"odinger operator in Prop. \ref{prop:radialop}. Using Proposition \ref{theo1} and \eqref{eq:SGIstep}, this implies that for radial functions, the gap remains open.
\end{proof}

In the next Proposition we show that the rate $N^{-1}$ in this case is in fact optimal: 
\begin{prop} Let $h=0$, $n \ge 2$ and $\beta >n$. The spectral gap of the full measure $\rho$ of the dynamics decays at least as fast as $N^{-1}$. 
\end{prop}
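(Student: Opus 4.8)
The plan is to exhibit an explicit trial function that saturates the spectral gap inequality at the rate $N^{-1}$, in direct analogy with the construction used in the Ising case with weak fields in Section \ref{sec:Ising}. Since the renormalized potential $V_n$ for $h=0$, $n\ge 2$ and $\beta>n$ has a critical point at $\varphi=0$ that is a local maximum (strictly concave radially, $\partial_r^2 V_n(0)<0$) and a sphere of global minima at radius $r_{\text{min}}>0$, the renormalized measure $d\nu_N$ concentrates on the sphere $\{\vert\varphi\vert = r_{\text{min}}\}$. The key heuristic is that because of the rotational symmetry there is a whole $(n-1)$-sphere of minima, so a function that is constant on this minimal sphere but varies in the angular direction on a scale of order $1$ can be orthogonal to constants while having a small Dirichlet form. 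Concretely, I would take a trial function of the form $f(\varphi) = g(\varphi/\vert\varphi\vert)$ for a fixed spherical harmonic $g$ of degree $1$ on $\mathbb S^{n-1}$ (for instance a coordinate function $\varphi_1/\vert\varphi\vert$), suitably cut off near the origin to make it smooth, and pull it back to a function $F$ on $(\mathbb S^{n-1})^N$ via $F(\sigma) = f(\bar\sigma)$, where $\bar\sigma$ is the mean spin.

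The main computation is then to estimate $\operatorname{Var}_\rho(F)$ from below and $\sum_{x\in[N]}\lVert\nabla^{(x)}_{\mathbb S^{n-1}}F\rVert^2_{L^2(d\rho)}$ from above. For the denominator, the chain rule gives $\nabla^{(x)}_{\mathbb S^{n-1}} F(\sigma) = \tfrac1N (\nabla_{\mathbb R^n} f)(\bar\sigma)$ projected onto the tangent space of $\mathbb S^{n-1}$ at $\sigma(x)$, so each of the $N$ terms is of size $O(N^{-2})$ times $\lVert\nabla f\rVert^2$ evaluated near $r_{\text{min}}$; since $\nabla f$ of an angular function is bounded on the annulus where $\nu_N$ lives, summing over $x\in[N]$ produces a Dirichlet form of order $N^{-1}$. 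For the numerator, one uses the decomposition $\mathbb E_\rho(F) = \mathbb E_{\nu_N}(\mathbb E_{\mu_\varphi}(F))$; since $F$ depends only on $\bar\sigma$ and $\mathbb E_{\mu_\varphi}(\bar\sigma)$ is close to $\varphi$ (as in the covariance identities used in the proof of Proposition \ref{theo1}), $\mathbb E_{\mu_\varphi}(F) \approx f(\varphi)$ up to fluctuations of lower order, so $\operatorname{Var}_\rho(F) \gtrsim \operatorname{Var}_{\nu_N}(f) = \Theta(1)$ because $f$ is a non-constant angular function and the measure $\nu_N$ concentrates on the full sphere of minima (by rotational invariance the pushforward of $\nu_N$ to the angular variable converges to the uniform measure on $\mathbb S^{n-1}$, against which the degree-one spherical harmonic has strictly positive variance). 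Combining, $\lambda_N \le \sum_x \lVert\nabla^{(x)} F\rVert^2 / \operatorname{Var}_\rho(F) = O(N^{-1})$.

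I would organize the argument as follows. First, fix the trial function $f$ and verify the two elementary facts: $\lVert\nabla f\rVert_{L^\infty} < \infty$ on a neighbourhood of the minimal sphere (after cutting off near $0$ where $\nu_N$ has exponentially small mass by Laplace's principle, so the cutoff region contributes negligibly), and $\operatorname{Var}_{\nu_N}(f) \to c > 0$ via Laplace asymptotics together with rotational symmetry. Second, carry out the gradient computation on $(\mathbb S^{n-1})^N$ using $\bar\sigma = \tfrac1N\sum_x\sigma(x)$ and the fact $\vert\sigma(x)\vert = 1$, to get the $O(N^{-1})$ bound on the Dirichlet form; here one must be slightly careful that $\bar\sigma$ can be close to $0$ for some configurations, but these have exponentially small $\rho$-weight (again by the renormalization decomposition and Laplace's principle applied to $\nu_N$), so the contribution of the cutoff is harmless. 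Third, bound $\operatorname{Var}_\rho(F)$ below by passing through the renormalization decomposition and controlling $\mathbb E_{\mu_\varphi}(F) - f(\varphi)$ by the covariance estimates already established (the fluctuation measure $\mu_\varphi$ satisfies a uniform LSI, so $\operatorname{Var}_{\mu_\varphi}$ of bounded functions is $O(1)$, hence $\mathbb E_{\nu_N}\operatorname{Var}_{\mu_\varphi}(F)$ is negligible relative to the $\Theta(1)$ contribution from $\operatorname{Var}_{\nu_N}(\mathbb E_{\mu_\varphi}(F))$).

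The main obstacle I anticipate is making the lower bound $\operatorname{Var}_\rho(F) = \Theta(1)$ fully rigorous: one needs to show that the pushforward of $\nu_N$ under $\varphi\mapsto\varphi/\vert\varphi\vert$ does not degenerate (it must genuinely spread over the whole sphere $\mathbb S^{n-1}$, which is exactly the rotational-invariance mechanism responsible for the closing gap), and simultaneously that the fluctuation term $\mathbb E_{\nu_N}\operatorname{Var}_{\mu_\varphi}(F)$ together with the discrepancy between $\mathbb E_{\mu_\varphi}(F)$ and $f(\varphi)$ are genuinely of lower order; the quantitative control here mirrors, but in reverse direction, the estimates $\operatorname{cov}_{\mu_\varphi}(F,\sigma_x) = O(\sqrt{\operatorname{Var}_{\mu_\varphi}(F)})$ and $\nabla V_n(\varphi) = \beta(\varphi - \mathbb E_{\mu_\varphi}(\sigma_x))$ from the proof of Proposition \ref{theo1}. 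A secondary technical point is the smooth cutoff near the origin, but since $\nu_N(\{\vert\varphi\vert < r_{\text{min}}/2\})$ is exponentially small in $N$ by Laplace's principle, this does not affect either bound at the order $N^{-1}$.
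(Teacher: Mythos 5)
Your proposal is essentially the paper's argument: the trial function is identical (the first coordinate of the normalized mean spin, $\bar\sigma_1/\vert\bar\sigma\vert$, smoothly cut off near $\bar\sigma=0$), and the Dirichlet-form computation via the chain rule, giving $N$ terms of size $\mathcal O(N^{-2})$ on the region where $\vert\bar\sigma\vert$ is bounded below, is exactly what the paper does. The one place where you diverge is the variance lower bound, and there the paper's route is both different and simpler than yours: instead of passing through the decomposition $\operatorname{Var}_\rho(F)\ge\operatorname{Var}_{\nu_N}(\mathbb E_{\mu_\varphi}(F))$ and approximating $\mathbb E_{\mu_\varphi}(F)$ by $f(\varphi)$, the paper works directly with $\rho$: rotational invariance gives $\mathbb E_\rho(F)=0$ and, summing the identity $\sum_i \bar\sigma_i^2/\vert\bar\sigma\vert^2=1$ over coordinates, $\mathbb E_\rho(F^2)$ equals a dimensional constant up to an error supported on $\{\vert\bar\sigma\vert\le\delta\}$; that error is then shown to be exponentially small by writing the fluctuation measure in its product form with normalization $\mathcal N(\varphi)^N$ and using that $\nu_N$ concentrates on the sphere of radius $r_{\min}$. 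This bypasses entirely the estimate you correctly flag as your main obstacle, namely a quantitative version of $\mathbb E_{\mu_\varphi}(F)\approx f(\varphi)$ uniformly for $\varphi$ near the minimal sphere. Your route can be completed — under $\mu_\varphi$ the spins are i.i.d.\ with mean parallel to $\varphi$ and of magnitude bounded below near $r_{\min}$, so a law-of-large-numbers/variance bound for $\bar\sigma$ plus Lipschitz continuity of $f$ away from the origin gives $\mathbb E_{\mu_\varphi}(F)=f(\varphi)+\mathcal O(N^{-1/2})$, and rotational invariance of $\nu_N$ then gives $\operatorname{Var}_{\nu_N}(f)\to 1/n$ — but this is extra work the paper's symmetry trick avoids, and as written your proposal leaves it as an announced obstacle rather than a proof. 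Two smaller remarks: for the lower bound on $\operatorname{Var}_\rho(F)$ you do not need to show that $\mathbb E_{\nu_N}\operatorname{Var}_{\mu_\varphi}(F)$ is negligible, since both terms in the variance decomposition are nonnegative and you may simply drop the fluctuation term; and the exponential smallness of the cutoff region should be argued for the full measure $\rho$ (as the paper does through $\mathbb E_{\mu_\varphi}(\indic_{\vert\bar\sigma\vert\le\delta})$), not only for $\nu_N$, since your Dirichlet-form bound also needs $\rho(\{\vert\bar\sigma\vert\le\delta\})$ small to handle the region where the cutoff varies.
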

\begin{proof}
We consider the mean-spin $\bar \sigma:(\mathbb S^n)^N \rightarrow \RR^{n+1}$ defined by

\begin{align} 
\bar \sigma(\sigma) = \frac{1}{N} \sum_{i=1}^N \sigma_i  = (\bar \sigma_1(\sigma),...,\bar \sigma_{n+1}(\sigma)) \in \RR^{n+1}.
\end{align} 

In analogy to the spherical harmonics which in cartesian coordinates reads $x_1/\Vert x \Vert$, we consider the function:  
\begin{align} 
f(\sigma):= \frac{\bar \sigma_1(\sigma)}{\Vert \bar \sigma(\sigma) \Vert} \eta( \Vert \bar \sigma(\sigma) \Vert).
\end{align} 
where $\eta \in C^{\infty}(\mathbb{R};[0,1])$ is a cut-off function such that for fixed $1>\delta>0$: \begin{align} 
\eta(t):=\left\{ \begin{array}{ll} 1, &\text{when}\ t> \delta\ \text{and}\\
0, &\text{when}\  t  \le \delta/2  .
 \end{array} \right. \end{align}
 As we want to compute the covariant derivative $\nabla_{\sigma_1} t(\sigma), $ we consider the parametrisation 
 $\gamma_1(t)$ so that $\gamma_1(0)= \sigma_1$. Then we define $\gamma(t):=(\gamma_1(t), \sigma_2, \dots, \sigma_N)$ and $s(t):= \bar \sigma(\gamma(t))$. It is then clear that for $v:=\gamma_1'(0)$ we have  $s'(0)=v/N$, the first coordinate of which is  $s_1'(0)=\langle e_1, v \rangle /N$. We define then 
\begin{align} 
s_1(t):=\bar \sigma_1 (\gamma(t))=\langle e_1, s(t) \rangle.
\end{align} 

Thus, since $f(\gamma(t))=\frac{s_1(t)}{\Vert s(t) \Vert},$
we find for the derivative

\begin{equation}
\begin{split}
f'(0)&= \frac{1}{ \vert s(0) \vert^2} \left( \vert s(0) \vert s_1'(0)- s_1(0) \frac{s(0) \cdot s'(0)}{\vert s(0) \vert}  \right) \eta(\vert s(0)\vert)  \\
&\qquad + \frac{s_1(0)}{\vert s(0) \vert} \eta'(\vert s(0)\vert ) \frac{s(0) \cdot s'(0)}{\vert s(0) \vert}  \\
&= \frac{1}{N\vert \bar \sigma(\sigma) \vert^2} \left( \vert \bar \sigma(\sigma) \vert \langle e_1,v \rangle- \langle e_1, \bar \sigma(\sigma) \rangle \frac{\langle \bar \sigma(\sigma),v \rangle}{\vert \bar \sigma(\sigma) \vert}\right) \eta(\vert \bar \sigma(\sigma) \vert ) \\
&+  \frac{\bar \sigma_1(\sigma)}{\vert \bar \sigma(\sigma) \vert} \eta'(\vert \bar \sigma(\sigma ) \vert ) \frac{ \langle \bar \sigma(\sigma),v \rangle }{N \vert \bar \sigma(\sigma) \vert}   .
\end{split}
\end{equation}
Therefore, we see that in terms of $$ \mu(\sigma):= \frac{1}{N} \left( \frac{1}{\vert \bar \sigma(\sigma) \vert^2} \left(\vert \bar \sigma(\sigma) \vert e_1- \bar \sigma_1(\sigma) \frac{\bar \sigma(\sigma)}{\vert \bar \sigma(\sigma)\vert}\right)\eta(\vert \bar \sigma(\sigma) \vert ) +   
\frac{\bar \sigma_1(\sigma)}{\vert \bar \sigma(\sigma) \vert^2} \eta'(\vert \bar \sigma(\sigma ) \vert )   \bar \sigma(\sigma) 
 \right), $$
the derivative is just
$$\nabla_{\sigma_1} f(\sigma) = \mu(\sigma)-\langle \mu(\sigma), \sigma_1 \rangle \sigma_1.$$
The cut-off function $\eta$ ensures that $\vert \bar \sigma (\sigma) \vert $ is not small. Therefore we can bound $| Z(\sigma)| =\mathcal O(1)$ which implies that $ \mathbb{E}_{\rho}(| \nabla_{\sigma_1}\bar \sigma(\sigma)|^2) \lesssim 1/N^2$ or that $$ \sum_{i \in [N]} \mathbb{E}_{\rho}(| \nabla_{\sigma_i}\bar \sigma(\sigma)|^2) \lesssim 1/N.$$
 
By rotational symmetry we also know that $\mathbb E_{\rho}(f)=0$. For the second moment $\mathbb E_{\rho}(f(\sigma)^2)$, we have by rotational invariance again
\begin{align} 
1&=\sum_{i=1}^{n+1}\mathbb E_{\rho}\left( \frac{\bar \sigma_i(\sigma)^2}{\vert \bar \sigma(\sigma) \vert^2} \right)= \sum_{i=1}^{n+1}\mathbb E_{\rho}\left( \frac{\bar \sigma_1(\sigma)^2}{\vert \bar \sigma(\sigma) \vert^2} \right)\\ &= (n+1)\mathbb E_{\rho}\left( \frac{\bar \sigma_1(\sigma)^2}{\vert \bar \sigma(\sigma) \vert^2} \eta\left(\vert \bar \sigma(\sigma)  \vert\right) \right) + (n+1)\mathcal{R}_N \notag
\end{align} 
where $\mathcal{R}_N$ is the error 
\[\mathcal{R}_N:= \mathbb E_{\rho}\left( \frac{\bar \sigma_1(\sigma)^2}{\vert \bar \sigma(\sigma) \vert^2} \left(1- \eta \left(  \vert \bar \sigma(\sigma)  \vert  \right) \right) \right).\]

Our aim is now to argue that $ \mathcal{R}_N$ is small as $N$ is large.

For $\beta>n$ we know that the renormalized potential attains its minimum at hyperspheres $\partial B_{\RR^n}(0,r_{\text{min}}).$ This implies that the renormalized measure concentrates at such $\varphi \in \partial B(0,r_{\text{min}})$ with exponential tail bounds, i.e. the probability of $\varphi$ away from $\partial B(0,r_{\text{min}})$ is exponentially small in $N$. 
The fluctuation measure then enforces that also the mean spin $\bar \sigma $ has to be outside of a ball of radius $\delta>0$ with high probability.
To see this recall that the fluctuation measure can be rewritten as 
\begin{equation}
 \mathbb E_{\mu_{\varphi}}(F) = \frac{\int_{\left(\mathbb S^{n-1}\right)^N} F(\sigma)  e^{\beta N \langle \varphi, \bar \sigma \rangle} \ dS_{\mathbb S^{n-1}}^{\otimes^N}(\sigma)}{\int_{\left(\mathbb S^{n-1}\right)^N} e^{\beta N \langle \varphi, \bar \sigma \rangle} \ dS_{\mathbb S^{n-1}}^{\otimes^N}(\sigma)} =\frac{\int_{\left(\mathbb S^{n-1}\right)^N} F(\sigma)  e^{\beta N \langle \varphi, \bar \sigma \rangle} \ dS_{\mathbb S^{n-1}}^{\otimes^N}(\sigma)}{\mathcal N(\varphi)^N} .
 \end{equation}
 Here, the radial normalizing function 
 \[\mathcal N(\varphi):=\int_{\mathbb S^{n-1}} e^{\beta \langle \varphi,  \sigma \rangle} \ dS_{\mathbb S^{n-1}}(\sigma)=\Gamma\left(\tfrac{n}{2}\right) \left(\tfrac{2}{ \left\lVert \beta \varphi  \right\rVert}\right)^{\frac{n}{2}-1} I_{\tfrac{n}{2}-1}( \left\lVert \beta \varphi  \right\rVert) \] is a strictly monotonically increasing function of $\vert \varphi\vert$ that satisfies $\mathcal N(\varphi) \ge 1$ and $\mathcal N(\varphi)=1$ if and only if $\varphi=0$. This follows directly from the Taylor series of the modified Bessel function.
 
Hence, we can pick $\delta$ such that $e^{\beta \langle \varphi, \bar \sigma \rangle} < (1+\mathcal N(\varphi))/2$ for all $\varphi \in \partial B(0,r_{\text{min}})$ and $\vert \bar \sigma \vert \le \delta.$ Hence, we see that for such $\varphi$ 
\[ \mathbb E_{\mu_{\varphi}}(\indic_{\vert \bar \sigma \vert \le \delta}) = \mathcal O\left(\left( \frac{1}{2\mathcal N(\varphi)}+\frac{1}{2} \right)^N \right).\]
This shows that $ \mathbb E_{\rho}(\indic_{\vert \bar \sigma \vert \le \delta}) = \mathcal O\left(\left( \frac{1}{2\mathcal N(\varphi)}+\frac{1}{2} \right)^N \right)$ and hence that $\mathcal R_N$ tends exponentially fast to zero as well, as $N$ tends to infinity, under the condition that $\beta>n.$ 
\end{proof}

\subsection{Nonzero magnetic fields for $n\ge 2$}
\begin{figure}
\centerline{\includegraphics[height=3.0cm]{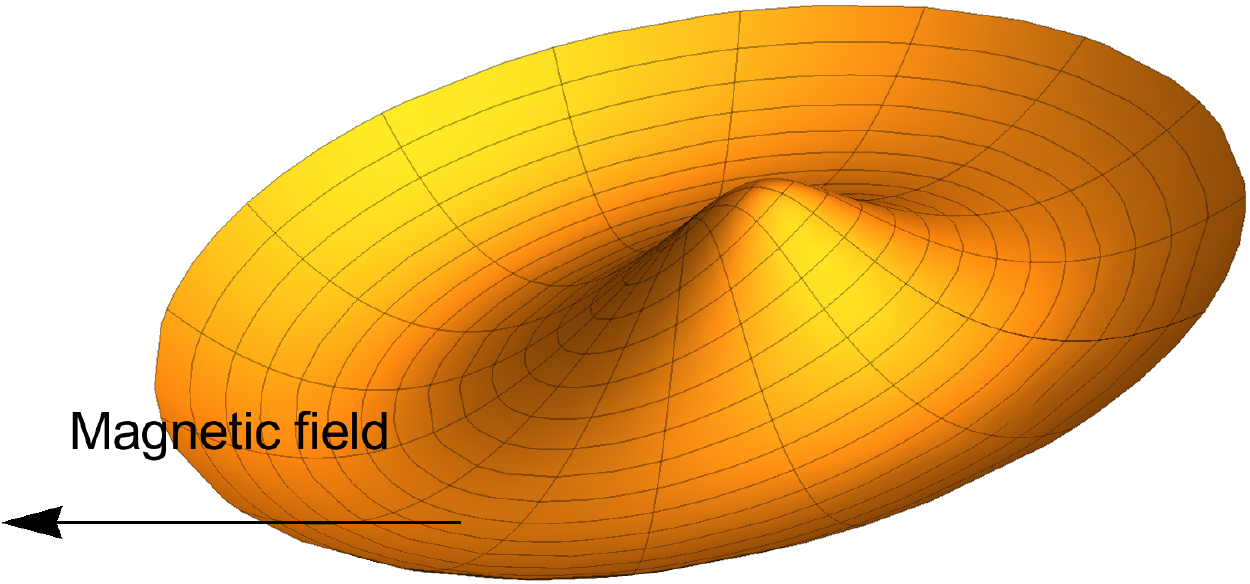}} 
\caption{\emph{XY-model:} The renormalized potential of the XY-model for $h=(-2,0)$ and $\beta=10$. The rotational symmetry is broken.}
\label{Figure5}
\end{figure}

The situation $h\neq 0$ and $n \ge 2$ cannot be reduced to a one-dimensional model due to lack of symmetries. Yet, the renormalized Schr\"odinger operator provides a very elegant tool to show that the spectral gap of the full generator of the Ginzburg-Landau dynamics dynamics remains open as $N \rightarrow \infty.$ 

\medskip

In fact, whereas the global minimum for $h=0$ of the renormalized potential is attained on a hypersphere, the global minimum for $h \neq 0$ is attained at a single point, only. This allows us to identify the asymptotic of the low-energy spectrum of the renormalized Schr\"odinger operator directly with the spectrum of a quantum harmonic oscillator.

\medskip

Let $\varphi_{\text{c}} \in \mathbb R^n$ be a critical point of the renormalized potential \eqref{eq:renpot}. We define the set 
\[ \Sigma:=\left\{ \sum_{i=1}^{n} \left(n_i\vert \lambda_i \vert + \tfrac{1}{2} \left(\vert \lambda_i \vert - \lambda_i\right)\right) , \text{ with } n_i \in \mathbb N_0, \lambda_i \in \sigma(D^2V_n(\varphi_{\text{c}})) \right\}\]
where $\lambda_1,...,\lambda_n$ comprise the entire spectrum of $D^2V_n(\varphi_{\text{c}})$.

\medskip

Let $e_k$ be the $k$-th smallest element counting multiplicity in $\Sigma$ we then have the following Proposition:
\begin{prop}
\label{Simon}
Let $h \neq 0$, $\beta \ge n$, and $n \ge 2$.  Let $E_k(\lambda)$ denote the $k$-th lowest eigenvalue of the renormalized generator then this eigenvalue satisfies the asymptotic law $\lim_{\lambda \rightarrow \infty} \frac{E_k(\lambda)}{\lambda} = e_{k}.$
In particular, the ground state of the renormalized generator in the limit as $\lambda \rightarrow \infty$ is unique and the spectral gap of the renormalized Schr\"odinger operator remains open and linearly in $\lambda$.
\end{prop}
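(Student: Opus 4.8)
The plan is to identify the renormalized generator — which by \eqref{eq:renWitt} is, up to conjugation by $e^{\pm NV_n/2}$, the Schr\"odinger operator $\Delta_{\text{ren}}=-\Delta_{\mathbb{R}^n}+\lambda^2\lvert\nabla V_n\rvert^2-\lambda\Delta V_n$ with $\lambda\asymp N$ the large parameter — with its semiclassical harmonic approximation in the sense of Simon \cite[Theo.\ 1.1]{S} and Helffer--Sj\"ostrand \cite{Helffer1,Helffer2,nier2005hypoelliptic}, and to run this through the same IMS localization already used for Propositions \ref{strongfields} and \ref{prop:radialop}. The new feature to exploit is that for $h\neq0$ and $n\ge2$ the global minimum of $V_n$ is a single \emph{non-degenerate} point, so the model operator attached to it is a bona fide harmonic oscillator with strictly positive frequencies, rather than the degenerate oscillator sitting on the minimizing hypersphere of the $h=0$ case.

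First I would settle the structure of $V_n$. Since $V_n$ depends on $\varphi$ only through $\lVert\varphi\rVert$ and $\lVert\beta\varphi+h\rVert$, the identity $\nabla V_n=0$ forces $\varphi$ parallel to $h$; hence all critical points lie on the line $\mathbb{R}h$, and restricting to that line reduces the critical point equation to the scalar fixed point equation $t=R_n(\beta t+\lVert h\rVert)$ for the strictly increasing concave Bessel ratio $R_n:=I_{n/2}/I_{n/2-1}$ (monotonicity of such quotients being Lemma \ref{eq:auxlemma}). From $R_n(0)=0<R_n(\lVert h\rVert)$ one obtains a unique positive solution $\varphi_{\text{c}}$; at it the tangential Hessian equals $\beta-\beta^2R_n'(\lVert\beta\varphi_{\text{c}}+h\rVert)\ge\beta-1>0$ and each normal Hessian eigenvalue equals $\beta\lVert h\rVert/\lVert\beta\varphi_{\text{c}}+h\rVert>0$, so $D^2V_n(\varphi_{\text{c}})>0$ and $\varphi_{\text{c}}$ is the unique local (hence global) minimum, while any further critical point on $\mathbb{R}h$ has negative normal Hessian eigenvalues and is a saddle. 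Together with the quadratic growth of $V_n$ and $\Delta V_n\in L^\infty$ this gives $\Delta_{\text{ren}}$ discrete spectrum.

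Next I would localize. For each critical point $c$ set $J_c(x)=j(\lambda^{2/5}\lvert x-c\rvert)$ with $j\in C_c^\infty((-2,2);[0,1])$, $j\equiv1$ on $[-1,1]$, and $J=\sqrt{1-\sum_cJ_c^2}$, so that $\lVert\nabla J\rVert^2,\lVert\nabla J_c\rVert^2=\mathcal{O}(\lambda^{4/5})$; on $\operatorname{supp}J$ one has $\lvert\nabla V_n\rvert\gtrsim\lambda^{-2/5}$, hence $J\Delta_{\text{ren}}J\ge(c_0\lambda^{6/5}-\mathcal{O}(\lambda))J^2$. Taylor expansion at $c$ gives $\lvert J_c(\Delta_{\text{ren}}-H_c)J_c\rvert=\mathcal{O}(\lambda^{4/5})$ for the quadratic model $H_c=-\Delta+\lambda^2\langle D^2V_n(c)^2(\cdot-c),(\cdot-c)\rangle-\lambda\Delta V_n(c)$, and the rescaling $(U_cf)(x)=\lambda^{-n/4}f(\lambda^{-1/2}(x+c))$ turns $H_c$ into $\lambda\,U_c^{-1}S_cU_c$ with the $\lambda$-independent harmonic oscillator $S_c=-\Delta+\langle D^2V_n(c)^2x,x\rangle-\Delta V_n(c)$; diagonalizing $D^2V_n(c)$ identifies $\bigcup_c\Spec(S_c)$ with $\Sigma$, whose smallest element $0$ comes — uniquely — from $c=\varphi_{\text{c}}$ and whose next element is some $e_2>0$. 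The IMS formula \cite[(11.37)]{CFKS},
\[
\Delta_{\text{ren}}=J\Delta_{\text{ren}}J-\lvert\nabla J\rvert^2+\sum_c\Bigl(J_cH_cJ_c+J_c(\Delta_{\text{ren}}-H_c)J_c-\lvert\nabla J_c\rvert^2\Bigr),
\]
combined with the split $J_cH_cJ_c\ge J_cH_cP_cJ_c+\tau\lambda\,J_c(\operatorname{id}-P_c)J_c$ (where $P_c$ is the spectral projection of $H_c$ below $\tau\lambda$ and $\tau\in(e_{k-1},e_k)$) and with all $\mathcal{O}(\lambda^{4/5})=\smallO(\lambda)$ errors absorbed, yields $\Delta_{\text{ren}}\ge\tau\lambda+R_\lambda-\smallO(\lambda)$ with $\operatorname{rank}R_\lambda\le\sum_c\operatorname{rank}P_c=k-1$; the min--max principle then gives $\liminf_\lambda E_k(\lambda)/\lambda\ge e_k$, and the matching upper bound comes from feeding the first $k$ normalized eigenfunctions of $\bigoplus_cH_c$ (concentrated on the scale $\lambda^{-1/2}\ll\lambda^{-2/5}$, hence untouched by the cut-offs) into the Rayleigh quotient of $\Delta_{\text{ren}}$. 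So $E_k(\lambda)/\lambda\to e_k$; since $e_1=0$ is simple and realized exactly by $e^{-\lambda V_n}$ while $e_2>0$, the ground state becomes unique and the gap $E_2(\lambda)-E_1(\lambda)=\lambda e_2+\smallO(\lambda)$ grows linearly in $N$.

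The hard part is the structural step: proving that for \emph{every} $h\neq0$ and $\beta\ge n$ the potential $V_n$ has a unique non-degenerate global minimum and no spurious second minimum. This is precisely where $n\ge2$ parts ways with the Ising case — for $n=1$ the off-axis critical point is a genuine metastable minimum, which is what forces the exponentially small gap of Theorem \ref{theo:theo1}, whereas for $n\ge2$ its normal Hessian eigenvalues are negative, so it is only a saddle and enters the spectrum at order $\lambda$ rather than near $0$. The remainder is the IMS bookkeeping, run verbatim as in Propositions \ref{strongfields} and \ref{prop:radialop}; should an exceptional pair $(\beta,h)$ make a non-minimal critical point degenerate, one enlarges its cut-off scale exactly as in the critical-field analysis of Section \ref{sec:Ising}.
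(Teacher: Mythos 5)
Your proposal is correct in substance and its structural half matches the paper's: you show all critical points of $V_n$ lie on the span of $h$, that the aligned branch has exactly one solution because the Bessel ratio $R_n=I_{n/2}/I_{n/2-1}$ is increasing and concave (monotonicity being Lemma \ref{eq:auxlemma}), that the Hessian at that point is positive definite, and that the anti-aligned critical points have negative normal Hessian eigenvalues and therefore contribute only strictly positive elements to $\Sigma$ --- the paper does the same, only in the shifted variable $\zeta=\beta\varphi+h$ and with the tangential positivity extracted from the identity $g_\beta'(\lVert\zeta\rVert)=g_\beta(\lVert\zeta\rVert)/\lVert\zeta\rVert+\lVert\zeta\rVert\mathcal I'(\lVert\zeta\rVert)/\mathcal I(\lVert\zeta\rVert)^2$. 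Where you genuinely diverge is the spectral step: the paper, having checked $\tfrac12 D^2\lvert\nabla V_n\rvert^2(\varphi_{\text{c}})=(D^2V_n(\varphi_{\text{c}}))^2>0$, simply invokes the semiclassical eigenvalue asymptotics of \cite[Theorem 1.1]{S}, whereas you re-derive the limit $E_k(\lambda)/\lambda\to e_k$ by the same IMS localization used in Propositions \ref{strongfields} and \ref{prop:radialop} (cut-offs at scale $\lambda^{-2/5}$, the bound $\lvert\nabla V_n\rvert\gtrsim\lambda^{-2/5}$ off the wells, $\mathcal O(\lambda^{4/5})$ Taylor and commutator errors, rank-bounded projections, harmonic trial states for the upper bound). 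Your route is self-contained, keeps the whole paper within one localization scheme, and makes explicit how a possibly degenerate non-minimal critical point would be absorbed (a point the paper treats only in passing); the paper's route is shorter because it outsources exactly this analysis to Simon's theorem.

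One step needs repair: your claimed tangential bound $\beta-\beta^2R_n'(\lVert\beta\varphi_{\text{c}}+h\rVert)\ge\beta-1$ amounts to $R_n'\le\beta^{-2}$ at the minimizer, which is false in general --- for large argument $R_n'(r)\sim\tfrac{n-1}{2r^2}$ and, with $\lVert\beta\varphi_{\text{c}}+h\rVert\approx\beta$ at low temperature, $\beta^2R_n'$ approaches $\tfrac{n-1}{2}>1$ as soon as $n\ge 4$. What you actually need is only strict positivity, and your own set-up delivers it: since $t\mapsto R_n(\beta t+\lVert h\rVert)-t$ is concave, positive at $t=0$ and tends to $-\infty$, at its unique zero its derivative is strictly negative, so $\beta R_n'(\lVert\beta\varphi_{\text{c}}+h\rVert)<1$ and the tangential eigenvalue $\beta\bigl(1-\beta R_n'\bigr)$ is strictly positive (alternatively, use the paper's computation of $g_\beta'$ via Lemma \ref{eq:auxlemma}). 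Your formula $\beta\lVert h\rVert/\lVert\beta\varphi_{\text{c}}+h\rVert$ for the normal eigenvalues is exactly right, and with the corrected tangential argument the proof is complete.
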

\begin{proof}
When $h \neq 0$ then the renormalized potential has a unique non-degenerate minimum. To see this recall that the renormalized potential reads
\begin{equation*}
\begin{split}
V_n(\varphi,h)
& = \tfrac{\beta}{2}\left(1+\left\lVert \varphi \right\rVert^2 \right) - \log \left( \Gamma\left(\frac{n}{2}\right) \left(\tfrac{2}{ \left\lVert\beta \varphi +h\right\rVert}\right)^{\frac{n}{2}-1} I_{\tfrac{n}{2}-1}(\left\lVert \beta \varphi +h\right\rVert) \right).
\end{split}
\end{equation*}
Introducing the new variable $\zeta:=\beta \varphi + h$ implies that 
\begin{equation}
\begin{split}
\label{eq:potentialreni}
V_n(\varphi(\zeta),h)
& = \tfrac{1}{2\beta}\left(\beta^2+\left\lVert \zeta-h \right\rVert^2 \right) - \log \left( \Gamma\left(\frac{n}{2}\right) \left(\tfrac{2}{ \left\lVert\zeta \right\rVert}\right)^{\frac{n}{2}-1} I_{\tfrac{n}{2}-1}(\left\lVert \zeta\right\rVert) \right) \\
&= \tfrac{1}{2\beta }\left(\beta^2+\left\lVert \zeta \right\rVert^2+\left\lVert h\right\rVert^2 -2  \langle \zeta,h \rangle \right) - \log \left( \Gamma\left(\frac{n}{2}\right) \left(\tfrac{2}{ \left\lVert\zeta \right\rVert}\right)^{\frac{n}{2}-1} I_{\tfrac{n}{2}-1}(\left\lVert \zeta\right\rVert) \right).
\end{split}
\end{equation}
Computing the gradient of that expression yields 
\begin{equation}
\begin{split}
\label{eq:gradiennt}
\nabla_{\zeta} V_n(\varphi(\zeta),h)= -\frac{1}{\beta}h + g_{\beta}(\Vert \zeta \Vert) \widehat{e}_{\zeta}
\end{split}
\end{equation}
where we introduced the auxiliary function $g_{\beta}(r):= \left(\frac{r}{\beta} -  \frac{I_{n/2} (r)}{I_{n/2-1}( r)}\right)$. Thus for the gradient to vanish the vectors $h$ and $\zeta$ have to be linearly dependent. 

Assuming thus that $\widehat{e}_h = \pm \widehat{e}_{\zeta}$ we obtain from setting the gradient to zero the following equation 
\[\frac{ \beta I_{n/2} (\left\lVert \zeta \right\rVert) }{ I_{n/2-1}( \left\lVert \zeta \right\rVert)}=( \left\lVert \zeta \right\rVert\mp \left\lVert h \right\rVert).\]
Thus, when $h$ and $\zeta$ are aligned, there is precisely one solution, the global minimum of the renormalized potential, satisfying 
\[\frac{ \beta I_{n/2} (\left\lVert \zeta \right\rVert) }{ I_{n/2-1}( \left\lVert \zeta \right\rVert)}=( \left\lVert \zeta \right\rVert- \left\lVert h \right\rVert)\]
with $g_{\beta}(\Vert \zeta \Vert)= \beta^{-1}\Vert h \Vert>0.$ That the aligned scenario corresponds to the global minimum is evident from the expression of the renormalized potential \eqref{eq:potentialreni}.

The simplicity of the solution follows since the left hand side $\tfrac{ \beta I_{n/2} (\left\lVert \zeta \right\rVert) }{ I_{n/2-1}( \left\lVert \zeta \right\rVert)}$ is a concave, monotonically increasing function from $0$ to $\beta$ as $\left\lVert \zeta \right\rVert \rightarrow \infty.$

When $h$ and $\zeta$ point in opposite directions, there can, by concavity of the left-hand side, be between zero and two solutions to the equation
\[\frac{ \beta I_{n/2} (\left\lVert \zeta \right\rVert) }{ I_{n/2-1}( \left\lVert \zeta \right\rVert)}=( \left\lVert \zeta \right\rVert+ \left\lVert h \right\rVert)\]
with $g_{\beta}(\Vert \zeta \Vert)= -\beta^{-1}\Vert h \Vert<0.$
In particular, for sufficiently low temperatures there exists a local maximum and a saddle point of the renormalized potential as shown in Figure \ref{Figure5}.

From differentiating \eqref{eq:gradiennt}, the Hessian is given by
\begin{equation}
\begin{split}
\label{eq:Hessian}
D^2_{\zeta} V_n(\varphi(\zeta),h)= g'(\Vert \zeta \Vert)\frac{\zeta \zeta^T}{\Vert \zeta \Vert^2}+g_{\beta}(\Vert \zeta \Vert)\left(\frac{\operatorname{id}}{\Vert \zeta \Vert}-\frac{\zeta \zeta^T}{\Vert \zeta \Vert^3}\right).
\end{split}
\end{equation}
We note that the Hessian has full rank unless at critical points unless $g_{\beta}'(\Vert \zeta \Vert)+g_{\beta}(\Vert \zeta \Vert)(\Vert \zeta \Vert - \Vert \zeta \Vert^{-1})=0,$ since $g_{\beta}(\Vert \zeta \Vert)\neq 0$ by \eqref{eq:gradiennt} for non-zero magnetic fields.

\medskip

In addition, there can be only a saddle point which can only happen at one fixed temperature depending on $n$. 

\medskip

Finally, if the temperature is sufficiently high, yet still such that $\beta>n$, there may be no critical point if $h$ and $\zeta$ point in opposite directions. This is in particular the case when $\beta=n$ and $h \neq 0$: Taylor expansion at zero yields
\[ \frac{ \beta I_{n/2} (\left\lVert \zeta \right\rVert) }{ I_{n/2-1}( \left\lVert \zeta \right\rVert)} = \frac{\beta \Gamma(n/2)}{2\Gamma(1+n/2)} \Vert \zeta \Vert + \mathcal O(\Vert \zeta  \Vert^2)\] 
where for $\beta=n$ we find $\frac{n \Gamma(n/2)}{2\Gamma(1+n/2)}=1$ and concavity of the function $\Vert \zeta \Vert \mapsto \frac{ \beta I_{n/2} (\left\lVert \zeta \right\rVert) }{ I_{n/2-1}( \left\lVert \zeta \right\rVert)} $ show.

Thus $\Vert \nabla V_n \Vert^2$ vanishes at not more than three critical points $\varphi_{\text{c}}$ on the span of $h$. In particular, all eigenvalues of $D^2V_n$ are non-negative only at the global minimum of $V_n$ by \eqref{eq:Hessian}, since we already established that $g'(\Vert \zeta \Vert)<0$ at the other two. To see that they are strictly positive there, it suffices to analyze for $r=\Vert \zeta\Vert$
\begin{equation}
\begin{split}
g_{\beta}'(\Vert \zeta \Vert) &= \left(\beta^{-1} -\mathcal I(\Vert \zeta \Vert)^{-1} \right) + \Vert \zeta \Vert \frac{\mathcal I'(\Vert \zeta \Vert)}{\mathcal I(\Vert \zeta \Vert)^2} \\
&=\frac{g_{\beta}(\Vert \zeta \Vert)}{\Vert \zeta \Vert} + \frac{ \Vert \zeta \Vert \mathcal I'(\Vert \zeta \Vert)}{\mathcal I(\Vert \zeta \Vert)^2}>0.
\end{split}
\end{equation}

Hence, we find that
\begin{equation}
\begin{split}
g_{\beta}'(\Vert \zeta \Vert)+g_{\beta}(\Vert \zeta \Vert)(\Vert \zeta \Vert - \Vert \zeta \Vert^{-1}) = g_{\beta}(\Vert \zeta \Vert) \Vert \zeta \Vert +  \frac{ \Vert \zeta \Vert \mathcal I'(\Vert \zeta \Vert)}{\mathcal I(\Vert \zeta \Vert)^2}.
\end{split}
\end{equation}
In particular, this expression is strictly positive at the global minimum, since $g_{\beta}(\Vert \zeta \Vert) >0$ and $\mathcal I'(\Vert \zeta \Vert)>0$ by general principles, see Lemma \ref{eq:auxlemma}.

The asymptotic behaviour of the spectrum of the renormalized Schr\"odinger operator has been computed in \cite{S} and our above representation of $\Sigma$ follows by noticing that $\tfrac{1}{2} D^2  \left\lvert \nabla V_n \right\rvert^2(\varphi_{\text{c}}) = \left(D^2V_n(\varphi_{\text{c}})\right)^2>0.$ 

Since the renormalized Schr\"odinger operator and renormalized generator are unitarily equivalent up to a factor, the semiclassical eigenvalue distribution stated in \cite[Theorem $1.1$]{S} implies the statement of the Proposition. 
\end{proof}

\section{The critical regime, Proof of Theo. \ref{theo:theo2}}
\label{sec:critreg}
We conclude our analysis by investigating the  critical case $\beta=n$ and prove Theorem \ref{theo:theo2}. As before, we distinguish between $n=1$ and multi-component systems $n \ge 2:$
\subsection{Critical Ising model}
It follows from \eqref{eq:h=0}, which always vanishes for all $x>0$, that the spectral gap, at the critical point $\beta=n=1$, does not close exponentially fast in the number of spins. We want to show in this subsection that it closes at least polynomially, though. For a refined analysis in dimension $n=1$, we recall some basic ideas from discrete Fourier analysis: 
\begin{figure}
\centering
\begin{minipage}[b]{0.45\textwidth}
\includegraphics[width=\linewidth]{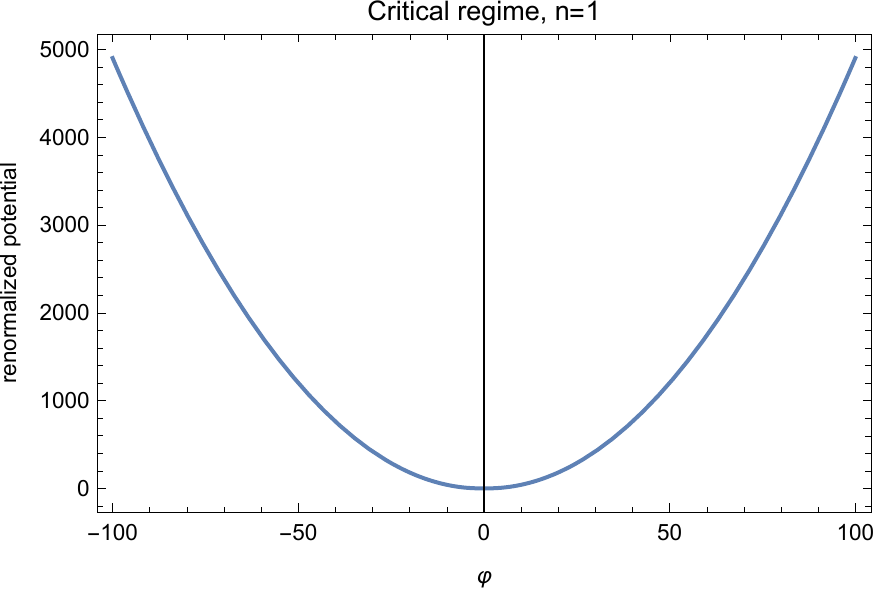}
\caption{The renormalized potential $V_1$ for $\beta=1$, $h=0$ is a symmetric convex function.}
\end{minipage}
\begin{minipage}[b]{0.45\textwidth}
\includegraphics[width=\linewidth]{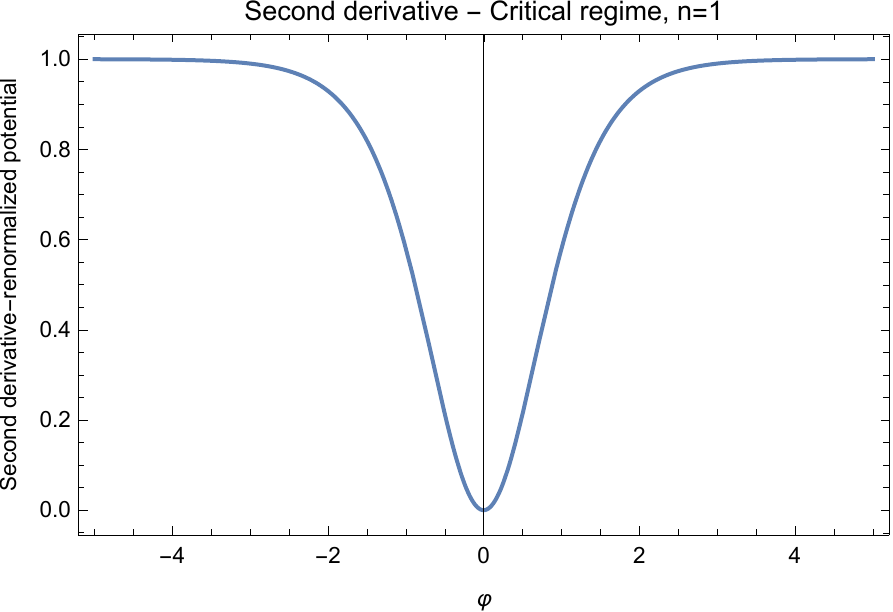} 
\caption{Strong convexity of the renormalized potential $V_1$ fails at the origin, $\varphi=0.$}
\end{minipage}
\label{Figurecrit}
\end{figure}

Let $f: \left\{ \pm 1 \right\}^N \rightarrow \mathbb C$ be an arbitrary function on the hypercube. The $L^2(\left\{ \pm 1\right\}^N,2^{-N}d\mu_{\text{count}})$ inner product on the hypercube is defined as 
\[\langle f,g \rangle_{\left\{ \pm 1\right\}^N} := \sum_{x \in \left\{ \pm 1\right\}^N} 2^{-N} f(x_1,..,x_N)\overline{g(x_1,..,x_N)}.\]
The characteristic function $\chi_S$ for $S \subset [N]$ is defined as $\chi_S(x):= \prod_{i \in S} \sigma_i$
and the family $(\chi_S)_{S \subset [N]}$ forms an orthonormal basis of $L^2(\left\{ \pm 1\right\}^N).$ In particular, $\chi_{\emptyset}=1.$
We also define indicator vectors $\indic_{S} \in \mathbb R^N$ such that $\indic_{S}(x)=1$ if $x \in S$ and $0$ otherwise. 

Every function $f \in L^2(\left\{ \pm 1\right\}^N)$ admits a unique Fourier decomposition

\begin{equation}
\label{eq:fourier}
 f = \sum_{ S \subset [N]} \widehat{f}(S) \chi_S
\end{equation}
where $\widehat{f}(S) := \langle f, \chi_S \rangle.$ The variance of the stationary measure is given as the sum of
\begin{equation}
\label{eq:variancedecomposition}
 \operatorname{Var}_{\rho}(f)=\mathbb E_{\nu_N}( \operatorname{Var}_{\mu_{\varphi}}(f))+ \operatorname{Var}_{\nu_{N}}(\mathbb E_{\mu_{\varphi}}(f)).
 \end{equation}
Since the first term on the right-hand side of this equation is always uniformly bounded by the Dirichlet form, as shown in the proof of Proposition \ref{theo1}, it suffices to study the behaviour of the second term.
Thus, applying the expectation with respect to the fluctuation measure yields by the Fourier decomposition \eqref{eq:fourier}
\begin{equation}\label{eq:fourier2}
 \mathbb E_{\mu_{\varphi}}(f)  = \sum_{ S  \subset [N]} \widehat{f}(S) \mathbb E_{\mu_{\varphi}}( \chi_S).
\end{equation}
In particular, using the explicit  form of $V_1(\phi)$, a direct computation yields for all $x \in [N]$
\[\mathbb E_{\mu_{\varphi}}( \sigma(x) )= e^{V_1(\varphi)} \frac{\left(e^{-\frac{\beta}{2} \vert \varphi-1 \vert^2+h }-e^{-\frac{\beta}{2}\vert \varphi+1 \vert^2-h } \right) }{2}=  \tanh(\beta \varphi+h). \]

Using that $\mu_{\varphi}$ is a product measure, this implies that the full expression for \eqref{eq:fourier2} is given by 
\begin{equation}
\begin{split}
\mathbb E_{\mu_{\varphi}}(f)&= \sum_{ S  \subset [N]} \widehat{f}(S)( \tanh(\beta \varphi+h))^{\vert S \vert}.
\end{split}
\end{equation}
Hence, we find for the variance 
\begin{equation}
\begin{split}
\label{eq:variance} \operatorname{Var}_{\nu_{N}}(\mathbb E_{\mu_{\varphi}}(f)) &=  \sum_{S_1,S_2 \subset[N]}^N\widehat{f}(S_1)\widehat{f}(S_2)\Big( \mathbb E_{\nu_N}\left( \tanh(\beta \varphi+h)^{\vert S_1\vert+\vert S_2\vert} \right)\\
& \qquad -\mathbb E_{\nu_N}\left( \tanh(\beta \varphi+h)^{\vert S_1\vert} \right)\mathbb E_{\nu_N}\left( \tanh(\beta \varphi+h)^{\vert S_2\vert} \right) \Big)
\end{split}
\end{equation}
For the Dirichlet form, we find, with $S_1 \triangle S_2$ denoting the symmetric difference of sets $S_1$ and $S_2$,
\begin{equation}
\label{eq:dirichlet}
\begin{split}
\sum_{x \in [N]} \mathbb E_{\rho}\left\lvert \nabla_{\mathbb S^{0}}^{(x)} f \right\rvert^2&= \sum_{x \in [N]}\sum_{ S_1,S_2  \subset [N]}\widehat{f}(S_1)\widehat{f}(S_2)\mathbb E_{\rho}(\nabla_{\mathbb S^{0}}^{(x)}\chi_{S_1} \nabla_{\mathbb S^{0}}^{(x)}\chi_{S_2}) \\
&= 4\sum_{x \in [N]}\sum_{ S_1,S_2  \subset [N]}\delta_{x \in S_1}\delta_{x \in S_2}\widehat{f}(S_1)\widehat{f}(S_2)\mathbb E_{\rho}(\chi_{S_1} \chi_{S_2}) \\
&= 4\sum_{ S_1,S_2  \subset [N]}\sum_{x \in S_1 \cap S_2}\widehat{f}(S_1)\widehat{f}(S_2)\mathbb E_{\rho}(\chi_{S_1\triangle S_2}) \\
&= 4\sum_{ S_1,S_2  \subset [N]} \langle \indic_{S_1}, \indic_{S_2} \rangle_{\mathbb R^N} \widehat{f}(S_1)\widehat{f}(S_2)\mathbb E_{\rho}(\chi_{S_1\triangle S_2})\\
&= \mathbb E_{\rho} \left\lvert 2\sum_{ S  \subset [N]} \indic_{S} \widehat{f}(S) \chi_S \right\rvert^2_{\mathbb R^N}.
\end{split}
\end{equation}
\begin{prop}
\label{prop1}
For zero magnetic fields, i.e. $h=0$, and $\beta \ge 1$,all functions with Fourier support on sets of fixed cardinality $k \in \mathbb N$, i.e. for $f$ given as 
\[f = \sum_{S \subset [N]; \left \lvert S \right\rvert = k} \widehat{f}(S) \chi_S.\]
satisfy the inequality $\operatorname{Var}_{\nu_{N}}(\mathbb E_{\mu_{\varphi}}(f)) \le \tfrac{N}{4k} \sum_{x \in [N]} \left\lvert  \nabla_{\mathbb S^{n-1}}^{(x)}f \right\rvert^2_{L^2(d\rho)}.$

In particular, for the magnetization
\begin{equation} 
\label{eq:mag}
M = \tfrac{1}{\sqrt{N}} \sum_{x \in [N]} \sigma(x) 
\end{equation}
we obtain an inequality 

\[ \operatorname{Var}_{\nu_{N}}(\mathbb E_{\mu_{\varphi}}(M)) =  \tfrac{N \mathbb E_{\nu_N} \left(\tanh( \beta \varphi)^{2}\right)}{4} \sum_{x \in [N]} \left\lvert  \nabla_{\mathbb S^{n-1}}^{(x)}M \right\rvert^2_{L^2(d\rho)}. \] 

Moreover, the spectral gap for critical $\beta=1$ closes at least like $\mathcal O(N^{-1/2}).$
\end{prop}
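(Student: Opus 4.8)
\emph{Strategy.} The plan is to establish the level-$k$ inequality from the Fourier identities, to read off the magnetization statement as the case $k=1$, and then to use $M$ as a trial function in \eqref{def:spectral_gap} together with a Laplace-type asymptotic for $\nu_N$ at $\beta=1$. The first step is essentially algebraic; the quantitative work sits entirely in the last step.

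\emph{Step 1: the level-$k$ inequality.} Since $h=0$ and $\mu_\varphi$ is a product measure with $\mathbb E_{\mu_\varphi}(\sigma(x))=\tanh(\beta\varphi)$, we have $\mathbb E_{\mu_\varphi}(\chi_S)=\tanh(\beta\varphi)^{|S|}$, so for $f=\sum_{|S|=k}\widehat f(S)\chi_S$ the conditional expectation is the scalar multiple $\mathbb E_{\mu_\varphi}(f)=c\,\tanh(\beta\varphi)^k$ with $c:=\sum_{|S|=k}\widehat f(S)$, whence $\operatorname{Var}_{\nu_N}(\mathbb E_{\mu_\varphi}(f))=c^2\operatorname{Var}_{\nu_N}(\tanh(\beta\varphi)^k)\le c^2\,\mathbb E_{\nu_N}(\tanh(\beta\varphi)^{2k})$. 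On the other hand, \eqref{eq:dirichlet} together with $\mathbb E_\rho(\chi_T)=\mathbb E_{\nu_N}(\tanh(\beta\varphi)^{|T|})$ and $|S_1\triangle S_2|=2(k-|S_1\cap S_2|)$ gives
\[
\tfrac14\sum_{x\in[N]}\bigl\lVert\nabla^{(x)}_{\mathbb S^0}f\bigr\rVert^2_{L^2(d\rho)}
=\sum_{S_1,S_2}|S_1\cap S_2|\,\widehat f(S_1)\widehat f(S_2)\,\mathbb E_{\nu_N}\!\bigl(\tanh(\beta\varphi)^{2(k-|S_1\cap S_2|)}\bigr).
\]
The heart of the matter is the pointwise bound, valid for every $t\in[0,1]$,
\[
\sum_{S_1,S_2}|S_1\cap S_2|\,\widehat f(S_1)\widehat f(S_2)\,t^{\,k-|S_1\cap S_2|}\ \ge\ \frac{k^2 t^k}{N}\,c^2 .
\]
To prove it, introduce i.i.d.\ signs $Y_1,\dots,Y_N$ with $\mathbb E(Y_x)=\sqrt t$; using $Y_x^2=1$ one has $\mathbb E\bigl(\chi_{S_1}(Y)\chi_{S_2}(Y)\bigr)=\mathbb E\bigl(\chi_{S_1\triangle S_2}(Y)\bigr)=t^{\,k-|S_1\cap S_2|}$ for $|S_1|=|S_2|=k$ and $\mathbb E(\chi_S(Y))=t^{k/2}$. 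Hence the left-hand side equals $\mathbb E_Y\bigl\lvert\sum_{|S|=k}\indic_S\widehat f(S)\chi_S(Y)\bigr\rvert^2_{\mathbb R^N}$, which by Jensen is at least $\bigl\lvert\mathbb E_Y\sum_{|S|=k}\indic_S\widehat f(S)\chi_S(Y)\bigr\rvert^2_{\mathbb R^N}=t^k\sum_{x\in[N]}\bigl(\sum_{S\ni x}\widehat f(S)\bigr)^2\ge\frac{t^k}{N}\bigl(\sum_S|S|\widehat f(S)\bigr)^2=\frac{k^2 t^k}{N}c^2$ by Cauchy--Schwarz. Substituting $t=\tanh(\beta\varphi)^2$ and averaging over $\nu_N$ yields $\tfrac14\sum_x\lVert\nabla^{(x)}_{\mathbb S^0}f\rVert^2\ge\tfrac{k^2}{N}c^2\,\mathbb E_{\nu_N}(\tanh(\beta\varphi)^{2k})\ge\tfrac{k^2}{N}\operatorname{Var}_{\nu_N}(\mathbb E_{\mu_\varphi}(f))$, i.e.\ $\operatorname{Var}_{\nu_N}(\mathbb E_{\mu_\varphi}(f))\le\tfrac{N}{4k^2}\sum_x\lVert\nabla^{(x)}_{\mathbb S^0}f\rVert^2\le\tfrac{N}{4k}\sum_x\lVert\nabla^{(x)}_{\mathbb S^0}f\rVert^2$. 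Specialising to $M=N^{-1/2}\sum_x\sigma(x)$, where $k=1$, $\widehat M(\{x\})=N^{-1/2}$ on singletons and so $c=\sqrt N$, gives $\operatorname{Var}_{\nu_N}(\mathbb E_{\mu_\varphi}(M))=N\,\mathbb E_{\nu_N}(\tanh(\beta\varphi)^2)$ (the mean of $\tanh(\beta\varphi)$ vanishes by the $h=0$ symmetry of $\nu_N$); since flipping spin $x$ changes $M$ by $2N^{-1/2}\sigma(x)$ we also get $\sum_x\lVert\nabla^{(x)}_{\mathbb S^0}M\rVert^2_{L^2(d\rho)}=\sum_x4N^{-1}=4$, which is the displayed identity.

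\emph{Step 2: the gap at $\beta=1$.} Testing \eqref{def:spectral_gap} with $M$ and using $-\langle LM,M\rangle_{L^2(d\rho)}=\beta^{-1}\sum_x\lVert\nabla^{(x)}_{\mathbb S^0}M\rVert^2_{L^2(d\rho)}=4$ at $\beta=1$, one gets $\lambda_N\le 4/\operatorname{Var}_\rho(M)$, while the variance decomposition and nonnegativity of $\mathbb E_{\nu_N}(\operatorname{Var}_{\mu_\varphi}(M))$ give $\operatorname{Var}_\rho(M)\ge\operatorname{Var}_{\nu_N}(\mathbb E_{\mu_\varphi}(M))=N\,\mathbb E_{\nu_N}(\tanh(\varphi)^2)$. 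It remains to show $\mathbb E_{\nu_N}(\tanh(\varphi)^2)\gtrsim N^{-1/2}$. At $\beta=1$, $V_1(\varphi)=\tfrac12\varphi^2-\log\cosh\varphi$ is convex with unique minimum at $0$, where $V_1''(0)=V_1'''(0)=0$ and $V_1^{(4)}(0)=2$, so $V_1(\varphi)=\tfrac{\varphi^4}{12}+\mathcal O(\varphi^6)$ near $0$ and $V_1(\varphi)\gtrsim|\varphi|$ at infinity. The substitution $\varphi=N^{-1/4}u$ gives $NV_1(N^{-1/4}u)\to u^4/12$ locally uniformly, with a dominating integrable envelope coming from the convexity of $V_1$; by dominated convergence $\mathbb E_{\nu_N}(\varphi^2)=N^{-1/2}\bigl(\int_{\mathbb R}u^2e^{-u^4/12}\,du\bigr)\bigl(\int_{\mathbb R}e^{-u^4/12}\,du\bigr)^{-1}(1+\smallO(1))$, and since $\tanh(\varphi)^2=\varphi^2+\mathcal O(\varphi^4)$ with $\mathbb E_{\nu_N}(\varphi^4)=\mathcal O(N^{-1})=\smallO(N^{-1/2})$, also $\mathbb E_{\nu_N}(\tanh(\varphi)^2)=\kappa N^{-1/2}(1+\smallO(1))$ for an explicit $\kappa>0$. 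Hence $\operatorname{Var}_\rho(M)\ge\kappa N^{1/2}(1+\smallO(1))$ and $\lambda_N\le 4\kappa^{-1}N^{-1/2}(1+\smallO(1))=\mathcal O(N^{-1/2})$.

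\emph{Main obstacle.} The one real difficulty is the last estimate: one must recognise the $N^{-1/4}$ scale on which the critical renormalized measure concentrates — a consequence of the purely quartic vanishing of $V_1$ at the origin when $\beta=1$ — and make the associated Laplace/dominated-convergence passage rigorous, the uniform integrability being supplied by the convexity (hence eventual linear growth) of $V_1$. The positivity identity in Step 1, although the technical core of the level-$k$ bound, is a short computation once the auxiliary random signs $Y_x$ are introduced, and the magnetization identity is then purely bookkeeping.
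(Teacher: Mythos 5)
Your proposal is correct, and its second half — testing the variational characterization \eqref{def:spectral_gap} with $M$, discarding $\mathbb E_{\nu_N}(\operatorname{Var}_{\mu_\varphi}(M))$ in the decomposition \eqref{eq:variancedecomposition}, computing $\sum_x\lVert\nabla^{(x)}_{\mathbb S^0}M\rVert^2_{L^2(d\rho)}=4$, and extracting $\mathbb E_{\nu_N}(\tanh(\varphi)^2)\sim\kappa N^{-1/2}$ from the purely quartic minimum of $V_1$ at $\beta=1$ via the $\varphi=N^{-1/4}u$ scaling — is exactly the paper's argument (the paper simply cites Laplace's principle for the asymptotic and leaves the trial-function step implicit; your dominated-convergence envelope from convexity is a routine but welcome justification, and dropping the additive constant $\tfrac12$ in $V_1$ is harmless since it cancels in the normalized measure). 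Where you genuinely deviate is in the level-$k$ inequality. The paper bounds $\operatorname{Var}_{\nu_N}(\mathbb E_{\mu_\varphi}f)$ by $\mathbb E_\rho(f^2)$ via two applications of Jensen, observes that $kf$ equals the inner product of the $\mathbb R^N$-valued function $\sum_S\widehat f(S)\,\indic_S\chi_S$ with $\indic_{[N]}$, applies Cauchy--Schwarz in $\mathbb R^N$, and then invokes the Dirichlet-form identity \eqref{eq:dirichlet}. You instead use that $\mathbb E_{\mu_\varphi}(f)=c\,\tanh(\beta\varphi)^k$ is an explicit scalar function of $\varphi$, rewrite the Dirichlet form as a quadratic form in the Fourier coefficients with kernel $|S_1\cap S_2|\,t^{\,k-|S_1\cap S_2|}$, $t=\tanh(\beta\varphi)^2$, and prove a pointwise-in-$t$ positivity bound for that kernel by introducing auxiliary i.i.d.\ signs with mean $\sqrt t$ and applying Jensen and Cauchy--Schwarz; averaging over $\nu_N$ then closes the loop. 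Both routes yield the same constant $N/(4k^2)$ (stronger than the stated $N/(4k)$) and the same magnetization identity at $k=1$. Your version buys a cleaner intermediate statement — a comparison of the Dirichlet form with $c^2\,\mathbb E_{\nu_N}(\tanh^{2k})$ that holds for each fixed $\varphi$, rather than passing through the larger quantity $\mathbb E_\rho(f^2)$ — at the cost of the extra randomization device; the paper's version is slightly shorter once \eqref{eq:dirichlet} is in hand.
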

\begin{proof}
When $h=0$, it suffices to estimate the variance by Jensen's inequality as 
\begin{equation}
\begin{split}
\label{eq:bound1}
 \operatorname{Var}_{\nu_{N}}(\mathbb E_{\mu_{\varphi}}(f)) &\le \mathbb E_{\nu_N} \sum_{S_1,S_2 \subset [N] ; \left\lvert S_1 \right\rvert=\left\lvert S_2 \right\rvert=k}\widehat{f}(S_1)\widehat{f}(S_2)\mathbb E_{\mu_{\varphi}}(\chi_{S_1})\mathbb E_{\mu_{\varphi}}(\chi_{S_2})  \\
&\le \mathbb E_{\nu_N}\mathbb E_{\mu_{\varphi}} \left\lvert \sum_{S \subset[N] ; \left\lvert S_1 \right\rvert=\left\lvert S_2 \right\rvert=k}\widehat{f}(S) \chi_S \right\rvert^2  \\
&= \frac{1}{k^2} \mathbb E_{\rho} \left\lvert\left\langle \sum_{ S \subset[N] ; \left\lvert S_1 \right\rvert=\left\lvert S_2 \right\rvert=k} \widehat{f}(S)   \indic_{S} \chi_S,\indic_{[N]} \right\rangle_{\mathbb R^N} \right\rvert^2 \\
  &\le   \frac{N}{k^2}  \ \mathbb E_{\rho}\left\lvert \sum_{ S \subset [N] ;  \left\lvert S_1 \right\rvert=\left\lvert S_2 \right\rvert=k}\widehat{f}(S)\indic_{S}    \chi_S \right\rvert_{\mathbb R^N}^2.
\end{split}
\end{equation}
Using \eqref{eq:dirichlet} we then obtain the spectral gap inequality
\begin{equation}
\begin{split}
 \operatorname{Var}_{\nu_{N}}(\mathbb E_{\mu_{\varphi}}(f)) & \le   \frac{N}{k^2}  \ \mathbb E_{\rho}\left\lvert \sum_{ S \subset[N] ;\left\lvert S_1 \right\rvert=\left\lvert S_2 \right\rvert=k} \widehat{f}(S)  \indic_{S}   \chi_S \right\rvert_{\mathbb R^N}^2 \\
 & =  \frac{N}{4k^2}  \sum_{x \in [N]} \left\lvert \nabla_{\mathbb S^{0}}^{(x)} f \right\rvert_{L^2(d \rho)}^2.
  \end{split}
\end{equation}
Turning to the magnetization \eqref{eq:mag}, we can write the variance of the magnetization $M$ in terms of the expectation value $\mathbb E_{\nu_N} \left(\tanh( \beta \varphi)^{2}\right)$
\begin{equation}
\begin{split}
\label{eq:variance2}
\operatorname{Var}_{\nu_{N}}(\mathbb E_{\mu_{\varphi}}(M)) = \tfrac{1}{N} \sum_{ x,y \in [N]}  \mathbb E_{\nu_N}\left( \tanh( \beta \varphi)^{2}\right) = N \mathbb E_{\nu_N} \left(\tanh( \beta \varphi)^{2}\right).
\end{split}
\end{equation}
We now recall that $\tanh(\beta \varphi)^2 = \beta^2 \varphi^2 +\mathcal O(\varphi^4)$ and for $\beta =1$ 
\[ V_1(\varphi) = \frac{1}{2}+\frac{\varphi^4}{12}+\mathcal O(\varphi^6)\] by Taylor expanding around $0$.
It therefore follows from Laplace's principle \cite[Ch.\@ II,Theorem $1$]{W01} that 
\begin{equation}
\label{eq:Laplaceexp}
\mathbb E_{\nu_N} \left(\tanh( \beta \varphi)^{2}\right) \sim N^{1/4}N^{-3/4}=N^{-1/2}.
\end{equation}

On the other hand, we can compute the Dirichlet form of the magnetization using \eqref{eq:dirichlet}
\begin{equation}
\begin{split}
\label{eq:dirichlet2}
\sum_{x \in [N]} \mathbb E_{\rho}\left\lvert \nabla_{\mathbb S^{0}}^{(x)} M \right\rvert^2
&= \tfrac{4}{N} \mathbb E_{\rho} \left\lvert \sum_{ x  \in [N] } \indic_{S} \chi_S \right\rvert^2 =4  \mathbb E_{\rho}(1)=4.
\end{split}
\end{equation}
Thus, comparing \eqref{eq:variance2} with \eqref{eq:dirichlet2} implies the claim together with the asymptotic \eqref{eq:Laplaceexp}.
\end{proof}
While Proposition \ref{prop1} shows that the magnetization leads for critical $\beta=1$ to a spectral gap that closes at least like $\sim N^{-1/2}$, when $h=0$,  the next Proposition shows that the magnetization does not imply a vanishing spectral gap when $h > 0.$
\begin{prop}
Let $h>0$, $\beta \ge 1$, and $f$ a function with Fourier transform supported on sets of cardinality $\le k$ for some fixed $k \in \mathbb N_0$ independent of $N$, i.e. 
\[ f = \sum_{S \subset[N]; \left\lvert S \right\rvert \le k }\widehat{f}(S) \chi_S. \] 
Then such functions satisfy an improved inequality with $\varphi_{\text{min}}=\operatorname{argmin}_{\varphi} V_1(\varphi)$  
\begin{equation}
\begin{split}
\operatorname{Var}_{\nu_N}(\mathbb E_{\mu_{\varphi}}(f)) &\le \frac{ \beta^2 \operatorname{csch}^2(2(\beta \varphi_{\text{min}}+h))}{2V''_1(\varphi_{\text{min}})}  \sum_{x \in [N]}\mathbb E_{\rho} \left\lvert \nabla_{\mathbb S^0}^{(x)} f \right\rvert_{\mathbb R^N}^2(1+o(1))
\end{split}
\end{equation}
with a constant $\frac{ \beta^2 \operatorname{csch}^2(2(\beta \varphi_{\text{min}}+h))}{2V''_1(\varphi_{\text{min}})} (1+o(1))$ that strictly bounded away from zero in the limit $N \rightarrow \infty.$ In particular, $V_1''(\varphi_{\text{min}}) >0$ by the discussion in the beginning of Section \ref{sec:SGSMF}.
\end{prop}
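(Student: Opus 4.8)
The plan is to push the whole computation down to the one-dimensional renormalized measure $\nu_N$ and combine the local (harmonic) analysis of $\nu_N$ near its unique well with the Fourier identities already recorded in \eqref{eq:dirichlet}. Set $t(\varphi):=\tanh(\beta\varphi+h)$; since $\mathbb E_{\mu_\varphi}(\chi_S)=t(\varphi)^{|S|}$ and $\mu_\varphi$ is a product measure, the function of interest is $g(\varphi):=\mathbb E_{\mu_\varphi}(f)=P(t(\varphi))$, where $P(y):=\sum_{|S|\le k}\widehat f(S)\,y^{|S|}$ has degree at most $k$, independently of $N$. In the regime at hand $V_1$ is a single well with a non-degenerate global minimum $\varphi_{\text{min}}$, so $V_1''(\varphi_{\text{min}})>0$ (recalled from the discussion opening Section \ref{sec:SGSMF}); by Laplace's principle \cite[Ch.\@ II, Theorem $1$]{W01} the measure $\nu_N$ then concentrates at $\varphi_{\text{min}}$, with $\operatorname{Var}_{\nu_N}(\varphi)=\tfrac{1}{NV_1''(\varphi_{\text{min}})}(1+o(1))$ and $\mathbb E_{\nu_N}(\Psi(\varphi))\to\Psi(\varphi_{\text{min}})$ for every bounded continuous $\Psi$.

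First I would bound the variance from above by the one-dimensional Poincar\'e inequality for $\nu_N$. Because $V_1$ is locally uniformly convex at $\varphi_{\text{min}}$, the optimal spectral-gap constant of $\nu_N$ is $NV_1''(\varphi_{\text{min}})(1+o(1))$ --- this is precisely the harmonic approximation of the renormalized Schr\"odinger operator \eqref{eq:renWitt}, exactly as in the proof of Proposition \ref{strongfields}. Hence $\operatorname{Var}_{\nu_N}(g)\le\tfrac{1+o(1)}{NV_1''(\varphi_{\text{min}})}\,\mathbb E_{\nu_N}(|g'|^2)$, and differentiating $g=P\circ t$ (or, equivalently, starting from the covariance identity $\nabla_\varphi\mathbb E_{\mu_\varphi}(f)=\beta\sum_x\operatorname{cov}_{\mu_\varphi}(f,\sigma_x)$ derived after \eqref{eq:equation} together with $\operatorname{cov}_{\mu_\varphi}(f,\sigma_x)=(1-t^2)\sum_{S\ni x}\widehat f(S)t^{|S|-1}$) gives $g'(\varphi)=\beta(1-t(\varphi)^2)P'(t(\varphi))$. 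So $\mathbb E_{\nu_N}(|g'|^2)\to\beta^2(1-t_0^2)^2P'(t_0)^2$ with $t_0:=t(\varphi_{\text{min}})$.

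Next I would bound the Dirichlet form from below on the Fourier side. Since $\nabla^{(x)}_{\mathbb S^0}\chi_S$ equals $2\chi_S$ if $x\in S$ and $0$ otherwise, we have $\nabla^{(x)}_{\mathbb S^0}f=2\sigma_x\sum_{S\ni x}\widehat f(S)\chi_{S\setminus\{x\}}$, the sum being a function of $(\sigma_y)_{y\ne x}$ only; Jensen's inequality under the product measure $\mu_\varphi$ therefore gives $\mathbb E_{\mu_\varphi}|\nabla^{(x)}_{\mathbb S^0}f|^2\ge4\bigl(\sum_{S\ni x}\widehat f(S)t^{|S|-1}\bigr)^2$, and Cauchy--Schwarz over $x\in[N]$ with $\sum_x\sum_{S\ni x}\widehat f(S)t^{|S|-1}=P'(t)$ yields
\begin{equation*}
\sum_{x\in[N]}\mathbb E_\rho\bigl|\nabla^{(x)}_{\mathbb S^0}f\bigr|^2\ \ge\ \tfrac{4}{N}\,\mathbb E_{\nu_N}\!\bigl(P'(t(\varphi))^2\bigr)\ =\ \tfrac{4}{N}\,P'(t_0)^2\,(1+o(1)).
\end{equation*}
Combining with the variance estimate, the factor $P'(t_0)^2$ cancels and one obtains $\operatorname{Var}_{\nu_N}(\mathbb E_{\mu_\varphi}(f))\le\tfrac{\beta^2(1-t_0^2)^2}{4V_1''(\varphi_{\text{min}})}\sum_x\mathbb E_\rho|\nabla^{(x)}_{\mathbb S^0}f|^2(1+o(1))$; using the elementary identity $\operatorname{csch}(2u)=(1-\tanh^2u)/(2\tanh u)$ this constant is rewritten in the $\operatorname{csch}$ form of the statement, and since $h>0$ forces $\beta\varphi_{\text{min}}+h>0$, hence $t_0\in(0,1)$, and $V_1''(\varphi_{\text{min}})>0$, it is a fixed positive number, independent of $N$.

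The delicate point --- and the only genuinely non-elementary input --- is the sharp asymptotics $NV_1''(\varphi_{\text{min}})(1+o(1))$ of the Poincar\'e constant of $\nu_N$, which rests on the harmonic/semiclassical analysis of \eqref{eq:renWitt} used for Proposition \ref{strongfields}. Beyond that, the main care is in the two $(1+o(1))$ factors: rather than passing to $\varphi_{\text{min}}$ too early, one should keep both sides as $\nu_N$-integrals of bounded continuous functions of $t(\varphi)$, note that the $P'(t(\varphi))^2$ factors match \emph{exactly} before averaging, and only then apply the Poincar\'e asymptotics and $\nu_N\Rightarrow\delta_{\varphi_{\text{min}}}$ to the bounded factor $\varphi\mapsto\beta^2(1-t(\varphi)^2)^2$. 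This route also disposes of the degenerate case $P'(t_0)=0$ (for instance $f$ orthogonal to the magnetization direction), in which both sides are of the smaller order $O(N^{-2})$ but the common $P'$ factor still cancels cleanly.
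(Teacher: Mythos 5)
Your reduction of both sides to the polynomial $P$ and the treatment of the Dirichlet form (pointwise Jensen under the product measure $\mu_\varphi$ plus Cauchy--Schwarz over $x$) are fine, but the variance step contains a genuine gap. You bound $\operatorname{Var}_{\nu_N}(g)$ by the Poincar\'e inequality for $\nu_N$ with the sharp constant $NV_1''(\varphi_{\text{min}})(1+o(1))$, asserting that ``in the regime at hand $V_1$ is a single well''. The hypotheses, however, are $h>0$ and $\beta\ge 1$, which include the range $\beta>1$, $0<h<h_{\text{c}}(\beta)$ where $V_1$ is an asymmetric double well; there the spectral gap of $\nu_N$ is exponentially small in $N$ (this is exactly the content of Proposition \ref{weakone} and Theorem \ref{theo:theo1}), so the inequality $\operatorname{Var}_{\nu_N}(g)\le \tfrac{1+o(1)}{NV_1''(\varphi_{\text{min}})}\,\mathbb E_{\nu_N}(|g'|^2)$ is false for general smooth bounded $g$ and cannot be invoked. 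The paper sidesteps this entirely: since $g(\varphi)=\sum_S\widehat f(S)\tanh(\beta\varphi+h)^{|S|}$ is a fixed, $N$-independent bounded smooth function, it applies the Laplace asymptotics \eqref{eq:varianceformula} of Lemma \ref{aoev}, which only requires the unique nondegenerate \emph{global} minimum (valid for all $h>0$, even in the metastable double-well range, the secondary well contributing only $O(e^{-cN})$), and gives $\operatorname{Var}_{\nu_N}(g)=\tfrac{g'(\varphi_{\text{min}})^2}{NV_1''(\varphi_{\text{min}})}(1+o(1))$ directly, with no Poincar\'e input. Even on the single-well range, note that Proposition \ref{strongfields} only states a constant that is uniformly bounded in $N$; the sharp constant you need would have to be extracted anew from the semiclassical theorem of Simon, so your route is also heavier where it does work.

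There is a second, smaller problem in the final algebra: from $\operatorname{csch}(2u)=(1-\tanh^2u)/(2\tanh u)$ one gets $(1-t_0^2)^2=4t_0^2\operatorname{csch}^2(2u)$ with $u=\beta\varphi_{\text{min}}+h$, $t_0=\tanh u$, so your constant $\tfrac{\beta^2(1-t_0^2)^2}{4V_1''(\varphi_{\text{min}})}$ equals $\tfrac{\beta^2 t_0^2\operatorname{csch}^2(2u)}{V_1''(\varphi_{\text{min}})}$, not the stated $\tfrac{\beta^2\operatorname{csch}^2(2u)}{2V_1''(\varphi_{\text{min}})}$; the leftover factor $2t_0^2$ exceeds $1$ whenever $\tanh^2(\beta\varphi_{\text{min}}+h)>1/2$, in which case your bound is strictly weaker than the claimed one. (The discrepancy arises because the paper's chain keeps the weights $\tanh^{|S|}$, matching $\mathbb E_\rho(\chi_S)$, before comparing with the Dirichlet form via \eqref{eq:dirichlet}, whereas your Jensen step produces $\tanh^{|S|-1}$.) Your closing remark on the degenerate case $P'(t_0)=0$ is in the right spirit, but ``$\nu_N\Rightarrow\delta_{\varphi_{\text{min}}}$ applied to the bounded factor'' is not sufficient as stated: you need a Laplace localization of the ratio $\mathbb E_{\nu_N}\bigl((1-t^2)^2P'(t)^2\bigr)\big/\mathbb E_{\nu_N}\bigl(P'(t)^2\bigr)$ with the weight $P'(t(\varphi))^2$ vanishing to finite order at $\varphi_{\text{min}}$, a point the paper's use of \eqref{eq:varianceformula} does not have to confront in the same form.
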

\begin{proof}
Using \eqref{eq:varianceformula}, which applies since $V''_1(\varphi_{\text{min}})>0$ by the discussion in Subsection \ref{sec:SGSMF}, we conclude that 
\begin{equation}
\begin{split}
\label{eq:derhier}
\mathbb E_{\mu_{\varphi}}(f)&= \sum_{ S  \subset [N]} \widehat{f}(S)( \tanh(\beta \varphi+h))^{\vert S \vert}
\end{split}
\end{equation}
implies since 
\[ \frac{d}{d\varphi} \tanh(\beta \varphi+h) = \beta \operatorname{sech}^2(\beta \varphi+h)= \beta  \operatorname{csch}(\beta \varphi+h)^2 \tanh(\beta \varphi+h)^2\]
that
\begin{equation}
\begin{split}
&\operatorname{Var}_{\nu_N}(\mathbb E_{\mu_{\varphi}}(f))\\
&=\frac{1}{2N V''_1(\varphi_{\text{min}})}  \left\lvert \sum_{S \subset[N] } \widehat{f}(S)  \beta \vert S \vert \tanh(\beta \varphi_{\text{min}} +h)^{\vert S \vert+1} \operatorname{csch}(\beta \varphi_{\text{min}} +h)^2 \right\rvert^2(1+o(1))\\
&=\frac{\beta^2\tanh(\beta \varphi_{\text{min}} +h)^2\operatorname{csch}(\beta \varphi_{\text{min}} +h)^4}{2N V''_1(\varphi_{\text{min}})}  \left\lvert \sum_{S \subset[N] } \widehat{f}(S)  \tanh(\beta \varphi_{\text{min}} +h)^{\vert S \vert} \langle \indic_{S}, \indic_{[N]} \rangle_{\RR^N} \right\rvert^2(1+o(1))\\
&\le  \frac{2 \beta^2 \operatorname{csch}^2(2(\beta \varphi_{\text{min}}+h))}{V''_1(\varphi_{\text{min}})}   \left\lvert \sum_{S \subset[N] } \widehat{f}(S)  \tanh(\beta \varphi_{\text{min}} +h)^{\vert S \vert}  \indic_{S} \right\rvert^2(1+o(1))\\
&=  \frac{2 \beta^2 \operatorname{csch}^2(2(\beta \varphi_{\text{min}}+h))}{V''_1(\varphi_{\text{min}})}   \left\lvert \mathbb E_{\rho}\sum_{S \subset[N] } \widehat{f}(S)  \chi_S \indic_{S} \right\rvert^2(1+o(1))\\
&\le \frac{2 \beta^2 \operatorname{csch}^2(2(\beta \varphi_{\text{min}}+h))}{V''_1(\varphi_{\text{min}})}    \mathbb E_{\rho} \left\lvert \sum_{S \subset[N] } \widehat{f}(S)  \chi_S \indic_{S} \right\rvert^2(1+o(1))\\
&=  \frac{ \beta^2 \operatorname{csch}^2(2(\beta \varphi_{\text{min}}+h))}{2V''_1(\varphi_{\text{min}})}  \sum_{x \in [N]}\mathbb E_{\rho} \left\lvert \nabla_{\mathbb S^0}^{(x)} f \right\rvert_{\mathbb R^N}^2(1+o(1))
\end{split}
\end{equation}
where we used \eqref{eq:varianceformula} in the first line, $\vert S \vert=\langle \indic_{S},\indic_{[N]} \rangle$ in the second line, Cauchy-Schwarz and $\operatorname{csch}(x)^4\tanh(x)^2=4\operatorname{csch}(2x)^2$ in the third line, \eqref{eq:expectationformula} and \eqref{eq:derhier} in the fourth line, Jensen's inequality in the fifth line and finally \eqref{eq:dirichlet} in the last line.
\end{proof}
\subsection{Critical multi-component systems}
In this subsection, we prove the multi-component part of Theorem \ref{theo:theo2}: 

\medskip

\begin{proof}[Proof of Theorem \ref{theo:theo2}]
The magnetization $M =N^{-1/2} \sum_{x \in [N]} \sigma(x)$ has in the multi-component case always unit Dirichlet norm
\begin{equation}
\label{eq:Dirnorm}
\sum_{x \in [N]}\mathbb E_{\rho} \left\lvert \nabla_{\mathbb S^{n-1}}^{(x)} M \right\rvert^2=\sum_{x \in [N]}\mathbb E_{\rho}(N^{-1}) =1.
\end{equation}
On the other hand, we can explicitly calculate using the derivative of the modified Bessel function of the first kind, $\partial_z I_{\nu}(z) = \frac{\nu}{z} I_{\nu}(z)+I_{\nu+1}(z)$, and \eqref{eq:renpot}, the expectation value $\mathbb E_{\mu_{\varphi}}(\sigma(x))$ that is independent of $x \in[N]$ for $\varphi \neq 0$
\begin{equation}
\begin{split}
\mathbb E_{\mu_{\varphi}}(\sigma(x)) 
&= e^{NV_n(\varphi)} e^{-\frac{\beta}{2}(1+ \Vert \varphi \Vert^2)} \prod_{y \in [N]} \int_{\mathbb S^{n-1}} e^{-\beta \langle \varphi, \sigma(y)\rangle} \sigma(x) \ dS(\sigma(y)) \\
&= \frac{I_{n/2}(\Vert \beta \varphi \Vert)}{I_{n/2-1}(\Vert \beta \varphi \Vert)} \frac{\varphi}{\Vert \varphi \Vert}.
\end{split}
\end{equation}
Taylor expansion at zero then yields 
\[ \left( \mathbb E_{\mu_{\varphi}}(\sigma(x))  \right)^2=\left(\frac{I_{n/2}(\Vert \beta \varphi \Vert)}{I_{n/2-1}(\Vert \beta \varphi \Vert)}\right)^2 = \frac{\Gamma\left(\frac{n}{2}\right)^2 }{4\Gamma\left(1+\frac{n}{2}\right)^2}\Vert \beta \varphi \Vert^2+ \mathcal O(\Vert \beta \varphi \Vert^4).\]
For the renormalized potential we find by Taylor expansion, which we shall already specialize to critical temperatures $\beta=n$, at zero
\[ V_n(\varphi) = \frac{n}{2} + \frac{n^3}{8+4n} \Vert \varphi \Vert^4+ \mathcal O(\Vert \varphi \Vert^5).\]
For the magnetization $M$ \eqref{eq:mag}, we can write
\begin{equation}
\begin{split}
\label{eq:varianceIII}
\operatorname{Var}_{\nu_{N}}(\mathbb E_{\mu_{\varphi}}(M)) &=  \tfrac{1}{N} \sum_{ x,y \in [N]}  \mathbb E_{\nu_N}\left(\left( \frac{I_{n/2}(\Vert \beta \varphi \Vert)}{I_{n/2-1}(\Vert \beta \varphi \Vert)}\right)^{2}\right) \\ &= N \mathbb E_{\nu_N}\left(\left( \frac{I_{n/2}(\Vert \beta \varphi \Vert)}{I_{n/2-1}(\Vert \beta \varphi \Vert)}\right)^{2}\right).
\end{split}
\end{equation}
We then have by radial symmetry of both the renormalized potential and the integrand that at critical temperatures $\beta=n$
\[\mathbb E_{\nu_N}\left(\left( \frac{I_{n/2}(\Vert n \varphi \Vert)}{I_{n/2-1}(\Vert n \varphi \Vert)}\right)^{2}\right)=\frac{\int_0^{\infty} e^{-NV_n(r)} r^{n-1} \left( \frac{I_{n/2}(n r)}{I_{n/2-1}(n r)}\right)^2 dr }{\int_0^{\infty} e^{-NV_n(r)} r^{n-1} \ dr }.\]
Applying Laplace's principle, cf. \cite[Ch.\@ II,Theorem $1$]{W01}, with constants $\mu=4$ and $\alpha=3+(n-1)$ implies that 
\[\mathbb E_{\nu_N}\left(\left( \frac{I_{n/2}(\Vert n \varphi \Vert)}{I_{n/2-1}(\Vert n \varphi \Vert)}\right)^{2}\right) \sim N^{n/4}N^{-(n+2)/4} =N^{-1/2}.\]
Combining this asymptotic behavior with \eqref{eq:Dirnorm} and \eqref{eq:varianceIII} then yields the multi-component claim of Theorem \ref{theo:theo2}, \textit{i.e.} the rate $N^{1/2}$ is caught for the trial (mean spin) function $M$ and thus the spectral gap is decaying at least with speed $N^{-1/2}$. 
\end{proof} 

The following Proposition shows that the upper bound $N^{-1/2}$ on the spectral gap in the critical regime $\beta=n$ for all dimensions $n\ge1$, is in fact sharp:
\begin{prop}
Let $h=0$ and $\beta=n \ge 1.$ The spectral gap of the radial renormalized Schr\"odinger operator grows as $\Theta(N^{1/2})$ and in particular, the spectral gap of the full measure does not close faster than $\Theta( N^{-1/2}).$
\end{prop}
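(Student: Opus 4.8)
The plan is to complement the upper bound $\lambda_N=\mathcal O(N^{-1/2})$ obtained from the magnetization in Proposition~\ref{prop1} and Theorem~\ref{theo:theo2} by a matching lower bound for the renormalized Schr\"odinger operator \eqref{eq:renWitt}, following the scheme of Proposition~\ref{prop:radialop} but localized at the \emph{degenerate} critical point $\varphi=0$. Write $\lambda:=N/2$. From the proof of Theorem~\ref{theo:theo2}, at $\beta=n$, $h=0$ the renormalized potential is radially symmetric, its only critical point is the origin, and $V_n(\varphi)=\tfrac n2+c_n\lVert\varphi\rVert^4+\mathcal O(\lVert\varphi\rVert^6)$ with $c_n=\tfrac{n^3}{8+4n}>0$ (only even powers of $\lVert\varphi\rVert$ occur, since $V_n$ is real-analytic in $\lVert\varphi\rVert^2$). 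Hence $\lvert\nabla V_n(\varphi)\rvert^2=16c_n^2\lVert\varphi\rVert^6+\mathcal O(\lVert\varphi\rVert^8)$ vanishes only at $0$, and $\Delta V_n(\varphi)=4c_n(n+2)\lVert\varphi\rVert^2+\mathcal O(\lVert\varphi\rVert^4)$. Balancing $-\partial_r^2$ against $\lambda^2\lvert\nabla V_n\rvert^2\sim\lambda^2 r^6$ under $r=\lambda^{-1/4}x$ identifies $\lambda^{-1/4}$ as the critical length and $\lambda^{1/2}$ as the energy scale, in contrast to the harmonic scales $\lambda^{-1/2},\lambda$ of Propositions~\ref{strongfields},~\ref{prop:radialop} and~\ref{Simon}.

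First I would introduce, for each angular sector $\ell\in\mathbb N_0$, the unitary rescaling $(U_\ell f)(x)=\lambda^{-n/8}f(\lambda^{-1/4}x)$ on $L^2((0,\infty),r^{n-1}\,dr)$ and the $\lambda$-independent model operator $S_\ell:=-(\partial_x^2+\tfrac{n-1}{x}\partial_x-\tfrac{\ell(\ell+n-2)}{x^2})+16c_n^2x^6-4c_n(n+2)x^2$, chosen so that $U_\ell^{-1}\lambda^{1/2}S_\ell U_\ell$ is the leading part near the origin of the radial operator $\Delta^{\text{rad},\ell}_{\text{ren}}$ in~\eqref{eq:radialone}. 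Each $S_\ell$ has discrete spectrum thanks to the confining sextic term. The crucial point is that $S_0$ is \emph{exactly} the Witten Laplacian in dimension $n$ of the quartic $c_nx^4$, i.e.\ $16c_n^2x^6-4c_n(n+2)x^2=\lvert\nabla(c_nx^4)\rvert^2-\Delta(c_nx^4)$; thus $\inf\Spec(S_0)=0$ with a simple, strictly positive ground state $\propto e^{-c_nx^4}$, so the gap $f_2:=\lambda_2(S_0)>0$; and for $\ell\ge1$, $S_\ell=S_0+\ell(\ell+n-2)x^{-2}$ is strictly positive with $g_\ell:=\inf\Spec(S_\ell)>0$ nondecreasing in $\ell$.

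Next I would run the IMS localization as in Proposition~\ref{prop:radialop}: fix $j\in C_c^\infty((-2,2);[0,1])$ with $j\equiv1$ on $[-1,1]$, choose $\gamma$ in the narrow window $(\tfrac18,\tfrac16)$, and set $J_0^{(\ell)}(r)=j(\lambda^\gamma\lvert r\rvert)$ (resp.\ its shift for $\ell\ge1$) and $J^{(\ell)}=\sqrt{1-(J_0^{(\ell)})^2}$, so that $\lvert\nabla J_0^{(\ell)}\rvert^2=\mathcal O(\lambda^{2\gamma})=o(\lambda^{1/2})$. On $\{r\le2\lambda^{-\gamma}\}$ one replaces $V_n$ by its Taylor polynomial $P_K$ of a fixed large degree $K$: the discarded pieces $\lambda^2(\lvert\nabla V_n\rvert^2-\lvert\nabla P_K\rvert^2)=\mathcal O(\lambda^{2-\gamma(K+4)})$ and $\lambda(\Delta V_n-\Delta P_K)=\mathcal O(\lambda^{1-\gamma K})$ are $o(\lambda^{1/2})$ once $K$ is large, while after the rescaling every Taylor term of $\lambda^2\lvert\nabla P_K\rvert^2$ of degree $>6$ (and of $\lambda\Delta P_K$ of degree $>2$) is $o(\lambda^{1/2})$ on bounded sets, so the inner operator equals $\lambda^{1/2}S_\ell+o(\lambda^{1/2})$ on the low-energy subspace. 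On $\{r\ge\lambda^{-\gamma}\}$, since $\lvert\nabla V_n\rvert^2$ vanishes only at $0$ and to order $6$, one has $\lambda^2\lvert\nabla V_n\rvert^2\ge c\lambda^{2-6\gamma}\gg\lambda\ge\lambda\lVert\Delta V_n\rVert_\infty$ (this uses $\gamma<\tfrac16$), whence $J^{(\ell)}\Delta^{\text{rad},\ell}_{\text{ren}}J^{(\ell)}\gg\lambda^{1/2}(J^{(\ell)})^2$. Feeding these estimates into the IMS identity together with the rank bound used for $E_n(\lambda)$ in Proposition~\ref{prop:radialop} gives $E_2(\Delta^{\text{rad},0}_{\text{ren}})\ge\lambda^{1/2}f_2(1+o(1))$ and $E_1(\Delta^{\text{rad},\ell}_{\text{ren}})\ge\lambda^{1/2}g_\ell(1+o(1))\ge\lambda^{1/2}g_1(1+o(1))$ for $\ell\ge1$; the trial function $r\mapsto r$ in the radial sector gives, using \eqref{eq:renmeasuprop} and Laplace's principle at scale $\lambda^{-1/4}$, the reverse bound $E_2(\Delta^{\text{rad},0}_{\text{ren}})\le C\lambda^{1/2}$. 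Hence the radial renormalized Schr\"odinger operator has spectral gap of order $N^{1/2}$, and since its ground state is the global one of $\Delta_{\text{ren}}$ while all other eigenvalues (over all $\ell$) are $\gtrsim\lambda^{1/2}$, the full renormalized measure $\nu_N$ satisfies a SGI with constant of order $N^{1/2}$; inserting $\lambda_N^{\text{ren}}=\Theta(N^{1/2})$ into Proposition~\ref{theo1} yields a SGI for $\rho$ with constant of order $N\beta^2/\lambda_N^{\text{ren}}=\Theta(N^{1/2})$, i.e.\ $\lambda_N\gtrsim N^{-1/2}$, which combined with the upper bound from Proposition~\ref{prop1} and Theorem~\ref{theo:theo2} proves the claim.

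The main obstacle is the localization step at the quartic well: because the critical point is degenerate, the harmonic-approximation machinery of Simon and Helffer--Sj\"ostrand does not apply verbatim, and the cutoff scale $\lambda^{-\gamma}$ must be threaded through the narrow window where simultaneously the Taylor remainder of $\lvert\nabla V_n\rvert^2$ is $o(\lambda^{1/2})$ and the complementary region carries a potential $\gg\lambda$; in particular a sufficiently high-order (not merely quartic) Taylor expansion must be used in the inner region, and one has to check that the surviving subleading terms are negligible, after rescaling, on the low-energy subspace. Secondary points are that $S_0$ has no explicitly computable spectrum, so its gap must be extracted abstractly from the Witten-Laplacian structure together with Perron--Frobenius, and that the angular sectors $\ell\ge1$ must be controlled uniformly via $S_\ell\ge S_0+\ell(\ell+n-2)x^{-2}$. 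For the radial part alone one could instead bypass the Schr\"odinger operator and apply the Bobkov--G\"otze/Muckenhoupt criterion \eqref{eq:SGIconstants}--\eqref{eq:SGI} directly to the one-dimensional measure $dr_N$, using Laplace's method around $r\sim N^{-1/4}$ to show that the Muckenhoupt number is $\Theta(N^{-1/2})$.
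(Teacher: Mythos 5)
Your overall scheme is the same as the paper's: rescale by $\lambda^{-1/4}$ with energy scale $\lambda^{1/2}$ at the degenerate critical point $\varphi=0$, compare with a $\lambda$-independent sextic-minus-quadratic model operator whose bottom eigenvalue is $0$ with ground state $\propto e^{-c_n x^4}$ (the paper establishes this by the SUSY factorization of Section \ref{sec:SUSYQM}, you by the equivalent Witten--Laplacian identification), control the angular sectors $\ell\ge 1$ by monotonicity in the centrifugal term, run the IMS localization with the rank argument of Proposition \ref{prop:radialop}, and transfer to $\rho$ via Proposition \ref{theo1}, with the matching upper bound coming from the magnetization as in Proposition \ref{prop1} and Theorem \ref{theo:theo2}. (Your use of the full Laplacian $\Delta V_n$, giving the coefficient $n^3$ rather than $\tfrac{3n^3}{n+2}$ from $\partial_r^2V_n$ in \eqref{eq:radialone}, is the consistent radial reduction of \eqref{eq:renWitt} and does not affect the argument.)

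There is, however, a concrete flaw in your localization step. With your cutoff exponent $\gamma\in(\tfrac18,\tfrac16)$ the inner estimate fails: after the rescaling $r=\lambda^{-1/4}x$ the inner region is the ball $\{x\lesssim\lambda^{1/4-\gamma}\}$, which is \emph{not} a bounded set, and the degree-eight contribution to $\lambda^2\lvert\nabla P_K\rvert^2$ equals $x^8$ exactly, which on that region is as large as $\lambda^{2-8\gamma}\ge\lambda^{2/3}\gg\lambda^{1/2}$; taking the Taylor degree $K$ large does not help, since the obstruction comes from the polynomial's own higher terms, not from the remainder. Hence the claim that the inner operator is $\lambda^{1/2}S_\ell+o(\lambda^{1/2})$ is false in operator norm on $\operatorname{ran}J_0$, and the appeal to ``the low-energy subspace'' would need genuine decay (Agmon-type) estimates you do not supply. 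The repair is either the paper's choice: take $\gamma\in(\tfrac{3}{16},\tfrac14)$ (the paper uses $\gamma=\tfrac29$), for which the quartic expansion already gives errors $\mathcal O(\lambda^{2-8\gamma})+\mathcal O(\lambda^{1-4\gamma})+\mathcal O(\lambda^{2\gamma})=o(\lambda^{1/2})$ uniformly, while on the outer region one dominates $\lambda\lvert\Delta V_n\rvert$ \emph{locally} by $\lambda^2\lvert\nabla V_n\rvert^2$ (ratio $\gtrsim\lambda r^4\ge\lambda^{1-4\gamma}\to\infty$), so the crude requirement $\lambda^{2-6\gamma}\gg\lambda\lVert\Delta V_n\rVert_\infty$ forcing $\gamma<\tfrac16$ is unnecessary; or, if you keep $\gamma<\tfrac16$, replace the sup-norm bound by the relative form bound $x^8\le 4\lambda^{-2\gamma}\cdot\lambda^{1/2}x^6$ on $\operatorname{supp}J_0$, which costs only a factor $(1-o(1))$ in the eigenvalue comparison. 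With either repair your argument closes and coincides in substance with the paper's proof; your Muckenhoupt alternative for the radial measure is a legitimate independent route for the $\ell=0$ sector but, as you note, would still have to be combined with a treatment of the nonradial sectors.
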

\begin{proof}
Let $\lambda:= N/2$, we then consider the equivalent Schr\"odinger operators to the renormalized generator 
\begin{equation}
\begin{split}
\mathcal H_1&:=-\partial_x^2 + \lambda^2 \vert V'_1(x) \vert^2-\lambda V_1''(x) \text{ and for }n \ge 2 \quad  \\
\mathcal H_n^{\ell}&:=-\left(\partial_r^2 + \tfrac{n-1}{r} \partial_r\right)+\tfrac{\ell(\ell+n-2)}{r^2} +\lambda^2 \vert \nabla V_n \vert^2 -\lambda \Delta  V_n, \quad \ell \in \mathbb N_0,
\end{split}
\end{equation}
where we used that by rotational symmetry of the renormalized potential, for $n \ge 2$, we can decompose the Schr\"odinger operator into individual angular sectors parametrized by $\ell \in \mathbb N_0$. We then introduce auxiliary Schr\"odinger operators
\begin{equation}
\begin{split}
\label{eq:Hamiltonian2}
H_1&=-\partial_{x}^2 + \lambda^2 \tfrac{x^6}{9} - \lambda x^2 \text{ and for } n \ge 2\\
H_n^{\ell}&=-\left(\partial_r^2 + \tfrac{n-1}{r} \partial_r\right) +\tfrac{\ell(\ell+n-2)}{r^2} + \lambda^2\tfrac{n^6}{(2+n)^2} r^6 -\lambda \tfrac{3n^3}{2+n}  r^2, \quad \ell \in \mathbb N_0
\end{split}
\end{equation}
on $L^2(\RR)$ and $L^2((0,\infty), r^{n-1} \ dr)$, respectively. The five first eigenvalues of $H_1$ are shown in Fig. \ref{fig:S1}.
\begin{figure}
\centerline{\includegraphics[height=7cm]{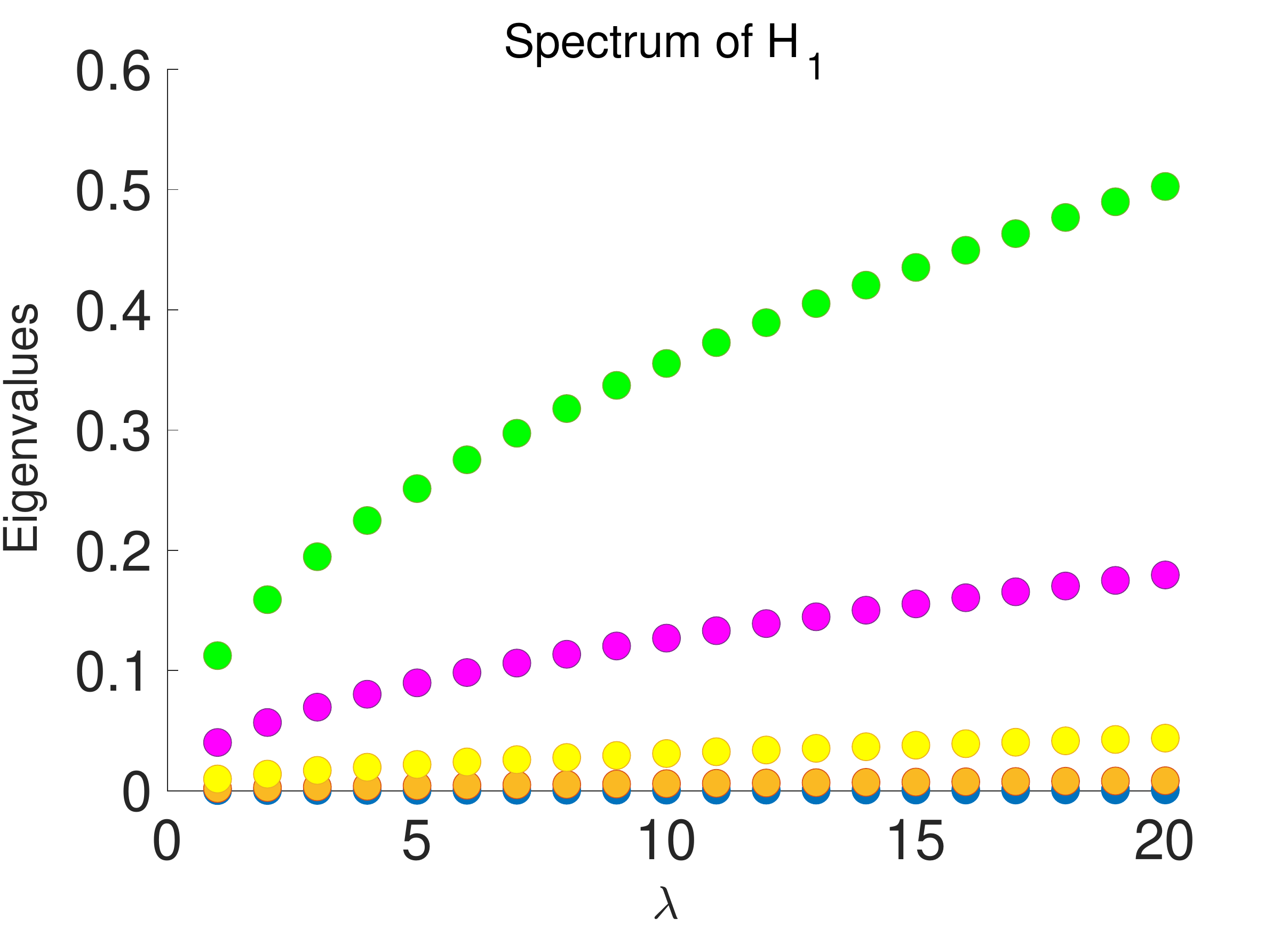}} 
\caption{The five smallest eigenvalues of the operator $H_{1}$ as a function of $\lambda.$ The smallest eigenvalue stays at zero.}
\label{fig:S1}
\end{figure} 
We then define $j \in C_c^{\infty}(-2,2)$ such that $j(x)=1$ for $\left\lvert x \right\rvert \le 1$ and from this 
\begin{equation}
\begin{split}
\label{eq:property3}
J_0(x) &= j(\lambda^{2/9}\left\lvert x \right\rvert )\text{ and }J := \sqrt{1-J_0^2} \text{ with }\left\lVert \nabla  J_{0} \right\rVert_{\mathbb R^n} = \mathcal O(\lambda^{4/9}).
\end{split}
\end{equation}
Invoking the unitary maps $U_{1}  \in \mathcal L(L^2(\mathbb R))$ and $U_n \in \mathcal L( L^2((0,\infty),r^{n-1} \ dr))$ defined as 
\begin{equation}
\begin{split}
 (U_{1}f)(x) := \lambda^{-1/8}f(\lambda^{-1/4}(x)) \text{ and }(U_{n}f)(r) := \lambda^{-n/8}f(\lambda^{-1/4}r)
\end{split}
\end{equation}
shows that the two Schr\"odinger operators in \eqref{eq:Hamiltonian2} are in fact unitarily equivalent, up to multiplication by $\sqrt{\lambda}$, to the $\lambda$-independent Schr\"odinger operators
\begin{equation}
\begin{split}
\label{eq:S2}
S_{1}&= - \partial_x^2+   \tfrac{1}{9}  x^6-x^2 \text{ and for } n \ge 2 \\
S_{n}^{\ell} &= - \left(\partial_r^2  + \tfrac{n-1}{r} \partial_r\right)+\tfrac{\ell(\ell+n-2)}{r^2}+ \tfrac{n^6}{(2+n)^2}  r^6 -\tfrac{3n^3}{2+n}  r^2, \quad \ell \in \mathbb N_0
\end{split}
\end{equation}
respectively. That $\inf(\Spec(S_n^0))=0$ is shown in Section \ref{sec:SUSYQM}. Since $\tfrac{\ell(\ell+n-2)}{r^2}>0$ we have consequently that for $\ell>0$ by monotonicity $\inf(\Spec(S_n^{\ell}))\ge \inf(\Spec(S_n^{1}))>0.$
More precisely, we have that
\begin{equation}
\begin{split}
\label{eq:uniteqv3}
\lambda^{1/2}  U_{n}^{-1}\ S_n^{\ell} U_{n}&= H_n^{\ell}.
\end{split}
\end{equation}
More precisely, since $(U_{n}f)(x) := \lambda^{-n/8}f(\lambda^{-1/4}x),$ it follows that 
\begin{equation}
\begin{split}
(S_n^{\ell} U_{n}f)(r)=&-\lambda^{-n/8} \Bigg(\left(\lambda^{-1/2}f''(\lambda^{-1/4}r)  + \lambda^{-1/4}\tfrac{n-1}{r} f'(\lambda^{-1/4}r)\right)+\tfrac{\ell(\ell+n-2)}{r^2}f(\lambda^{-1/4}r)\\
&+ \tfrac{n^6}{(2+n)^2}  r^6f(\lambda^{-1/4}r) -\tfrac{3n^3}{2+n}  r^2f(\lambda^{-1/4}r)\Bigg).
\end{split}
\end{equation}
Then, applying $(U_n^{-1}f)(r) = \lambda^{n/8} f(\lambda^{1/4}r)$ shows that 
\begin{equation}
\begin{split}
 (U_n^{-1}S_n^{\ell} U_{n}f)(r)=& \lambda^{-1/2}\Bigg(-\left(f''(r)  + \tfrac{n-1}{r} f'(r)\right)+\tfrac{\ell(\ell+n-2)}{r^2}f(r)\\
 &+ \lambda^2 \tfrac{n^6}{(2+n)^2}  r^6f(r) -\lambda \tfrac{3n^3}{2+n}  r^2f(r)\Bigg).
 \end{split}
\end{equation}

Taylor expansion of the potential at $0$ and the estimate on the gradient \eqref{eq:property3} imply that  
\[\left\lvert J_0(\mathcal H_n^{\ell}-H_n^{\ell})J_0 \right\rvert = \mathcal O(\lambda^{4/9}). \]
Let $0=e_1<e_2\le..$ be the eigenvalues (counting multiplicities) of $S_n$ (over all angular sectors $\ell$) and choose $\tau$ such that $e_{n+1}>\tau >e_n$ with $P$ being the projection onto the eigenspace to all eigenvalues of $H$ below $\tau \sqrt{\lambda}.$
The IMS formula, see \cite[(11.37)]{CFKS} for a version on manifolds, implies that  
\[ \mathcal H_n = J\mathcal H_nJ  - \vert \nabla J \vert^2 + \left(J_{0}\mathcal H_nJ_{0}-  \vert \nabla J_{0} \vert^2 \right) \]
such that 
\begin{equation}
\label{eq:usefulrepresent2}
\mathcal H_n = J \mathcal H_n J -\vert \nabla J \vert^2+\left(J_{0} H_n J_{0} + J_{0} (\mathcal H_n-H_n) J_{0}-  \vert \nabla J_{0} \vert^2 \right).
\end{equation}
On the other hand, it follows that 
\begin{equation*} 
\begin{split}
J_{0} H_n J_{0}
&= J_{0} H_n P J_{0} + J_{0} H_n (\operatorname{id}-P) J_{0} \ge J_{0} H_nP_n J_{0} + \sqrt{\lambda} e_n J_{0}^2. 
\end{split}
\end{equation*}
By construction, since $\nabla V_n$ vanishes to third order on the support of $J_0,$ we have 
\[ \Vert \nabla V_n \Vert^2_{\mathbb R^n} \ge c (\lambda^{-2/9})^6 =c\lambda^{-4/3} \text{ on } J\text{ for some }c>0.\] 
Since $\Delta V_n$ vanishes to second order 
\[ \Vert \Delta V_n \Vert_{\mathbb R^n} \ge c \lambda^{-4/9}  \text{ on } J\text{ for some }c>0.\] 
This implies for large $\lambda$ that
\begin{equation}
\label{eq:support3}
 J H J \ge \sqrt{\lambda} e_n J^2.
 \end{equation}
From \eqref{eq:usefulrepresent2} we then conclude that for some $C>0$
\[ \mathcal H_n \ge \sqrt{\lambda} e_n J^2 - C \lambda^{4/9} +J_{0} H_nP J_{0}= \sqrt{\lambda} e_n+ J_{0} H_nP J_{0}-o(\sqrt{\lambda}). \]
This implies the claim of the Proposition, since \[\operatorname{rank} \left(J_{0} H_nPJ_{0} \right) \le n.\] More precisely, for the eigenvalues $E_1(\lambda)\le E_2(\lambda)\le..$ of $\mathcal H_n$ we have shown that 
\[\liminf_{\lambda \rightarrow \infty} \lambda^{-1/2}E_n(\lambda)\ge e_n.\] 
In particular, the lowest possible eigenvalue $e_1=0$ of the renormalized Schr\"odinger operator is of course attained as the nullspace of the renormalized Schr\"odinger operator is non-trivial.
This shows that the spectral gap of the renormalized Schr\"odinger operator grows at least proportional to $\sqrt{\lambda}.$ 
\end{proof}

\begin{appendix}
\label{sec:appendix}
\section{Numerical results}

Recall that the eigenfunctions of the operator
\begin{equation}
\label{eq:Hosc}
H_{\operatorname{osc}}:=-\frac{\hbar^2}{2\mu}\frac{d^2}{dx^2}+\frac{\mu\omega^2}{2} x^2 
\end{equation}
are given for $n \in \mathbb N_0$ by 
\[ \psi_n(x):=\frac{1}{\sqrt{2^n n!} } \left( \frac{\mu\omega}{\pi \hbar } \right)^{1/4} e^{-\frac{ \mu \omega x^2}{2 \hbar }} H_n\left( \sqrt{\frac{\mu \omega}{\hbar}}x \right).\]
Then, it follows that 

\[ \langle \psi_n, -\hbar^2 \psi_m'' \rangle_{L^2(\RR)}= \begin{cases} 
&\frac{\hbar \mu \omega}{2}(2n+1), \text{ if } n=m \\
&-\frac{\hbar \mu \omega}{2}\sqrt{n(n-1)}, \text{ if } n=m+2 \\
&-\frac{\hbar \mu \omega}{2}\sqrt{(n+1)(n+2)}, \text{ if } n=m-2. 
\end{cases}. \]

and 

\[ \langle \psi_n,x^2 \psi_m \rangle_{L^2(\RR)}= \begin{cases} 
&\frac{\hbar}{2 \mu \omega}(2n+1), \text{ if } n=m \\
&\frac{\hbar}{2 \mu \omega}\sqrt{n(n-1)}, \text{ if } n=m+2 \\
&\frac{\hbar}{2 \mu \omega}\sqrt{(n+1)(n+2)}, \text{ if } n=m-2. 
\end{cases}. \]

Using the annihilation operator $a=2^{-1/2}(\partial_q+q)$ where $q=\sqrt{\frac{\mu \omega}{\hbar}} x$ and its adjoint we can explicitly compute the matrix elements of all $(\langle \psi_n, x^n \psi_m \rangle)$ by writing $q^n=\sqrt{2}(a+a^*)^n$ and using the well-known action of the annihilation operator on eigenstates of \eqref{eq:Hosc}. Using a finite-basis truncation of the above matrices allowed us then to obtain Figures \ref{fig:Sop} and \ref{fig:S1}.

\section{Asymptotic properties of the Ising model}
\begin{lemm}
\label{lem:asym}
Let $\beta>1$ and $h \in [0,h_{\text{c}}).$ The three critical points of $\eta_{N,h}:(-1,1)\rightarrow \mathbb R$
\[ \eta_{N,h}(s) =\frac{\Gamma(N+1)}{\Gamma(N/2(1+s)+1)\Gamma(N/2(1-s)+1)} e^{-\tfrac{N\beta}{2}(1-s^2)+Nhs}\]
are given by $s_{N}^{\text{c}} := \gamma(\beta)  (1+\smallO(1))$ where $\gamma(\beta)$ satisfies the critical equation for the continuous renormalized potential
\[ \gamma(\beta) = \tanh(\beta \gamma(\beta)+h).\]
Let us order the solutions $\gamma(\beta)$ to that equation by $\gamma_1(\beta)<\gamma_2(\beta)<\gamma_3(\beta).$
For $h=0$ the function $\eta_{N,0}$ attains (in the limit $N \rightarrow \infty$) its maximum at $\gamma_1(\beta)=-\gamma_3(\beta)<0$ and minimum at $\gamma_2(\beta)=0$. 

\medskip

Let $h>0$, then the function $\eta_{N,h}$ attains (in the limit $N \rightarrow \infty$) its unique global maximum at $\gamma_3(\beta)>0$ whereas both $\gamma_1(\beta),\gamma_2(\beta)<0$ and $ \gamma_1(\beta),\gamma_2(\beta)$ are local maxima and minima respectively.

\medskip

The logarithmic derivative $\zeta_{N,h}(s) = \partial_s \log(\eta_{N,h}(s))$ satisfies 
\begin{equation}
\label{eq:logder}
\zeta_{N,h}(s)=    N\left(\beta s-\operatorname{arctanh}\left(s \right)+  h\right)(1+\smallO(1)).
\end{equation}
\end{lemm}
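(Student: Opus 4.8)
The plan is to read everything off from the first logarithmic derivative $\zeta_{N,h}=\eta_{N,h}'/\eta_{N,h}$; since $\eta_{N,h}>0$ on $(-1,1)$ its critical points are exactly the zeros of $\zeta_{N,h}$, and $\operatorname{sign}\eta_{N,h}'=\operatorname{sign}\zeta_{N,h}$ there. First I would prove \eqref{eq:logder}. Writing $\log\eta_{N,h}(s)=\log\Gamma(N+1)-\log\Gamma(\tfrac N2(1+s)+1)-\log\Gamma(\tfrac N2(1-s)+1)-\tfrac{N\beta}2(1-s^2)+Nhs$ and differentiating gives the exact identity $\zeta_{N,h}(s)=\tfrac N2\big(\psi(\tfrac N2(1-s)+1)-\psi(\tfrac N2(1+s)+1)\big)+N(\beta s+h)$ with $\psi$ the digamma function. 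Inserting $\psi(z)=\log z+O(1/z)$, uniform for $z$ bounded away from $0$, turns the bracket into $\log\tfrac{1-s}{1+s}+O(1/N)=-2\operatorname{arctanh}(s)+O(1/N)$, hence $\zeta_{N,h}(s)=N\big(\beta s-\operatorname{arctanh}(s)+h\big)+O(1)$ uniformly on compact subsets of $(-1,1)$, which is \eqref{eq:logder}. Differentiating once more and using $\psi'(z)=1/z+O(1/z^2)$ gives $\zeta_{N,h}'(s)=-\tfrac{N^2}4\big(\psi'(\tfrac N2(1+s)+1)+\psi'(\tfrac N2(1-s)+1)\big)+N\beta=NG'(s)+O(1)$ on compacts, where $G(s):=\beta s-\operatorname{arctanh}(s)+h$; since $\psi'>0$ is decreasing, the bracketed term is of order $1/N$ in the bulk but bounded below by a constant on a fixed neighbourhood of $s=\pm1$, so there $\zeta_{N,h}'(s)\le -cN^2+N\beta<0$ for a fixed $c>0$ and $N$ large.

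Next I would analyse $G$ and transfer. Note $G(s)=0\iff\operatorname{arctanh}(s)=\beta s+h\iff s=\tanh(\beta s+h)$, so the zeros of $G$ are exactly the critical points of the continuous renormalized potential. Now $G'(s)=\beta-\tfrac1{1-s^2}$ vanishes precisely at $\pm\sqrt{1-1/\beta}$ (using $\beta>1$), so $G$ decreases, then increases, then decreases, with $G(s)\to+\infty$ as $s\downarrow-1$ and $G(s)\to-\infty$ as $s\uparrow1$; hence $G$ has at most three zeros, and exactly three simple ones $\gamma_1(\beta)<\gamma_2(\beta)<\gamma_3(\beta)$ iff its local minimum is negative and its local maximum positive. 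A short computation identifies this condition with $h<h_{\text c}(\beta)$, the equality $h=h_{\text c}(\beta)$ being the degenerate case of a merged double root, and shows the $\gamma_i$ are transversal. From the uniform $C^1$ convergence $N^{-1}\zeta_{N,h}\to G$ on compacts together with transversality, for $N$ large $\zeta_{N,h}$ has a sign change near each $\gamma_i$; and from the previous step $\zeta_{N,h}'$ mirrors the sign pattern of $NG'$ in the bulk and is negative near the endpoints, so $\zeta_{N,h}$ is decreasing–increasing–decreasing and has no further zeros. Thus $\eta_{N,h}$ has exactly three critical points $s_N^{\text c}=\gamma_i(\beta)+\smallO(1)$, i.e.\ $\gamma_i(\beta)(1+\smallO(1))$ when $\gamma_i(\beta)\neq0$; for $h=0$ the reflection $\eta_{N,0}(-s)=\eta_{N,0}(s)$ forces $\gamma_1=-\gamma_3$ and $\gamma_2=0$ exactly. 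Reading off the sign of $\zeta_{N,h}$ on the four intervals cut out by $-1<\gamma_1<\gamma_2<\gamma_3<1$ (the pattern $+,-,+,-$) shows $\gamma_1,\gamma_3$ are local maxima and $\gamma_2$ a local minimum, and when $h>0$ the fact that $G(0)=h>0$ with $0$ on the increasing branch forces $\gamma_1<\gamma_2<0<\gamma_3$.

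It then remains to locate the global maximum. For $h=0$, $\eta_{N,0}(\gamma_1)=\eta_{N,0}(\gamma_3)$ by symmetry, so both are global maxima. For $h>0$ I would use the limiting rate function $\Phi(s;h):=\lim_N\tfrac1N\log\eta_{N,h}(s)=\mathcal H\!\big(\tfrac{1+s}2\big)-\tfrac\beta2(1-s^2)+hs$ ($\mathcal H$ the binary entropy, obtained from Stirling as above), which satisfies $\partial_s\Phi=G$, so every $\gamma_i(h)$ is a critical point of $\Phi(\cdot;h)$. By the envelope identity $\tfrac{d}{dh}\Phi(\gamma_i(h);h)=\partial_h\Phi(\gamma_i(h);h)=\gamma_i(h)$, whence $\tfrac{d}{dh}\big(\Phi(\gamma_3(h);h)-\Phi(\gamma_1(h);h)\big)=\gamma_3(h)-\gamma_1(h)>0$; since $\Phi(\cdot;0)$ is even and $\gamma_1(0)=-\gamma_3(0)$, this difference vanishes at $h=0$ and is therefore strictly positive for $h\in(0,h_{\text c}(\beta))$, so $\gamma_3(\beta)$ is the unique global maximiser.

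The step I expect to be the main obstacle is the uniform control of the digamma (equivalently, Stirling) asymptotics as $s\to\pm1$, where the Gamma-function arguments $\tfrac N2(1\pm s)+1$ are only of order one and the large-$z$ expansion degenerates: this is exactly what is needed to exclude spurious critical points near the endpoints and to pin the total count at three. Everything else reduces to elementary analysis of the single scalar function $G$ and to the definition of $h_{\text c}(\beta)$.
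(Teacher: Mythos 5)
Your proposal is correct in substance and overlaps with the paper on the core computation: the asymptotic \eqref{eq:logder} is obtained in the paper exactly as you do it, from the digamma expansion $\psi(z)=\log z+\mathcal O(1/z)$ applied to $\partial_s\log\eta_{N,h}$, and the localization of the critical points at the roots of $\gamma=\tanh(\beta\gamma+h)$ is the same. Where you genuinely diverge is in the two remaining steps. To pin the number of critical points at three, the paper does not pass to the limit function $G(s)=\beta s-\operatorname{arctanh}(s)+h$; instead it computes $\zeta_{N,h}''$ in closed form via the trigamma function (an integral representation showing $\zeta_{N,h}$ is strictly convex on $[-1,0)$ and strictly concave on $(0,1]$), combines this with $\zeta_{N,h}'(0)>0$ (using $\beta>1$) and the boundary limits $\zeta_{N,h}\to\pm\infty$ as $s\to\mp1$; this handles the endpoint region with no case analysis. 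For the global maximum when $h>0$, the paper has a one-line argument: $\eta_{N,h}(s)=\eta_{N,0}(s)e^{Nhs}$ with $\eta_{N,0}$ even forces the maximum to lie at $s\ge0$, and $\gamma_3(\beta)$ is the only nonnegative critical point; your envelope identity $\tfrac{d}{dh}\bigl(\Phi(\gamma_3(h);h)-\Phi(\gamma_1(h);h)\bigr)=\gamma_3(h)-\gamma_1(h)>0$ is a correct but heavier route to the same conclusion.

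One quantitative claim in your endpoint analysis should be repaired: on a \emph{fixed} neighbourhood $(1-\delta,1)$ the term $\psi'\bigl(\tfrac N2(1-s)+1\bigr)$ is not bounded below by a constant — it is only of order $(N\delta)^{-1}$, the constant lower bound holding merely within $\mathcal O(1/N)$ of the endpoint — so your bound $\zeta_{N,h}'\le-cN^2+N\beta$ there is wrong as stated. The conclusion $\zeta_{N,h}'<0$ near $s=\pm1$ survives, since $\tfrac{N^2}{4}\psi'\bigl(\tfrac N2(1-s)+1\bigr)\gtrsim\tfrac{N}{2\delta}$, but only after fixing $\delta$ smaller than roughly $(2\beta)^{-1}$; with that choice your exclusion of spurious zeros near the endpoints (negativity of $\zeta_{N,h}$ at $s=1-\delta$ plus monotonicity on $(1-\delta,1)$, and symmetrically on the left) goes through, or one can simply switch to the paper's convexity/concavity argument, which avoids the issue altogether.
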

\begin{proof}
For $h=0$ we note that $\eta_{N,0}$ is even and for $h>0$ the global maxima of $\eta_{N,h}$ must be attained at some $s \ge 0.$
Direct computations show by the logarithmic scaling of the digamma function $\psi_2(s) = \log(\Gamma)'(s)= \log(s)+ \mathcal O(1/s)$ that the logarithmic derivative $\zeta_{N,h}(s) = \partial_s \log(\eta_{N,h}(s))$ is given by \eqref{eq:logder}.
Thus, for all critical values $s_N^{\text{c}}$ of $\eta_{N,h}$, i.e. those values that satisfy $\zeta_{N,h}(s_N^{\text{c}})=0,$ there exists $\gamma(\beta) \in [-1,1]$ such that $\gamma(\beta):=  \lim_{N \rightarrow \infty}s_N^{\text{c}}$
and $\gamma(\beta)$ is any solution to $ \gamma(\beta) = \operatorname{tanh}(\gamma(\beta) \beta+ h).$

\medskip

We then obtain \eqref{eq:logder} directly by differentiating $\log(\eta_{N,h})$ and using the identity
\begin{equation}
\begin{split}
&-\partial_s \log\left(\Gamma\left(\tfrac{N(1+s)}{2}+1\right)\Gamma\left(\tfrac{N(1-s)}{2}+1\right)\right) \\
&= -\frac{N}{2} \left(\log\left(\frac{1+N/2(1+s)}{1+N/2(1-s)}\right) \right)(1+\smallO(1)) \\
 &= -\frac{N}{2}\left(\log\left( \frac{1 + s \frac{N/2}{1+N/2}}{1 - s\frac{N/2}{1+N/2}} \right) \right)(1+\smallO(1)) \\
 &= -\frac{N}{2}\left(\log\left( \frac{1 + s }{1 - s} \right) \right)(1+\smallO(1)) \\
  &= -N \operatorname{artanh}(s)(1+\smallO(1)).
 \end{split}
 \end{equation}

\medskip

Moreover, we read off from \eqref{eq:logder} that 
\[\lim_{k \uparrow 1} \zeta_{N,h}(k) = -\infty\text{ and }\lim_{k \downarrow -1}  \zeta_{N,h}(k) = \infty.\]
In particular, $\gamma(\beta)$ solves the implicit equation $\gamma(\beta) = \tanh(\beta \gamma(\beta)+h).$ For the second derivative of $\zeta_{N,h}$ which is $h$-independent, we find the closed-form expression using the derivative of the trigamma function $\psi_3$
\begin{equation}
\begin{split}
\zeta_{N,h}''(s) &= -\frac{N^3}{8} \left(\psi_{3}'(1+N/2(1+s))-\psi_{3}'(1+N/2(1-s))\right) \\
&= \frac{N^3}{8} \int_0^{\infty} z^2 \frac{e^{-z(1+N/2(1+s))}-e^{-z(1+N/2(1-s))}}{1-e^{-z}} \ dz.
 \end{split}
 \end{equation}
This implies that $\zeta_N$ is strictly convex on $[-1,0)$ and strictly concave on $(0,1]$. Using the asymptotic of the trigamma function $\psi_3(s)= 1/s + 1/(2s^2)+ \mathcal O(1/s^3)$ we find that $\zeta_{N,h}$ is strictly monotone increasing at zero, independent of $h$,
\[ \zeta_N'(0) =N\beta -  \frac{N^2 \psi_3(1+N/2)}{2} = N\left(\beta -   \frac{N/2}{1+N/2}\right)(1+\smallO(1))>0, \]
since $\beta>1.$
\end{proof}
\begin{lemm}
\label{eq:auxlemma}
The function $\mathcal I(r):= r \frac{I_{n/2-1}(r)}{I_{n}(r)}$ is strictly monotonically increasing on $(0,\infty).$ In particular $\mathcal I'(r)>0$.
\end{lemm}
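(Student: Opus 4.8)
The plan is to reduce the monotonicity of $\mathcal I$ to a general fact about ratios of power series with positive coefficients. First, consistently with the way $\mathcal I$ is used at the start of Section~\ref{sec:High-dim} and in Section~\ref{sec:ZMF and MC}, I read the function as $\mathcal I(r)=r\,I_{n/2-1}(r)/I_{n/2}(r)$, i.e.\ with $I_{n/2}$ in the denominator. Writing $\nu:=n/2$ and using the Taylor expansion $I_\mu(r)=\sum_{k\ge0}\frac{(r/2)^{2k+\mu}}{k!\,\Gamma(k+\mu+1)}$, one has $I_{\nu-1}(r)=r^{\nu-1}a(r^{2})$ and $I_{\nu}(r)=r^{\nu}b(r^{2})$ with entire functions
\[
a(s)=\sum_{k\ge0}\alpha_k s^{k},\qquad b(s)=\sum_{k\ge0}\beta_k s^{k},\qquad
\alpha_k=\frac{1}{2^{2k+\nu-1}k!\,\Gamma(k+\nu)},\quad \beta_k=\frac{1}{2^{2k+\nu}k!\,\Gamma(k+\nu+1)}>0 .
\]
Hence $\mathcal I(r)=a(r^{2})/b(r^{2})$, and since $\Gamma(k+\nu+1)=(k+\nu)\Gamma(k+\nu)$ we get $\alpha_k/\beta_k=2(k+\nu)$, which is strictly increasing in $k$.

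Next I would apply the monotone form of the l'H\^opital rule for power series: if $a=\sum\alpha_k s^k$ and $b=\sum\beta_k s^k$ with all $\beta_k>0$ and $k\mapsto\alpha_k/\beta_k$ strictly increasing, then $a/b$ is strictly increasing on $(0,\infty)$. This is elementary: setting $c_k:=\alpha_k/\beta_k$, the numerator of $(a/b)'$ equals
\[
a'b-ab'=\sum_{m\ge1}\Bigl(\sum_{0\le j<m/2}(m-2j)\,\beta_j\beta_{m-j}\,(c_{m-j}-c_j)\Bigr)s^{m-1},
\]
which is a power series with strictly positive coefficients, hence strictly positive on $(0,\infty)$; since $b>0$ there, $(a/b)'>0$. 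Applying this with the increasing change of variables $s=r^{2}$ shows that $\mathcal I$ is strictly increasing on $(0,\infty)$, and in particular $\mathcal I'(r)>0$, which is the claim.

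I do not expect a genuine obstacle here: the only points requiring care are the identity $\alpha_k/\beta_k=2(k+\nu)$ (which uses $\Gamma(k+\nu+1)=(k+\nu)\Gamma(k+\nu)$) and the legitimacy of term-by-term differentiation of $a$ and $b$, which is immediate since both are entire. As an alternative route avoiding the ratio lemma, one can differentiate $\mathcal I$ directly: with $Q:=I_{\nu-1}/I_{\nu}$, the recurrences $I_{\nu-1}-I_{\nu+1}=\tfrac{2\nu}{r}I_{\nu}$ and $2I_{\mu}'=I_{\mu-1}+I_{\mu+1}$ give $\mathcal I'(r)=r+2\nu Q-rQ^{2}$, and dividing by $Q>0$ and using the first recurrence once more reduces the claim $\mathcal I'(r)>0$ to the classical Tur\'an inequality $I_{\nu-1}(r)\,I_{\nu+1}(r)<I_{\nu}(r)^{2}$ for $r>0$.
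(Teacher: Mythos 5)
Your proof is correct, and you also correctly resolved the typo in the statement (the denominator is $I_{n/2}$, consistent with how the lemma is used), but your main argument is genuinely different from the paper's. The paper differentiates $\mathcal I$ using the recurrence $I_\nu'(r)=\tfrac{\nu}{r}I_\nu(r)+I_{\nu+1}(r)$ to get $\mathcal I'(r)=r\bigl(1-\tfrac{I_{n/2-1}(r)I_{n/2+1}(r)}{I_{n/2}(r)^2}\bigr)$, and then proves the Tur\'an inequality $I_{n/2}(r)^2>I_{n/2-1}(r)I_{n/2+1}(r)$ by comparing the series expansions of the two products of Bessel functions term by term, which reduces to $\Gamma(n/2+k+1)^2<\Gamma(n/2+k)\,\Gamma(n/2+k+2)$, i.e.\ logarithmic convexity of the Gamma function. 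You instead write $\mathcal I(r)=a(r^2)/b(r^2)$ with $a,b$ entire series having positive coefficients and coefficient ratio $\alpha_k/\beta_k=2(k+\nu)$ strictly increasing, and invoke (and prove, via the positivity of the coefficients of $a'b-ab'$) a discrete monotone-ratio lemma; this is self-contained, avoids both the Bessel recurrences and the product-series formula, and is arguably more elementary, at the price of a slightly longer combinatorial computation. Your alternative sketch at the end (computing $\mathcal I'=r+2\nu Q-rQ^2$ and reducing to the Tur\'an inequality) is essentially the paper's route, except that there the Tur\'an inequality is actually proved rather than cited as classical; if you went that way you would need to supply that proof or a precise reference. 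Two small points worth making explicit in your main argument: positivity of $I_{\nu-1},I_{\nu}$ on $(0,\infty)$ (immediate from the positive series coefficients, and needed so that $b>0$ and the quotient is well defined), and the term-by-term differentiation, which you already justify by entirety.
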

\begin{proof}
By differentiating and using that $I_{\nu}'(r) = \frac{\nu}{r} I_{\nu}(r) + I_{\nu+1}(r)$ we find
\[\mathcal I'(r) = r\left(1-\frac{I_{n/2-1}(r)I_{n/2+1}(r)}{I_n(r)^2} \right).\]
Thus, it suffices to record that the product of Bessel functions satisfies $I_{n/2}(z)^2 > (I_{n/2-1}I_{n/2+1})(z):$
\begin{equation}
\begin{split}
(I_{n/2-1}I_{n/2+1})(z) &= \left(z/2 \right)^{n} \sum_{k=0}^{\infty} \frac{(n+k+1)_{k}(z^2/4)^k}{k! \Gamma(n/2-1+k+1) \Gamma(n/2+1+k+1)} \\
(I_{n/2}I_{n/2})(z) &= \left(z/2 \right)^{n} \sum_{k=0}^{\infty} \frac{(n+k+1)_{k}(z^2/4)^k}{k! \Gamma(n/2+k+1) \Gamma(n/2+k+1)} \\
\end{split}
\end{equation}
Hence, the identity follows from 
\[\Gamma(n/2+k+1)^2<\Gamma(n/2-1+k+1) \Gamma(n/2+1+k+1)\] 
which follows itself from logarithmic convexity of the gamma function.
\end{proof}
\section{SUSY Quantum Mechanics}
\label{sec:SUSYQM}
We use ideas from \emph{supersymmetric quantum mechanics}, to show positivity and analyze the ground state of several Schr\"odinger operators appearing in this article:

In one dimension, we recall that using operators 
\[ A=\partial_x+W(x) \text{ and } A^* = -\partial_x+W(x)\]
with real-valued and smooth \emph{superpotential} $W$,
we can write 
\[ A^*A = -\partial_x^2 -W'(x)+W(x)^2 \text{ and } AA^*=-\partial_x^2+W'(x)+W(x)^2.\]
In particular, $W(x):=\sqrt{\beta(\beta-1)}\beta x^2$ yields operator $S_{\varphi_{\pm}}$ defined in \eqref{eq:S}.

However, solving $A\psi=0$ or $A^*\psi = 0 $ shows that $\psi=C e^{\pm \sqrt{\beta(\beta-1)}\beta x^3}\notin L^2(\RR).$ This shows that $\inf(\Spec(AA^*)),\inf(\Spec(A^*A))>0.$

We now analyze operators in \eqref{eq:S2}.
Choosing $W(x):=\frac{x^3}{3}$, yields $A^*A=S_1$ in \eqref{eq:S2}, and we find by solving 
$A\psi(x)=0$ that $\psi(x)\propto e^{-x^4/12}$ which implies that $\inf(\Spec(A^*A))=0.$

For radial operators on $L^2((0,\infty),r^{n-1} \ dr)$, a similar argument applies: 

We define operators 
\[ A=\partial_r + W(r) \text{ and } A^*= -\partial_r +\frac{n-1}{r} + W(r).\]

Choosing then $W(r):=\frac{n^3}{(2+n)}r^3,$ such that $A^*A =S_n$ with $S_n$ in \eqref{eq:S2}, we find by solving 
\[ A\psi=0 \Rightarrow \psi(r)\propto e^{-\frac{n^3}{4(n+2)} r^4} \in L^2((0,\infty),r^{n-1} \ dr).\]

\section{Asymptotic properties}

\begin{lemm}\cite[Theo $1.4.10$]{BBS18}
\label{aoev}
Let $V: \mathbb R \rightarrow \mathbb R$ be smooth with unique global minimum at $\varphi_{\text{min}} \in \mathbb R$ and $V''(\varphi_{\text{min}})>0.$ Assume that $\int_{\mathbb R} e^{-V(\varphi)} \ d\varphi$ is finite and that $\left\{ \varphi \in \mathbb R; V(\varphi) \le V(\varphi_{\text{min}})+1 \right\}$ is compact. We also define the probability measure $d\zeta_N(\varphi) = e^{-N V(\varphi)} \ d\varphi / \int_{\mathbb R}e^{-N V(\varphi)} d \varphi.$ Then for any bounded smooth function $g: \mathbb R \rightarrow \mathbb R$ 
\begin{equation}
\begin{split}
\label{eq:expectationformula}
\mathbb E_{\zeta_N}(g)&= \frac{\int_{\mathbb R} g(\varphi)e^{-NV(\varphi)} d \varphi}{\int_{\mathbb R} e^{-NV(\varphi)} d \varphi}\\
& = g(\varphi_{\text{min}})+ \frac{g''(\varphi_{\text{min}})}{2N V''(\varphi_{\text{min}})} +\frac{3 V'''(\varphi_{\text{min}}) g'(\varphi_{\text{min}})}{4N V''(\varphi_{\text{min}})^3} + \mathcal O(1/N^2)
\end{split}
\end{equation}
and for the variance
\begin{equation}
\begin{split}
\label{eq:varianceformula}
\operatorname{Var}_{\zeta_N} (g) = \frac{g'(\varphi_{\text{min}})^2}{NV''(\varphi_{\text{min}})} + \mathcal O(1/N^2)
\end{split}
\end{equation}
\end{lemm}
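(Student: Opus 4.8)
The plan is to prove this as a standard second‑order Laplace (Watson) expansion. Subtracting the constant $V(\varphi_{\text{min}})$ from $V$ multiplies the numerator and denominator of $\mathbb E_{\zeta_N}$ by the same factor and so changes neither side of the two identities; hence I may assume $\varphi_{\text{min}}=0$, $V(0)=V'(0)=0$, and write $a:=V''(0)>0$. The first step is localization. Since $\{V\le 1\}$ is compact it is bounded, so $V>1$ outside a bounded set and $\int_{\{V>1\}}e^{-NV}\,d\varphi\le e^{-(N-1)}\int_{\mathbb R}e^{-V}\,d\varphi<\infty$; on the compact set $\{V\le 1\}\cap\{|\varphi|\ge\delta\}$ the continuous function $V$ attains a strictly positive minimum because $0$ is the \emph{unique} global minimum, so $\int_{\{|\varphi|\ge\delta\}}e^{-NV}\,d\varphi=\mathcal O(e^{-c(\delta)N})$ for some $c(\delta)>0$. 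As $V(\varphi)\le a\varphi^2$ near $0$ one also has $\int_{|\varphi|\le\delta}e^{-NV}\,d\varphi\gtrsim N^{-1/2}$, and bounding the numerator's tail by $\|g\|_\infty$ times the denominator's tail (here boundedness of $g$ enters), it follows that both integrals defining $\mathbb E_{\zeta_N}(g)$ coincide with their restrictions to $\{|\varphi|\le\delta\}$ up to relative error $\mathcal O(e^{-c'N})$, for any fixed small $\delta$.

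On $\{|\varphi|\le\delta\}$ I would then substitute $\varphi=x/\sqrt N$. Taylor's theorem gives $NV(x/\sqrt N)=\tfrac a2 x^2+\tfrac{V'''(0)}{6\sqrt N}x^3+\tfrac{V^{(4)}(0)}{24N}x^4+R_N(x)$ with $|R_N(x)|\le CN^{-3/2}|x|^5$ uniformly for $|x|\le\delta\sqrt N$, and $g(x/\sqrt N)=g(0)+\tfrac{g'(0)}{\sqrt N}x+\tfrac{g''(0)}{2N}x^2+\mathcal O(N^{-3/2}|x|^3)$. Expanding $\exp\big(-\tfrac{V'''(0)}{6\sqrt N}x^3-\tfrac{V^{(4)}(0)}{24N}x^4-R_N(x)\big)$ to order $N^{-1}$, multiplying the three factors, extending the $x$‑integral back to $\mathbb R$ at exponentially small cost, and integrating term by term against $e^{-ax^2/2}$ — all odd monomials drop out, and the centred Gaussian of variance $1/a$ has $2k$‑th moment $(2k-1)!!\,a^{-k}$ — one obtains an expansion
\begin{equation*}
\int_{\mathbb R}g\,e^{-NV}\,d\varphi=\frac{e^{-NV(\varphi_{\text{min}})}}{\sqrt N}\sqrt{\tfrac{2\pi}{a}}\,\Big(g(0)+\tfrac{g''(0)}{2Na}+c_1(V)\,g'(0)+c_2(V)\,g(0)+\mathcal O(N^{-2})\Big),
\end{equation*}
in which $c_1(V),c_2(V)=\mathcal O(1/N)$ are explicit rational expressions in $a,V'''(0),V^{(4)}(0)$ coming from the $x^3,x^4,x^6$ moments. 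The denominator is the same expression with $g\equiv 1$, so forming the quotient the term $c_2(V)\,g(0)$ cancels against $c_2(V)$ in the denominator and one is left with the asserted expansion of $\mathbb E_{\zeta_N}(g)$ up to $\mathcal O(N^{-2})$. The variance formula follows by applying this expansion to $g^2$ and to $g$ and writing $\operatorname{Var}_{\zeta_N}(g)=\mathbb E_{\zeta_N}(g^2)-\mathbb E_{\zeta_N}(g)^2$: the $\mathcal O(N^{-1})$ terms in $g''(0)$ and in $g(0)g'(0)$ cancel and only $g'(0)^2/(Na)=g'(\varphi_{\text{min}})^2/(NV''(\varphi_{\text{min}}))$ survives.

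The step I expect to be the real work is the uniform error control: one needs $R_N(x)$ bounded by $CN^{-3/2}|x|^5$ uniformly over the \emph{growing} window $|x|\le\delta\sqrt N$, so that, weighted by $e^{-ax^2/2}$, it still integrates to $\mathcal O(N^{-3/2})$ rather than worse, together with the matching Gaussian‑tail estimate controlling the enlargement of the domain; and one must track carefully which monomials in $x$ survive the parity cancellation and contribute at order $N^{-1}$ as opposed to $N^{-2}$. Everything else is routine bookkeeping, and since a complete proof is already given in \cite[Theorem~$1.4.10$]{BBS18}, I would simply follow it, the role of the above sketch being to isolate where the hypotheses (finiteness of $\int e^{-V}$, compact sublevel set, uniqueness of the minimum, boundedness of $g$) are used.
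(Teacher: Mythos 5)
There is nothing in the paper to compare against: Lemma \ref{aoev} is imported verbatim from \cite[Theorem 1.4.10]{BBS18} and the paper gives no proof, so your standard Laplace/Watson-expansion sketch (exponential localization off a neighbourhood of $\varphi_{\text{min}}$ using compactness of the sublevel set and uniqueness of the minimum, the rescaling $\varphi=x/\sqrt N$, Taylor expansion of $V$ and $g$, Gaussian moments and parity, quotient of numerator by denominator, variance via $\mathbb E_{\zeta_N}(g^2)-\mathbb E_{\zeta_N}(g)^2$) is exactly the natural route, and in outline it is sound. Your derivation of \eqref{eq:varianceformula} is fine as stated: whatever the first-order coefficient multiplying $g'(\varphi_{\text{min}})$ in the expectation expansion is, it cancels in $\mathbb E_{\zeta_N}(g^2)-\mathbb E_{\zeta_N}(g)^2$, leaving $g'(\varphi_{\text{min}})^2/(NV''(\varphi_{\text{min}}))$, which is the only part of the lemma the paper uses quantitatively.

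The genuine weak point is precisely the step you wave through: you leave $c_1(V)$ uncomputed and assert that the quotient gives ``the asserted expansion''. If you carry out your own scheme (the only surviving cross term at order $N^{-1}$ is $\tfrac{g'(0)}{\sqrt N}x\cdot\bigl(-\tfrac{V'''(0)x^3}{6\sqrt N}\bigr)$, integrated against the Gaussian with fourth moment $3/V''(0)^2$), you get
\begin{equation*}
\mathbb E_{\zeta_N}(g)=g(\varphi_{\text{min}})+\frac{1}{N}\left(\frac{g''(\varphi_{\text{min}})}{2V''(\varphi_{\text{min}})}-\frac{V'''(\varphi_{\text{min}})\,g'(\varphi_{\text{min}})}{2V''(\varphi_{\text{min}})^2}\right)+\mathcal O(N^{-2}),
\end{equation*}
which does \emph{not} coincide with the printed term $+\tfrac{3V'''(\varphi_{\text{min}})g'(\varphi_{\text{min}})}{4NV''(\varphi_{\text{min}})^3}$ in \eqref{eq:expectationformula}; the printed term is not invariant under the rescaling $V\mapsto V(s\,\cdot)$, $g\mapsto g(s\,\cdot)$ (it picks up a factor $s^{-2}$ while $\mathbb E_{\zeta_N}(g)$ is unchanged), so it cannot be the correct universal coefficient and is presumably a misprint of the displayed formula. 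So either compute $c_1(V)$ explicitly and record the discrepancy, or your claim of matching the statement as written is false. A second, smaller caveat: expanding $\exp\bigl(-\tfrac{V'''(0)}{6\sqrt N}x^3-\dots\bigr)$ ``to order $N^{-1}$'' is not uniformly legitimate on the whole window $|x|\le\delta\sqrt N$ (the cubic phase is of size $\delta^3N$ at the edge); one needs a second localization, e.g.\ to $|x|\le N^{1/10}$, using $e^{-NV(x/\sqrt N)}\le e^{-cx^2}$ on the window to discard the rest — you flag this but do not resolve it.
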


\end{appendix}


\bibliographystyle{alpha}
\bibliography{bibliography}

\end{document}